\newcommand{\mathsym}[1]{{}}
\newlist{enumeratep}{enumerate}{10}
\setlist[enumeratep]{label=\quad\textit{\arabic*'.},ref=\arabic*',leftmargin=*}
\newenvironment{romanlist'}[0]
{\begin{list}{\makebox[0.5cm][l]{\textit{\roman{enumi}')}}}{\usecounter{enumi}}}
{\end{list}}
\newcommand{\savelabel}[2]{\expandafter\newtoks\csname#1\endcsname
  \global\csname#1\endcsname={#2} \label{#1} #2}
\newcommand{\loadlabel}[1]{\noindent {\bf Lemma~\ref{#1}. } \textit{\the\csname#1\endcsname}
\medskip

}
\renewcommand{\setminus}{-}
\newcommand{\loadlabelthm}[1]{\medskip\noindent {\bf Theorem~\ref{#1}. }
  \noindent  \textit{\the\csname#1\endcsname}
\medskip
}
\newcommand{\loadlabelprop}[1]{\noindent {\bf Proposition~\ref{#1}. }
  \noindent  \textit{\the\csname#1\endcsname}
\medskip

}
\newcommand{\eps}{\varepsilon}
\renewcommand{\subset}{\subseteq}
\newcommand{\atleast}[1]{{\ge n}}
\newcommand{\less}[1]{{<n}}
	\newcommand{\notacol}[2]{}
\newsavebox{\quoteitbox}
\hspace*{\fill}{\upshape(\usebox{\quoteitbox})}\end{quote}%
\newenvironment{quoteit*} 
{\begin{sloppypar}\noindent\slshape\begin{quote}\itshape} 
	{\end{quote}\ignorespaces\end{sloppypar}\noindent\ignorespacesafterend}
\newenvironment{quotetag*}
{~\par
	\begingroup                  
	\begin{equation*}
		 \begin{minipage}[c]{115mm}
			\it\noindent{\par}
}
{
		\end{minipage}
	\end{equation*}
	\endgroup                        
\par
\textnormal
\medskip
}
\newcommand\set[1]{\ensuremath{\{#1\}}}
\newtheoremstyle{theoremstyle}
  {3pt}
  {3pt}
  {\itshape}
  {0pt}
  {\bfseries}
  {}
  {4pt}
  {}
\theoremstyle{theoremstyle}
\newtheorem{theorem}{Theorem}[section]
\newtheorem*{theorem*}{Theorem}
\newtheorem{lemma}[theorem]{Lemma}
\newtheorem{corollary}[theorem]{Corollary}
\newtheorem{proposition}[theorem]{Proposition}
\newtheorem{claim}{Claim}
\newtheoremstyle{remarkstyle}
  {3pt}
  {10pt}
  {}
  {0pt}
  {\itshape}
  {}
  {4pt}
  {\thmname{#1}\thmnumber{ #2}\thmnote{ (#3)}.}
\theoremstyle{remarkstyle}
\newtheorem{example}{Example}[section]
\newtheoremstyle{definitionstyle}
  {3pt}
  {3pt}
  {}
  {0pt}
  {\itshape}
  {}
  {4pt}
  {\thmname{#1}\thmnumber{ #2}\thmnote{ (#3)}.}
\theoremstyle{definitionstyle}
\numberwithin{equation}{section}
\newlength{\wideaslength}
\renewcommand{\subset}{\subseteq}
\renewcommand{\models}{\vDash}
\newcommand{\nmodels}{\nvDash}
\newcommand{\seta}[1]{}
\def\lsim{\mathrel{\rlap{\lower4pt\hbox{\hskip1pt$\sim$}}
    \raise1pt\hbox{$<$}}}                
\definecolor{gray1}{rgb}{0.99,0.99,0.99}
\definecolor{gray2}{rgb}{0.97,0.97,0.97}
\definecolor{gray3}{rgb}{0.95,0.95,0.95}
\definecolor{gray4}{rgb}{0.93,0.93,0.93}
\definecolor{gray5}{rgb}{0.91,0.91,0.91}
\definecolor{gray6}{rgb}{0.89,0.89,0.89}
\definecolor{gray7}{rgb}{0.87,0.87,0.87}
\definecolor{gray8}{rgb}{0.85,0.85,0.85}
\definecolor{gray9}{rgb}{0.83,0.83,0.83}
\definecolor{gray10}{rgb}{0.81,0.81,0.81}
\definecolor{gray20}{rgb}{0.71,0.71,0.71}
\definecolor{gray40}{rgb}{0.51,0.51,0.51}
\newcommand{\implication}[2]{(\ref{#1}$\rightarrow$\ref{#2})}
\newcommand{\cont}{\mathfrak c}
\newcommand{\ultra}{\mathcal{U}}
\newcommand{\meas}{\mu}
\newcommand{\uae}{$\meas$-almost-everywhere\xspace}
\newcommand{\Nat}{\mathbb N}
\newcommand{\seq}[1]{\mathbf{#1}}
\newcommand{\ultraprod}{\prod_\ultra}
\newcommand{\cl}{\mathcal C}
\newcommand{\cls}{\mathcal C^\star}
\newcommand{\gr}{\mathcal G\mathit{raphs}}
\newcommand{\nab}{\mathop{\triangledown}}
\newcommand{\topnab}{\mathop{\tilde \triangledown}}
\def\cqedsymbol{\ifmmode$\lrcorner$\else{\unskip\nobreak\hfil
\penalty50\hskip1em\null\nobreak\hfil$\lrcorner$
\parfillskip=0pt\finalhyphendemerits=0\endgraf}\fi} 
\newcommand{\cqed}{\renewcommand{\qed}{\cqedsymbol}}
\title{On ultralimits of sparse graph classes%
\thanks{During the work on this project, Micha\l{} Pilipczuk has been a post-doc at Warsaw Centre of Mathematics and Computer Science, and has been supported by the Foundation for Polish Science via the START stipend programme.}}
\author{
  Micha\l{} Pilipczuk
  \thanks{
    Institute of Informatics, University of Warsaw, Poland, \texttt{michal.pilipczuk@mimuw.edu.pl}.
  }
  \and
  Szymon Toru\'{n}czyk\thanks{
    Institute of Informatics, University of Warsaw, Poland, \texttt{szymon.torunczyk@mimuw.edu.pl}.
  }
  }
\date{}
\begin{document}

\begin{titlepage}
\def\thepage{}
\thispagestyle{empty}
\maketitle

\begin{abstract}
The notion of nowhere denseness is one of the central concepts of the recently developed theory of sparse graphs. We study the properties of nowhere dense graph classes by investigating appropriate limit objects defined using the ultraproduct construction. It appears that different equivalent definitions of nowhere denseness, for example via quasi-wideness or the splitter game, correspond to natural notions for the limit objects that are conceptually simpler and allow for less technically involved reasonings.

\end{abstract}
\end{titlepage}

\newpage
\section{Introduction}\label{sec:intro}
The theory of sparse graphs concentrates on defining and investigating combinatorial measures of sparsity for graphs, as well as more complicated relational structures. Many such abstract notions of sparsity turn out to be natural concepts connected to fundamental problems in graph theory, logics, and algorithmics. We refer to the book of Ne\v{s}et\v{r}il and Ossona de Mendez~\cite{sparsity} for a broad introduction to the field. 

In this theory, the central role is played by the notion of {\em{nowhere denseness}}. Intuitively, a class of finite graphs $\cl$ is {\em{nowhere dense}} if for any graph $G\in \cl$, after performing any ``local'' contractions in $G$ one cannot obtain an arbitrarily large clique (see Section~\ref{sec:nowhere} for a formal definition). The fundamentality of this concept is confirmed by the fact that there are multiple properties of $\cl$, reflecting different combinatorial aspects of sparsity, that are equivalent to $\cl$ being nowhere dense. For example, nowhere denseness can be equivalently defined using the asymptotics of the behavior of edge density in graphs from $\cl$ under ``local'' contractions; using various graph parameters connected to coloring and ordering; or by studying how large sets of pairwise distant vertices can be found in the graphs from $\cl$. For this work the most important is the notion of {\em{quasi-wideness}}. Intuitively, $\cl$ is quasi-wide if in any large enough graph from $\cl$ one can delete a small number of vertices so that the remaining graph contains many vertices that are far apart from each other. It is known~\cite{sparsity} that quasi-wideness is equivalent to nowhere denseness for hereditary graph classes $\cl$.

The reader might have observed that in the previous paragraph we were purposely very vague when discussing nowhere denseness and quasi-wideness. There is a good reason for this: the formal definitions of these notions require a precise formalization of intuitive concepts like ``local contractions'', ``arbitrarily large clique'', or ``large enough graph''. This applies to many notions from the theory of sparse graphs; it is usual that a formal definition of a notion involves introducing several parameters that are related to each other via a confusing sequence of quantifications. For instance, the formal definition of quasi-wideness, which can be found in Section~\ref{sec:quasi-def}, involves $4$ parameters and $6$ alternating quantifiers. For this reason, the theory of sparse graphs is known for its complicated technical layer, which often obfuscates otherwise natural reasonings.

In this work we investigate the notion of nowhere denseness using the ultraproduct construction. Our main motivation is the general meta-approach proposed by Tao in a post on his blog~\cite{terryblog}, which can be summarized as follows: In order to study a class of discrete objects, define a corresponding class of limit objects that inherits the properties of the original class. Then a reasoning for the discrete class can be translated to a reasoning for the limit class, which often turns out to be more natural and avoid a number of technicalities that were relevant in the discrete case. For discrete structures where logics play an important role, Tao proposes the usage of ultraproducts as the method of constructing limit objects. The properties of discrete and limit classes can be then translated to each other by the means of Łoś's theorem.

\paragraph*{Our contribution.} Given a class $\cl$ of finite graphs, we define the {\em{class limit}} $\cls$ as follows: $\cls$ comprises all the subgraphs of ultraproducts of sequences of graphs from $\cl$. Thus, $\cls$ contains $\cl$ as a subclass, but it also contains infinite graphs that somehow model the behaviour of increasing sequences of graphs from $\cl$. Given this definition, it is natural to ask what properties of $\cls$ correspond to the assumption that $\cl$ is nowhere dense or quasi-wide. More precisely, we investigate three definitions of nowhere denseness that are known to be equivalent: (1) the classic definition, both using shallow minors and shallow topological minors, (2) quasi-wideness, (3) the definition via the splitter game due to Grohe et al.~\cite{GroheKS14}. We show that these notions have natural limit variants that are conceptually simpler, and (with some technical caveats) can be proved to be equivalent to the standard ones using Łoś's theorem. For instance, limit quasi-wideness can be defined as follows: for every natural number $d$ and every infinite graph $G\in \cls$, one can delete a finite number of vertices from $G$ so that the remaining graph contains an infinite set of vertices that are pairwise at distance at least $d$ from each other. Note that now this definition is fully formal, and vague intuitions of small/large from the discrete case are replaced by simply finite/infinite. 

Then, we investigate the known proofs of the equivalence between the studied notions. We show that these proofs can be conveniently translated to the limit case, and again conceptual simplification occurs. For the most difficult of the studied implications, from nowhere denseness to uniform quasi-wideness, the proof for the discrete variant involves multiple usage of Ramsey's theorem, which results in the need of tracking a number of mutually related parameters throughout the reasoning. In the limit setting, however, we can use the infinite version of Ramsey's theorem. Instead of setting precise bounds on the sizes of considered objects, we can just reason whether they are finite or infinite. In some sense, this corresponds to formalizing the intuition of the proof of the discrete case by moving to the limit setting and replacing small/large with finite/infinite. 

Figure~\ref{fig:diagram} depicts all the implications proved in this paper. Note that the net of implications shows that all the depicted notions are equivalent. Observe also that, apart from the implication from quasi-wideness to nowhere denseness, we do not use any non-trivial implication between the notions in the discrete setting. Thus, we give an alternative proof of the equivalence of the studied notions of sparsity: we translate the discrete notions to the limit variants using Łoś's theorem, and then prove the equivalence in the cleaner limit setting. The fact that we use the known implication from quasi-wideness to nowhere denseness to complete our picture is due to technical complications when linking standard and limit variants of quasi-wideness using Łoś's theorem.

\paragraph*{Outline.} In Section~\ref{sec:prelims} we set up the notation, recall basic definitions and facts about ultraproducts and sparse graphs, and define key notions. In Section~\ref{sec:nowhere} we introduce the limit variants of nowhere denseness, both using shallow minors and shallow topological minors. We prove that these notions are equivalent to the standard ones and to each other. In particular, this gives an alternative proof of the equivalence of the standard definitions of nowhere denseness using shallow minors and shallow topological minors. In Section~\ref{sec:quasi} we introduce the limit variants of quasi-wideness, and prove that they are equivalent to the standard definition and to limit nowhere denseness. Section~\ref{sec:game} is devoted to the study of the limit variant of the splitter game of Grohe et al.~\cite{GroheKS14}, again with all the appropriate equivalences proved. In Section~\ref{sec:conc} we gather all the findings in one theorem and conclude with some discusion and prospects of future work.

\begin{figure}[t]
                \centering
                \def\svgwidth{\columnwidth}
                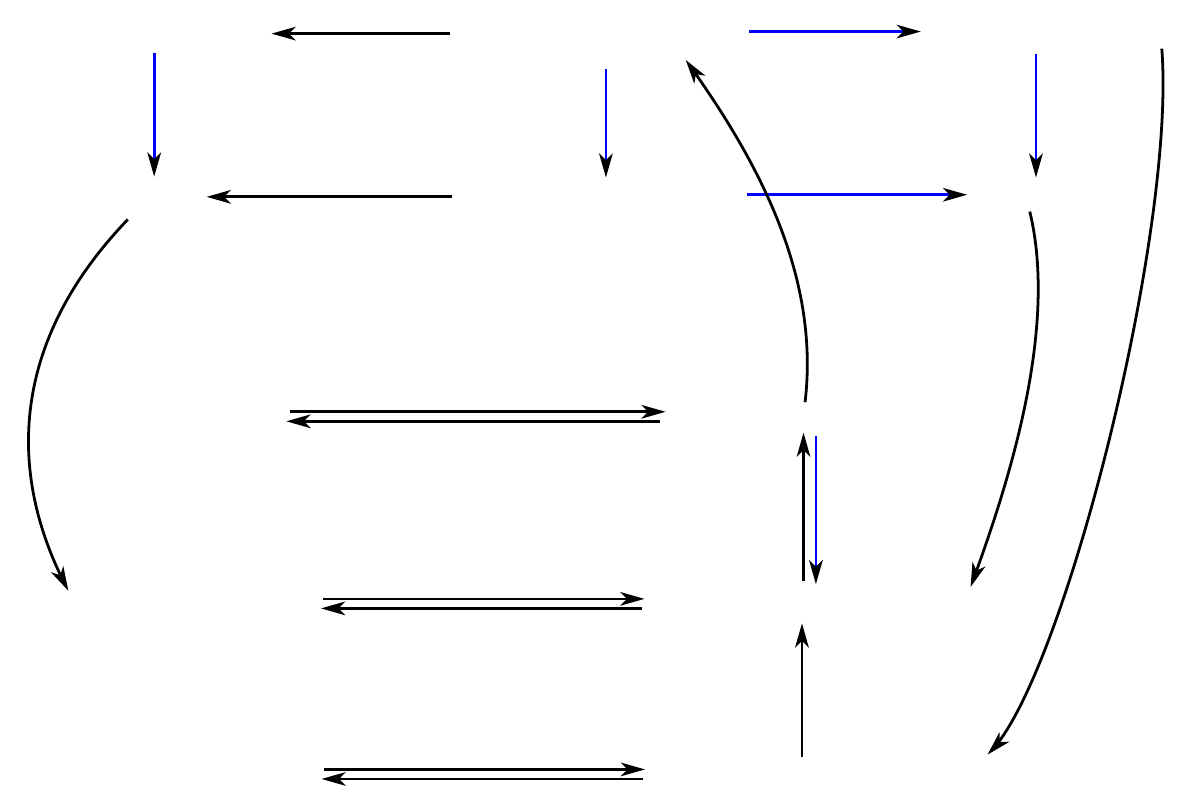
\caption{The net of implications proved in this paper. Blue arrows stand for trivial implications, QW stands for quasi-wide. The implication of Lemma~\ref{lem:finite-qw} is the only one that requires the graph class $\cl$ to be hereditary.}\label{fig:diagram}
\end{figure}

\section{Preliminaries}\label{sec:prelims}
\paragraph*{Notation.} For a set $X$, by $2^X$ we denote the family of all the subsets of $X$. By $\Nat=\{0,1,2,\ldots\}$ we denote the set of natural numbers.
 As usual, $\aleph_0$ is the cardinality of $\Nat$ and $\cont$ is the cardinality of $2^{\Nat}$. For notational convenience, we sometimes use the ordinal $\omega$ instead of the cardinal $\aleph_0$ (e.g. in subscripts), for example, $K_\omega$ denotes the countably infinite clique.
 
 Whenever $\sim$ is some equivalence relation, then $[e]_\sim$ denotes the equivalence class of $e$ w.r.t. $\sim$. We usually use boldface small latin letters to denote sequences indexed by natural numbers. We follow the convention that if $\seq g$ is such a sequence, then $\seq g = (\seq g_n)_{n\in \Nat}$, i.e., $\seq g_n$ is the $n$-th term of $\seq g$.

The graphs considered in this paper are simple, undirected and possibly infinite, unless explicitly stated. For a graph $G$, by $V(G)$ and $E(G)$ we denote its vertex and edge set, respectively. By $N_G(u)=\set{v\colon uv\in E(G)}$ and $N_G[u]=N_G(u)\cup \{u\}$ we denote the open and closed neighborhoods of $u$, respectively. For $d\in \Nat$, $N^d_G[u]$ denotes the {\em{ball od radius $d$ around $u$}}, i.e., the set of vertices of $G$ that are at distance at most $d$ from $u$. We omit the subscript whenever the graph is clear from the context.

\paragraph*{Ultrafilters.}
A family $\ultra\subseteq 2^{\Nat}$ is called an {\em{ultrafilter}} if the following conditions hold:
\begin{enumerate}
\item If $X\in \ultra$ and $X\subseteq Y$, then also $Y\in \ultra$;
\item If $X,Y\in \ultra$, then also $X\cap Y\in \ultra$;
\item For every $X\subseteq \Nat$, exactly one of the sets $X$ and $\Nat\setminus X$ belongs to $\ultra$.
\end{enumerate}
Given an ultrafilter $\ultra$, we may define the corresponding measure $\meas_\ultra\colon 2^\Nat\to \set{0,1}$ by setting $\meas_\ultra(X)$ to be equal to $1$ if $X\in \ultra$, and to $0$ otherwise. From the properties of an ultrafilter it easily follows that $\meas_\ultra$ is a finitely-additive $\set{0,1}$-measure on $\Nat$.

We say that an ultrafilter $\ultra$ is {\em{non-principal}} if $\{x\}\notin \ultra$ for any $x\in \Nat$; equivalently, $\ultra$ does not contain any finite set. The existence of a non-principal ultrafilter on $\Nat$ easily follows from Zorn's lemma. For the entire paper, let us fix some non-principal ultrafilter $\ultra\subseteq 2^{\Nat}$ and the corresponding measure $\meas=\meas_\ultra$. If $\phi(n)$ is a first order formula with a free variable $n$ ranging over $\Nat$, then we say that $\phi$ holds \emph{\uae} if $\meas(\set{n:\phi(n)})=1$.


\paragraph*{Relational structures.}
\newcommand{\str}[1]{\mathbb #1}
Fix a relational language $\Sigma$, consisting of \emph{relation symbols}, each with an assigned arity – a positive natural number. A \emph{relational structure} $\str A$ over the language $\Sigma$ consists of a universe $A$, together with an interpretation mapping which assigns to each symbol $R\in \Sigma$ its \emph{interpretation} in $\str A$, which is a relation $R_\str A\subset  A^r$, where $r$ is the arity of $R$.

A graph $G$ can be treated as a relational structure $\str G$ over the language consisting of one relation symbol $E$ of arity $2$, whose universe is  $V(G)$, and where $E_\str G=E(G)$. In the following, we will treat graphs as relational structures as described above, whenever convenient.

\paragraph*{Ultraproducts.}
For a sequence of sets $(A_n)_{n\in\Nat}$, consider the equivalence relation $\sim_\ultra$ on the product $\prod_{n\in \Nat} A_n$ defined as follows: if $\seq a,\seq b\in \prod_{n\in \Nat} A_n$, then $\seq a\sim_\ultra \seq b$ iff $\seq a_n=\seq b_n$ holds \uae. Then we define the {\em{ultraproduct}} $\prod_{\ultra} A_n$ as the quotient set $\prod_{n\in \Nat} A_n/\sim_\ultra$.

Suppose now that we are given a sequence of relational structures $(\str A_n)_{n\in\Nat}$ over the same relational language $\Sigma$. We now define a new relational structure over $\Sigma$, called the {\em{ultraproduct}} of $(\str A_n)_{n\in\Nat}$, and denoted $\ultraprod \str A_n$.
Its universe is equal to $\prod_{\ultra} A_n$. 
The interpretation $R_{\ultraprod \str A_n}$ of a symbol $R\in\Sigma$ of arity $r$ is  defined as follows.
Let $\seq v^1,\ldots,\seq v^r$ be representatives of some classes of abstraction w.r.t. $\sim_{\ultra}$, then we put
$$([\seq v^1]_{\sim_\ultra},\ldots,[\seq v^r]_{\sim_\ultra})\in R_{\ultraprod \str A_n}\quad\iff \quad  (\seq v^1_n,\ldots, \seq v^r_n)\in R_{\str A_n}\textrm{ holds \uae}.$$
Note that this definition does not depend on the choice of representatives $\seq v^1,\ldots, \seq v^r$.

Most of the time, we will apply the ultraproduct construction to a sequence $(G_n)_{n\in \Nat}$ of graphs, treated as relational structures. The resulting structure is a relational structure over the 
relational language consisting of one binary relational symbol $E$, and it is not hard to see (and follows from Theorem~\ref{thm:los} below) that it is in fact a simple graph, i.e., the interpretation of the relation $E$ is symmetric and anti-reflexive. We denote this graph by $\ultraprod G_n$.


\begin{example}
  Fix a finite graph $G$.
  If $G_n=G$ for all $n\in \Nat$, then $\ultraprod G_n$ is isomorphic to $G$.
  Indeed, every sequence $\seq g\in \prod_{n\in \Nat} V(G_n)$ is $\sim_\ultra$-equivalent to some constant sequence
  of vertices of $G$, because by the properties of an ultrafilter there is exactly one vertex $u\in V(G)$ such that $\meas(\set{n\colon \seq g_n=u})=1$.
\end{example}

\begin{example}\label{ex:cliques}
  Let $(K_{n+1})_{n\in \Nat}$ be the sequence of cliques of increasing sizes.
  Then we claim that $\ultraprod K_{n+1}$ is a clique of cardinality $\cont$. It is immediate that $\ultraprod K_{n+1}$ is a clique and has size at most $\cont$, because the cardinality of the entire product $\prod_{n\in \Nat} V(K_{n+1})$ is $\cont$. It remains to prove that $\sim_\ultra$ has at least $\cont$ equivalence classes on $\prod_{n\in \Nat} V(K_{n+1})$.

For every pair of integers $k,\ell$ with $2^\ell\le k <2^{\ell+1}$,
  choose an arbitrary injective function $\phi^\ell_k:\set{0,1}^\ell\to V(K_k)$,
   and extend $\phi^\ell_k$ to $\set{0,1}^\Nat$
  by ignoring all but the first $\ell$ positions of the sequence.
   For a sequence $\seq x\in \set{0,1}^\Nat$,
   construct a sequence $\seq g^{\seq x}$ as follows: for every $n\in \Nat$, select the unique integer $m$ such that $2^m\le n+1< 2^{m+1}$, and put
   $\seq g^{\seq x}_n=\phi^m_{n+1}(\seq x)$.
   Then, for every $n\in\Nat$, we have that $\seq g^{\seq x}_n=\seq g^{\seq y}_n$ if and only if $\seq x_m=\seq y_m$
   for all $m$ with $2^m\le n+1$. It follows that for $\seq x\neq \seq y$, the sequences $\seq g^{\seq x}$ and $\seq g^{\seq y}$ agree only on a finite number of positions, and hence $\seq g^{\seq x}\sim_\ultra\seq g^{\seq y}$
   iff $\seq x=\seq y$. Therefore, $\sim_\ultra$ indeed has at least  $\cont$ different equivalence classes.
\end{example}


We now recall the main tool for transferring results between sequences of relational structures and their ultraproducts. Intuitively, it says that the definition of interpretation of relational symbols in the ultraproduct implies a similar behavior for any fixed first-order formula. 

\begin{theorem}[Łoś's theorem\footnote{Jerzy Łoś was a Polish mathematician, whose surname is pronounced roughly as ``wosh''.}]\label{thm:los}
Let $(\str A_n)_{n\in\Nat}$ be a sequence of relational structures over the same language $\Sigma$, and let $\str A^\star=\ultraprod \str A_n$. Consider a first-order formula $\phi$ in language $\Sigma$, with free variables 
    $x^1,x^2,\ldots,x^k$. Then,
    for any $\seq x^1,\ldots,\seq x^k$ in the universe of $\str  A^\star$,
     $$\str A^\star,\seq x^1,\ldots,\seq x^k\models \phi\quad\iff \quad 
     \str A_n,\seq x^1_n,\ldots,\seq x^k_n\models \phi\text{ holds \uae}.$$
\end{theorem}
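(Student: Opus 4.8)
The plan is to prove Łoś's theorem by induction on the structure of the formula $\phi$. It is cleanest to phrase the induction with parameters: we prove, for every formula $\phi$ with free variables among $x^1,\ldots,x^k$ and every choice of $\seq x^1,\ldots,\seq x^k$ in the universe of $\str A^\star$, that $\str A^\star,\seq x^1,\ldots,\seq x^k\models\phi$ iff $\str A_n,\seq x^1_n,\ldots,\seq x^k_n\models\phi$ holds \uae. Descending into a subformula we simply keep all parameters fixed (adding one more when we strip a quantifier), so the inductive hypothesis always applies.

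For the base case, consider an atomic formula. If $\phi$ is $R(x^{i_1},\ldots,x^{i_r})$, the desired equivalence is, verbatim, the definition of the interpretation $R_{\ultraprod \str A_n}$ given above, so there is nothing to prove. If $\phi$ is $x^i=x^j$, the equivalence is exactly the definition of the relation $\sim_\ultra$: the classes $[\seq x^i]_{\sim_\ultra}$ and $[\seq x^j]_{\sim_\ultra}$ coincide iff $\seq x^i_n=\seq x^j_n$ holds \uae. For the Boolean connectives, write $Z_\psi=\set{n:\str A_n,\seq x^1_n,\ldots,\seq x^k_n\models\psi}$. For negation we use property (3) of an ultrafilter: $\meas(Z_\psi)=0$ iff $\meas(\Nat\setminus Z_\psi)=\meas(Z_{\neg\psi})=1$, which combined with the inductive hypothesis gives the claim for $\neg\psi$. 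For conjunction, $Z_{\psi_1\land\psi_2}=Z_{\psi_1}\cap Z_{\psi_2}$, and by properties (1) and (2) this set has measure $1$ iff both $Z_{\psi_1}$ and $Z_{\psi_2}$ do (property (2) for the ``only such that'' direction after intersecting, property (1)/upward closure for recovering each factor); the inductive hypothesis then finishes the case. Disjunction and implication are handled by noting they are logically equivalent to combinations of $\neg$ and $\land$, or by an identical direct computation.

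For the quantifier step it suffices to treat $\exists$, since $\forall y\,\psi$ is equivalent to $\neg\exists y\,\neg\psi$ and the negation case is already available. Suppose $\phi=\exists y\,\psi$. For the forward direction, if $\str A^\star\models\exists y\,\psi$ then pick a witness $\seq b\in\prod_\ultra A_n$; the inductive hypothesis applied to $\psi$ with the extra parameter $\seq b$ gives that $\str A_n\models\psi(\ldots,\seq b_n)$ holds \uae, and enlarging this measure-$1$ set we get that $\str A_n\models\exists y\,\psi$ holds \uae. For the converse, let $X=\set{n:\str A_n\models\exists y\,\psi}$ and assume $\meas(X)=1$. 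Invoking the axiom of choice, pick for each $n\in X$ an element $b_n\in A_n$ witnessing the existential statement in $\str A_n$, and an arbitrary $b_n\in A_n$ for $n\notin X$; let $\seq b$ be the $\sim_\ultra$-class of $(b_n)_{n\in\Nat}$. Then $\set{n:\str A_n\models\psi(\ldots,b_n)}\supseteq X$ has measure $1$, so by the inductive hypothesis $\str A^\star\models\psi(\ldots,\seq b)$, and hence $\str A^\star\models\exists y\,\psi$.

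I expect the quantifier case, and specifically the backward direction for $\exists$, to be the only genuinely subtle point: it is the one place where the axiom of choice is needed, to assemble a single element of the ultraproduct out of the $\Nat$-indexed family of local witnesses. Everything else is a direct unwinding of the definitions of $\sim_\ultra$ and of the interpretations, together with the three defining properties of an ultrafilter; the parameter-carrying formulation of the induction is what keeps the quantifier bookkeeping honest.
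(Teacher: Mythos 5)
Your proof is correct: it is the standard proof of Łoś's theorem by induction on the structure of $\phi$, with the atomic cases reduced to the definitions of $\sim_\ultra$ and of the interpretation of relation symbols, the Boolean cases handled by the three ultrafilter axioms, and the axiom of choice invoked exactly where it is needed, namely to assemble a single witness in the ultraproduct in the backward direction of the existential step. The paper recalls this theorem as a classical result and gives no proof of its own, so there is nothing to compare against; your argument is the canonical one.
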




\begin{example}\label{ex:cycles}
    Consider the ultraproduct $C^\star = \ultraprod C_n$, where $C_n$ is the cycle of length $n$ (for $n<3$ put any graphs instead, as they do not influence the resulting ultraproduct).
    We will show that $C^\star$ is a disjoint union bi-infinite paths. 
    The same argument as in Example~\ref{ex:cliques} shows that the cardinality of $V(C^\star)$ is $\cont$.

    Consider the first-order sentence $\phi$ which says: ``every vertex has exactly two neighbors''. This sentence holds in every $C_n$. Therefore, by Łoś's theorem, it also holds in $\ultraprod C_n$. Now consider the first-order sentence $\phi_m$ (for $m\ge 3$) which says: ``there is no simple cycle of length $m$''. The s sentence $\phi_m$ holds in every $C_n$ with 
    $n>m$. In particular, for fixed $m$ and variable $n$, $C_n\models \phi_m$ holds \uae.
    Therefore, from Łoś's theorem we infer that $C^\star\models \phi_m$, for every $m\ge 3$.
    Hence, $C^\star$ is $2$-regular but has no simple cycle, so it must be a disjoint union of 
    (uncountably many) bi-infinite paths.    
\end{example}

\paragraph*{Class limits.}
We use symbols $\cl, \mathcal D,$ etc. to denote classes of finite graphs. We say that $\cl$ is {\em{hereditary}} if it is closed under taking induced subgraphs.
By $\gr$ we denote the class of all finite graphs.
For a class $\cl$ of finite graphs, we define the {\em{class limit}} $\cls$ to be the class comprising all subgraphs of ultraproducts of sequences of graphs from $\cl$.
For instance, $\gr^\star$ is the class of all graphs with at most $\cont$ vertices.

\paragraph*{Shallow topological minors.}
We say that $H$ is a \emph{topological shallow minor of $G$ at depth $d$}
if there exists a mapping $\widetilde{\alpha}$, called the {\em{$d$-shallow topological minor model}}, which maps vertices of $H$ to distinct vertices of $G$, and edges of $H$ to paths in $G$, such that the following conditions hold:
\begin{itemize}
\item for each edge $uv\in E(H)$, $\widetilde{\alpha}(uv)$ is a path of length at most $2d+1$ with endpoints $\widetilde{\alpha}(u)$ and $\widetilde{\alpha}(v)$;
\item paths $\{\widetilde{\alpha}(e)\}_{e\in E(H)}$ are pairwise vertex-disjoint, apart from possibly sharing endpoints.
\end{itemize} 

By $\cl \topnab d $ we denote the class of all graphs $H$ which are topological shallow minors at depth $d$ of some graph in $\cl$, and for a single graph $G$ we denote $G\topnab d=\set{G}\nab d$. In particular, $\cl\topnab 0$ is the class of subgraphs of graphs from $\cl$.
We denote $\cl\topnab\omega=\bigcup_{d\in\Nat}\cl\topnab d$.

\paragraph*{Shallow  minors.}


Recall that the radius of a connected graph $G$ is the minimum integer $d$ for which there exists $w\in V(G)$ such that each vertex of $G$ is at distance at most $d$ from $w$.

Let $G,H$ be two (possibly infinite) graphs, and let $d\in\Nat$.
We say that $H$ is a \emph{shallow minor of $G$ at depth $d$}
if there exists a mapping $\alpha$ from vertices of $H$ to subgraphs of $G$, called the {\em{$d$-shallow minor model}}, that satisfies the following properties:
\begin{itemize}
\item each $\alpha(v)$ is a connected subgraph of $G$ of radius at most $d$;
\item subgraphs $\alpha(u)$ and $\alpha(v)$ are disjoint for all distinct $u,v\in V(H)$;
\item for each $uv\in E(H)$, there is an edge in $G$ that connects a vertex of $\alpha(u)$ with a vertex of $\alpha(v)$.
\end{itemize}
The subgraphs ${\alpha(u)}$  of $G$, where ${u\in V(H)}$, are called the {\em{branch sets}} of $\alpha$. Since each branch set is connected and has radius at most $d$, for every $u\in V(H)$ we can select a {\em{center}} $\gamma(u)\in V(\alpha(u))$ such that each vertex of $\alpha(u)$ is at distance at most $d$ from $\gamma(u)$ in $\alpha(u)$.

By $\cl \nab d $ we denote the class of all graphs $H$ which are shallow minors at depth $d$ of some graph in $\cl$. Similarly as before, $\cl\nab\omega=\bigcup_{d\in\Nat}\cl\nab d$ and $G\nab d=\set{G}\nab d$ for a single graph $G$. It is easy to see that if $H$ is a topological shallow minor of $G$ at depth $d$, then $H$ is also a shallow minor of $G$ at depth $d$, and hence $\cl\topnab d\subseteq \cl\nab d$.

The following simple result follows directly from the definition of a shallow minor by a simple calculation of distances.

\begin{lemma}[cf. Proposition 4.1 of~\cite{sparsity}]\label{lem:minor-of-my-minor}
If $G$ is a graph and $r,s\in \Nat$, then $(G\nab r)\nab s\subseteq G\nab (2rs+r+s)$.
\end{lemma}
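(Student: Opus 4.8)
The plan is to take a graph $K \in (G\nab r)\nab s$ and compose the two shallow minor models into a single shallow minor model of $K$ in $G$, then bound its depth. So let $H\in G\nab r$ witness $K\in H\nab s$, fix an $r$-shallow minor model $\alpha$ of $H$ in $G$ (with branch sets $\alpha(u)\subseteq G$ and centers $\gamma(u)$), and fix an $s$-shallow minor model $\beta$ of $K$ in $H$ (with branch sets $\beta(w)\subseteq H$ and centers $\delta(w)$). For each vertex $w\in V(K)$, I would define the candidate branch set $\mu(w)\subseteq G$ to be the union, over all $u\in V(\beta(w))$, of the branch set $\alpha(u)$, together with, for each edge $uu'\in E(\beta(w))$, one edge of $G$ joining $\alpha(u)$ to $\alpha(u')$ (such an edge exists since $uu'\in E(H)$). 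The three conditions for $\mu$ to be an $(2rs+r+s)$-shallow minor model of $K$ in $G$ then need to be checked.

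Disjointness is immediate: distinct $w,w'$ have disjoint $\beta$-branch sets in $H$, hence the corresponding families of $\alpha$-branch sets are disjoint in $G$, and the connecting edges stay within the respective unions. Preservation of edges is also easy: if $ww'\in E(K)$ then there is an edge of $H$ joining $\beta(w)$ to $\beta(w')$, say joining $u\in V(\beta(w))$ to $u'\in V(\beta(w'))$; since $uu'\in E(H)$ there is an edge of $G$ joining $\alpha(u)\subseteq\mu(w)$ to $\alpha(u')\subseteq\mu(w')$, as required. The substantive point is connectivity and the radius bound for $\mu(w)$, and this is where the arithmetic $2rs+r+s$ comes from.

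For the radius bound I would take the natural center $\nu(w) := \gamma(\delta(w))$, the center in $G$ of the branch set $\alpha(\delta(w))$, and bound the distance in $\mu(w)$ from an arbitrary vertex $x\in\mu(w)$ to $\nu(w)$. Such $x$ lies in $\alpha(u)$ for some $u\in V(\beta(w))$. Inside $\alpha(u)$ we can reach $\gamma(u)$ in at most $r$ steps. Now walk along a path in $\beta(w)$ from $u$ to $\delta(w)$; this path has length at most $2s$ (since $\beta(w)$ has radius at most $s$), and passes through vertices $u = u_0, u_1, \dots, u_\ell = \delta(w)$ with $\ell \le 2s$. Crossing from $\alpha(u_i)$ to $\alpha(u_{i+1})$ costs one connecting edge plus, inside $\alpha(u_{i+1})$, at most $2r$ further steps to travel from the endpoint of the connecting edge to $\gamma(u_{i+1})$ (diameter of a radius-$r$ graph is at most $2r$); only at the very end, inside $\alpha(\delta(w))$, do we go to the true center $\gamma(\delta(w))$, at cost at most $r$. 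Summing: at most $r$ (to reach $\gamma(u_0)$) plus $\ell$ connecting edges plus at most $2r$ per intermediate branch set, i.e. roughly $r + \ell(2r+1) \le r + 2s(2r+1) = 4rs + 2s + r$ — which is larger than the claimed bound, so the estimate must be tightened: one should route through the $\alpha(u_i)$ more economically, e.g. from the connecting-edge endpoint in $\alpha(u_i)$ go to $\gamma(u_i)$ ($\le r$ steps) then use the next connecting edge, incurring $r+1$ per step of the $\beta(w)$-path rather than $2r+1$, giving $\le r + 2s(r+1) = 2rs + 2s + r$, still slightly off. The cleanest bookkeeping, which I expect is what the authors intend, is to bound the distance in $H$ from $u$ to $\delta(w)$ by $s$ using that $\delta(w)$ is a center (not $2s$), and to charge $r$ steps inside each of the $\le s+1$ branch sets traversed plus the $\le s$ connecting edges, i.e. $\le (s+1)r + s \cdot 1 + \text{(correction)}$; the main obstacle in writing this up is precisely getting the per-step cost and the number of steps to multiply out to exactly $2rs+r+s$ rather than something like $2rs+2s+r$, which likely requires observing that one of the two endpoint traversals can be absorbed. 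Connectivity of $\mu(w)$ follows from the same walk, since every vertex of $\mu(w)$ has been shown to be joined within $\mu(w)$ to $\nu(w)$.
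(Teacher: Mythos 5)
The paper does not actually prove this lemma --- it is quoted from Proposition~4.1 of the sparsity book with the remark that it ``follows by a simple calculation of distances'' --- so I am judging your proposal on its own. Your construction (compose the two models: $\mu(w)=\bigcup_{u\in V(\beta(w))}\alpha(u)$ plus one connecting edge per edge of $\beta(w)$, with center $\nu(w)=\gamma(\delta(w))$) is the right and standard one, and your treatment of disjointness, edge preservation, and connectivity is fine. The gap is that the radius estimate, which is the entire content of the lemma, is never actually established: your first count gives $4rs+2s+r$, your second gives $2rs+2s+r$ and moreover rests on an invalid shortcut (from $\gamma(u_i)$ you cannot step directly onto the next connecting edge --- its endpoint in $\alpha(u_i)$ may be up to $r$ away from the center, so an intermediate branch set genuinely costs up to $2r$, not $r$), and you end by conceding you do not see how to reach $2rs+r+s$.

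The fix is to combine the two corrections you considered separately rather than alternately. Since $\delta(w)$ is a \emph{center} of $\beta(w)$, the path $u=u_0,u_1,\dots,u_\ell=\delta(w)$ in $\beta(w)$ has length $\ell\le s$, not $2s$. Now walk from $x\in\alpha(u_0)$ to $\nu(w)$: in each of $\alpha(u_0),\dots,\alpha(u_{\ell-1})$ you go from the entry vertex to the exit vertex (the endpoint of the next connecting edge), which costs at most $2r$ by routing through $\gamma(u_i)$; then you pay $1$ for the connecting edge; and in the final branch set $\alpha(u_\ell)=\alpha(\delta(w))$ you go from the entry vertex to the center, costing at most $r$. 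The total is at most $\ell(2r+1)+r\le s(2r+1)+r=2rs+r+s$, exactly the claimed bound (the case $\ell=0$ gives $r$, which is also fine). So the ``absorption'' you suspected is simply that only the last branch set is traversed entry-to-center ($r$) rather than entry-to-exit ($2r$), while the number of steps is governed by the radius $s$ of $\beta(w)$, not its diameter.
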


We say that a $d$-shallow minor model $\tau$ of a graph $H$ in a graph $G$ is a \emph{tree} $d$-shallow minor model, if $\tau(u)$ is a tree, for each vertex $v$ of $H$.

\begin{lemma}\label{lem:branch-trees}
  If $H$ is a shallow minor of $G$ at depth $d$, then there is a tree $d$-shallow minor model $\tau$ of $H$ in $G$ whose branch sets are trees with at most $1+\Delta\cdot ({d-1})$ vertices, where $\Delta$ is the maximal degree of a vertex in $H$.
\end{lemma}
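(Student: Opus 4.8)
\ The plan is to replace each branch set of a given model by a breadth‑first‑search tree and then prune it. Fix an arbitrary $d$-shallow minor model $\alpha$ of $H$ in $G$ together with its centers $\gamma(u)\in V(\alpha(u))$. For every $u\in V(H)$ let $T_u$ be a BFS spanning tree of $\alpha(u)$ rooted at $\gamma(u)$; since $\alpha(u)$ has radius at most $d$ around $\gamma(u)$, every vertex of $T_u$ lies at depth at most $d$. For every edge $uv\in E(H)$ fix once and for all a single \emph{connector edge} $a_{uv}b_{uv}\in E(G)$ with $a_{uv}\in V(\alpha(u))$ and $b_{uv}\in V(\alpha(v))$, which exists by the last of the three conditions defining a shallow minor model. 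For each $u$, let $D_u\subseteq T_u$ be the union of the $T_u$-paths from $\gamma(u)$ to $a_{uv}$, taken over all neighbors $v$ of $u$ in $H$; this is a subtree of $T_u$ containing $\gamma(u)$, and rooted at $\gamma(u)$ it has at most $\deg_H(u)\le\Delta$ leaves, each at depth at most $d$.

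A first approximation to $\tau$ is $\tau(u):=D_u$. One checks immediately that this is already a tree $d$-shallow minor model: every $D_u$ is a tree; the branch sets are pairwise disjoint because $D_u\subseteq V(\alpha(u))$ and the $\alpha(u)$ are pairwise disjoint; $D_u$ is connected and contains $\gamma(u)$, and for every $w\in D_u$ the $T_u$-path from $\gamma(u)$ to $w$ lies entirely in $D_u$, so $w$ is within distance $d$ of $\gamma(u)$ inside $D_u$; and for $uv\in E(H)$ we have $a_{uv}\in D_u$, $b_{uv}\in D_v$, and $a_{uv}b_{uv}\in E(G)$. Since a rooted tree with at most $\Delta$ leaves all at depth at most $d$ has at most $1+\Delta d$ vertices, this already yields branch sets of bounded size.

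To reach the sharper bound $1+\Delta\cdot(d-1)$ one must avoid paying for the last edge of each root‑to‑connector path. The idea is to cut $D_u$ at depth $d-1$: if a connector $a_{uv}$ lies at depth exactly $d$ in $T_u$, move it out of $\tau(u)$ and attach it (together with $b_{uv}$) to the branch set on the other side, re‑routing so that the edge witnessing $uv\in E(H)$ now runs from the depth‑$(d-1)$ ancestor of $a_{uv}$ in $\tau(u)$ to $a_{uv}$ in the enlarged $\tau(v)$. Equivalently, one regards the path $\gamma(u)\to a_{uv}\to b_{uv}\to\gamma(v)$ of length at most $2d+1$ obtained from the two BFS paths and the connector edge, and splits it into a prefix assigned to $\tau(u)$ and a suffix assigned to $\tau(v)$, with the required edge between the two branch sets provided at the cut. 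Each branch set then becomes a tree with a center and at most $\Delta$ legs, each of length at most $d-1$, hence with at most $1+\Delta\cdot(d-1)$ vertices; isolated vertices of $H$ are handled by $\tau(u):=\{\gamma(u)\}$.

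The main obstacle is carrying out this reassignment consistently and globally. When the two BFS paths realizing an edge are both long, neither endpoint can keep the deep part, so one has to decide carefully, for each edge of $H$, which of its two branch sets absorbs which portion of the connecting path, and then verify — simultaneously for all edges — that the branch sets stay pairwise disjoint, stay trees, keep radius at most $d$ around their (possibly re‑chosen) centers, and still touch every neighboring branch set along an edge of $G$. Apart from this bookkeeping step, where the actual work lies, checking the three defining conditions of a tree $d$-shallow minor model and the size estimate is routine.
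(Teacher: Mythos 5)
Your first two paragraphs reproduce the paper's proof exactly: pass to a spanning tree of each branch set that preserves its radius and center, fix one witnessing edge per edge of $H$, and keep only the union of the root-to-connector paths. That part is complete and correct, and it yields tree branch sets of depth at most $d$ with at most $\deg_H(u)\le\Delta$ leaves, hence at most $1+\Delta\cdot d$ vertices each.

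The step you leave open --- improving $1+\Delta\cdot d$ to the stated $1+\Delta\cdot(d-1)$ --- cannot be carried out, because the constant in the lemma is an off-by-one slip: the paper's own proof simply asserts that $\tau(v)$ has at most $\deg_H(v)\cdot(d-1)$ nodes other than $\gamma(v)$, whereas a single root-to-connector path of length $d$ already contributes $d$ such nodes. The bound $1+\Delta(d-1)$ is in fact false: for $d=1$ it would force every branch set to be a singleton, i.e.\ $H$ to be a subgraph of $G$, yet $K_4$ is a $1$-shallow minor of its $1$-subdivision, which is bipartite and so contains no triangle. Your reassignment scheme also fails intrinsically in the worst case you yourself flag: if $a_{uv}$ and $b_{uv}$ both sit at depth $d$, the connecting path has $2d+1$ edges, and after reserving one of them as the witnessing edge the remaining $2d$ edges cannot be split into two legs of length at most $d-1$ each. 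None of this matters for the paper: the only facts used later (in Lemmas~\ref{lem:shallow-fo} and~\ref{lem:limit-top-equiv}) are that the branch sets can be taken to be trees of depth at most $d$ whose size is bounded in terms of $d$ and $\Delta$ alone, and your ``first approximation'' already delivers this. So the right fix is to stop after your second paragraph and state the bound as $1+\Delta\cdot d$, rather than to attempt the global re-routing.
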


\begin{proof}
  Assume that $\alpha$ is any $d$-shallow minor model of $H$ in $G$.
  For an edge $vw$ in $H$,
   let $\eps(vw)$ be an edge in $G$ which connects a vertex in $\alpha(v)$ with a vertex in $\alpha(w)$. Let $W$ denote the set of all endpoints of edges of the form $\eps(vw)$.
Observe that $W$ has at most $\deg_H(v)$ nodes in $\alpha(v)$, for each $v\in V(H)$.
   
  Pick a vertex $v\in V(H)$. The branch set $\alpha(v)$ contains a spanning tree $\beta(v)$ whose radius and center $\gamma(v)$ are the same as in $\alpha(v)$. 
  Let $\tau(v)$ be the subtree of $\beta(v)$ induced by those vertices that lie on the shortest path from $\gamma(v)$ to some node in $W$.
  Then $\tau(v)$  has at most $\deg_H(v)\cdot (d-1)$ nodes different from $\gamma(v)$, and it is easy to see that by
    defining $\tau(v)$ as described above for each vertex $v\in V(H)$, we obtain a $d$-shallow minor model $\tau$ of $H$ in $G$.
\end{proof}


\begin{lemma}\label{lem:shallow-fo}
Fix a finite graph $H$ and $d\in\Nat$.
\begin{itemize}
\item[(a)] There is a first order sentence $\phi_{H,d}$ such that $H\in G\nab d$ iff $G\models \phi_{H,d}$, for every graph $G$.
\item[(b)] There is a first order sentence $\tilde{\phi}_{H,d}$ such that $H\in G\topnab d$ iff $G\models \tilde{\phi}_{H,d}$, for every graph $G$.
\end{itemize}
\end{lemma}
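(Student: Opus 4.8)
The plan is to construct $\phi_{H,d}$ and $\tilde\phi_{H,d}$ by directly existentially quantifying over the finitely many vertices that witness a shallow (topological) minor model, and then asserting in first-order logic the defining properties of such a model. Both parts are routine first-order encodings; the only mild subtlety is that a shallow minor model's branch sets may a priori be infinite, so before writing down the formula in part (a) we must first bound their size.

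For part (b) I would proceed as follows. Suppose $V(H)=\{v_1,\dots,v_k\}$ and $E(H)=\{e_1,\dots,e_m\}$. A $d$-shallow topological minor model of $H$ in $G$ is determined by the images $\widetilde\alpha(v_i)$ of the vertices and, for each edge $e_j=v_av_b$, the internal vertices of the path $\widetilde\alpha(e_j)$; since each such path has length at most $2d+1$, it has at most $2d$ internal vertices. So $\tilde\phi_{H,d}$ begins with an existential block quantifying over the $k$ branch vertices together with, for every edge and every admissible path length $\ell\le 2d+1$, a tuple of $\ell-1$ internal vertices. The quantifier-free matrix then asserts: (i) the $k$ branch vertices are pairwise distinct; (ii) for each edge $e_j$ one of the guessed paths is a genuine path in $G$ (consecutive vertices adjacent) with the correct endpoints $\widetilde\alpha(v_a),\widetilde\alpha(v_b)$; and (iii) the chosen paths are internally disjoint from each other and from the branch vertices, except for shared endpoints. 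All of (i)--(iii) are Boolean combinations of equalities and of the single binary relation $E$, hence expressible as a first-order sentence, and by construction $G\models\tilde\phi_{H,d}$ iff $H\in G\topnab d$.

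For part (a) I would first invoke Lemma~\ref{lem:branch-trees}: since $H$ is fixed, its maximum degree $\Delta$ is a constant, so any $d$-shallow minor of $H$ in $G$ is witnessed by a model $\tau$ whose branch sets are trees with at most $N:=1+\Delta\cdot(d-1)$ vertices. Now the model is a finite object: $\phi_{H,d}$ existentially quantifies, for each $v_i\in V(H)$, a tuple of at most $N$ vertices (padding with repetitions if the branch set is smaller) spanning a tree of radius at most $d$ centered at the first coordinate $\gamma(v_i)$. The matrix asserts: each guessed branch set induces a connected subgraph of $G$ in which every vertex is within distance $d$ of its center (distance $\le d$ among a bounded set of named vertices is expressible as a bounded disjunction over candidate paths); the branch sets of distinct vertices of $H$ are disjoint; and for each $v_av_b\in E(H)$ some vertex of $\tau(v_a)$ is $E$-adjacent to some vertex of $\tau(v_b)$. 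Again this is a first-order sentence over the language of graphs, and $G\models\phi_{H,d}$ iff $H\in G\nab d$.

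The main obstacle, such as it is, is purely bookkeeping: encoding ``these at most $N$ named vertices induce a connected subgraph of radius at most $d$ around a named center'' and ``these paths are internally vertex-disjoint'' without an order or a reachability predicate. This is handled by noting that all relevant sets of vertices are of bounded size $N$ (resp.\ of bounded length $2d+1$), so connectivity and the distance bound become finite disjunctions over all possible spanning-tree shapes / path patterns on $\le N$ (resp.\ $\le 2d$) vertices, each of which is a quantifier-free formula in $E$ and equality. Since $H$ and $d$ are fixed, the number of such patterns is finite, so the resulting $\phi_{H,d}$ and $\tilde\phi_{H,d}$ are genuine (finite) first-order sentences.
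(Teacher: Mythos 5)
Your proposal is correct and follows essentially the same route as the paper: both parts reduce to a constant-size witness (via Lemma~\ref{lem:branch-trees} for part (a), and directly from the definition for part (b)) which is then existentially quantified in first-order logic. Your write-up is in fact more detailed than the paper's sketch, spelling out the bounded-disjunction encoding of connectivity, radius, and path-disjointness that the paper leaves implicit.
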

\begin{proof}


From Lemma~\ref{lem:branch-trees} it follows that if $G$ admits $H$ as a shallow minor at depth $d$, then also some constant-size subgraph of $G$ admits $H$ as a shallow minor at depth $d$, where the constant depends on the size of $H$ and $d$.
Hence, for (a) it suffices to verify the existence of a constant-size subset of vertices that induces a subgraph admitting $H$ as a shallow minor at depth $d$; this can be clearly done using a first order sentence. For (b), the conclusion that one only needs to look at constant-size induced subgraphs of $G$ follows directly from the definition of a shallow topological minor.
\end{proof}

\begin{lemma}\label{lem:ultrainclusion}
Fix a class of finite graphs $\cl$ and $d\in \Nat$. Then the following holds:
\begin{itemize}
\item[(a)] $(\cl\nab d)^\star\subseteq \cl^\star\nab d$, and
\item[(b)] $(\cl\topnab d)^\star\subseteq \cl^\star\topnab d$.
\end{itemize}
\end{lemma}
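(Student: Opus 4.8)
The plan is to prove (a) and (b) by one common scheme. Suppose $G\in(\cl\nab d)^\star$, say $G$ is a subgraph of $\ultraprod H_n$ with each $H_n\in\cl\nab d$, and for each $n$ fix a graph $G_n\in\cl$ together with a $d$-shallow minor model (resp.\ a $d$-shallow topological minor model) of $H_n$ in $G_n$. I will show that $\ultraprod H_n$ is a $d$-shallow minor (resp.\ topological minor) of $G^\star:=\ultraprod G_n$, which lies in $\cl^\star$; restricting the resulting model to $V(G)\subseteq V(\ultraprod H_n)$ then gives $G\in G^\star\nab d\subseteq\cl^\star\nab d$ (resp.\ $\cl^\star\topnab d$), since a subfamily of disjoint connected radius-$\le d$ branch sets is again such a family and every edge of $G$ is an edge of $\ultraprod H_n$, hence still witnessed. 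The naive idea of applying Łoś's theorem to the sentences $\phi_{H_n,d}$ of Lemma~\ref{lem:shallow-fo} does not work, because those sentences depend on $n$ whereas Łoś transfers a single fixed formula; the remedy is to record the models themselves inside expanded structures over one common finite relational language, so that a fixed family of first-order sentences certifies that they are genuine $d$-shallow minor models.

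For (a): first note we may assume each branch set is the subgraph of $G_n$ \emph{induced} on its vertex set, since adding edges preserves connectivity, cannot increase the radius, and keeps all inter-branch-set edges. Form a structure $\str B_n$ whose universe is $V(G_n)\sqcup V(H_n)$, equipped with two unary predicates marking the two parts, the edge relation $E(G_n)\cup E(H_n)$, and a binary relation $R$ with $R(x,v)\iff x\in V(\alpha_n(v))$. The following are first-order over this language and hold in every $\str B_n$, hence in $\str B^\star:=\ultraprod\str B_n$ by Łoś: the two predicates partition the universe; $E$ never links the two parts; for each $v$ in the second part $R(\cdot,v)$ is nonempty and some $w$ with $R(w,v)$ is joined to every $x$ with $R(x,v)$ by a path of length $\le d$ all of whose vertices satisfy $R(\cdot,v)$; the sets $R(\cdot,u)$ and $R(\cdot,v)$ are disjoint for $u\neq v$ in the second part; and for every edge $uv$ inside the second part there is an $E$-edge between $R(\cdot,u)$ and $R(\cdot,v)$. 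Unwinding the ultraproduct, the first part of $\str B^\star$ with the induced edge relation is exactly $G^\star$, the second part is exactly $\ultraprod H_n$, and the transferred sentences say precisely that $v^\star\mapsto$ (the subgraph of $G^\star$ induced on $\{x^\star:R^\star(x^\star,v^\star)\}$) is a $d$-shallow minor model of $\ultraprod H_n$ in $G^\star$. This yields $\ultraprod H_n\in G^\star\nab d$, and we conclude as above.

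For (b): the same scheme, except that a model path of length $\le 2d+1$ has at most $2d+2$ vertices, so we encode it by the $(2d+2)$-tuple of its vertices, repeating the last vertex to pad shorter paths. Concretely, equip $\str B_n$ additionally with a predicate for the principal vertices, a binary relation encoding the injection $v\mapsto\widetilde\alpha_n(v)$, and a relation of arity $2d+4$ holding on $(u,v,x_0,\dots,x_{2d+1})$ exactly when $uv\in E(H_n)$ and $(x_0,\dots,x_{2d+1})$ is the padded vertex sequence of $\widetilde\alpha_n(uv)$. Fixed first-order sentences then express that every such tuple is a padded simple path in $G_n$ from $\widetilde\alpha_n(u)$ to $\widetilde\alpha_n(v)$, that $v\mapsto\widetilde\alpha_n(v)$ is injective, and ---the only genuinely delicate point--- that distinct model paths meet only in shared endpoints. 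Transferring these and collecting, for each edge of $\ultraprod H_n$, the collapsed limit tuple, one obtains a $d$-shallow topological minor model of $\ultraprod H_n$ in $G^\star$, and we finish as in (a).

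The main obstacle is exactly that last point in (b): pairwise vertex-disjointness of the model paths is the one part of a topological minor model that is not confined to a bounded neighbourhood, so it cannot be phrased as a single first-order sentence about the whole (infinite) family of paths. The key observation that makes it go through is that it suffices to check pairs: for two paths, each having at most $2d+2$ vertices, ``sharing only endpoints'' \emph{is} first-order, and for any two distinct edges of $\ultraprod H_n$ ---which coincide \uae with two distinct edges of $H_n$, whose model paths meet only in endpoints in $G_n$--- Łoś then forces the two corresponding limit paths to meet only in endpoints (one also uses here that a principal vertex incident to an edge is an endpoint of that edge's model path, so that any shared vertex is an endpoint of both paths). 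In (a) there is no analogous difficulty, since disjointness of the branch sets is literally a single first-order sentence.
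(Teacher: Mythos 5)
Your proof is correct and follows essentially the same route as the paper's: encode each model $\alpha_n$ in a relational structure on $V(G_n)\sqcup V(H_n)$, axiomatize ``being a $d$-shallow (topological) minor model'' by fixed first-order sentences over that language, and transfer via Łoś's theorem. The only cosmetic difference is in part (b), where you record each model path as a padded $(2d+2)$-tuple while the paper uses a ternary incidence relation between pairs of principal vertices and path vertices; the paper leaves those details to the reader, and your handling of the pairwise-disjointness condition (reducing it to a universally quantified statement about pairs of bounded-length paths) fills them in correctly.
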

\begin{proof}
\noindent (a) Since $\cl^\star\nab d$ is closed under taking subgraphs, it suffices to show that every $H^\star\in (\cl\nab d)^\star$ of the form $H^\star=\ultraprod H_n$ for some sequence $(H_n)_{n\in \Nat}$ of graphs from $\cl\nab d$, belongs also to $\cl^\star\nab d$. For each $H_n$ we can find some $G_n\in \cl$ such that $H_n$ is a shallow minor of $G_n$ at depth $d$. Consider $G^\star=\ultraprod G_n$; then $G^\star\in \cl^\star$. It suffices to prove that $H^\star$ is a shallow minor of $G^\star$ at depth $d$. The idea of the following proof roughly amounts to observing that the property ``$\alpha$ is a $d$-shallow minor model of $H$ in $G$'' can be encoded in a relational structure encompassing $H,G,\alpha$. One then concludes by applying Łoś's theorem. The details follow.

Let $\alpha$ be a $d$-shallow minor model of some graph $H$ in a graph $G$.
Denote by $\Gamma_\alpha\subset V(H)\times V(G)$
 the binary relation
which relates $v\in V(H)$ with every $w\in V(\alpha(v))$.
Define a relational structure $\str M(H,G,\alpha)$
over a relational language $\Sigma$ consisting of two unary symbols, $V_H,V_G$,
and three binary symbols $E_H,E_G,\Gamma$, 
whose interpretations are $V(H),V(G),E(H),E(G),\Gamma_\alpha$, respectively. The universe of $\str M(H,G,\alpha)$ is the disjoint union of $V(H)$ and $V(G)$.

\begin{claim}\label{cl1}
  There is a first order sentence $\phi$ expressing the fact that 
  a structure $\str M$ over the language $\Sigma$,
with  relations $V_H,V_G,E_H,E_G,\Gamma$,
  is equal to $\str M(\alpha)$ for some $d$-shallow minor model $\alpha$ of the graph $(V_H,E_H)$ in the graph $(V_G,E_G)$.
\end{claim}
\begin{proof}
The sentence $\phi$ is a conjunction of the following conditions, each of which can be expressed by a first order sentence:
\begin{itemize}
	\item $V_H$ and $V_G$ form a partition of the universe, $E_H\subseteq V_H\times V_H$, $E_G\subseteq V_G\times V_G$, and both $E_H$ and $E_G$ are symmetric and anti-reflexive;

	\item $\Gamma$ is a relation with domain contained in $V_H$ and range contained in $V_G$ with the property that $(v,w)\in\Gamma$ and $(v',w)\in\Gamma$ implies $v=v'$, for every $v,v'\in V_H$ and $w\in V_G$;

	\item For every $v\in V_H$, the subgraph of the graph $(V_G,E_G)$ induced by $\Gamma(\set{v})=\set{w\in V_G: (v,w)\in \Gamma}$ is nonempty and has radius at most $d$;	
			
	\item Whenever $(v,w)\in E_H$, there exist elements $(v',w')\in E_G$ such that $(v,v'),(w,w')\in \Gamma$.
\end{itemize}
\cqed\end{proof}

We now come back to the proof of the lemma.
For $n\in \Nat$, let $\alpha_n$ be a  $d$-shallow minor model of $H_n$ in $G_n$, and let $\str M_n$ denote the structure $\str M(H_n,G_n,\alpha_n)$. Define $\str M=\ultraprod \str M_n$, and let the relations of $\str M$ be $V_H,V_G,E_H,E_G$ and $\Gamma$.
 Observe that $(V_H,E_H)$ is equal to $H^\star$ and $(V_G,E_G)$ is equal to~$G^\star$.
Since each structure $\str M_n$ satisfies the sentence $\phi$ from Claim~\ref{cl1}, by  Łoś's theorem, the ultraproduct $\str M$ also satisfies $\phi$. 
In particular, $H^\star$ is a shallow minor of $G^\star$ at depth $d$.

\medskip

\noindent (b) The proof follows exactly the same idea. This time, we encode a depth $d$-shallow topological minor model $\tilde\alpha$ as a structure $(V(H),V(G),E(H),E(G),\Gamma_{\tilde\alpha})$, where $\Gamma_{\tilde\alpha}\subset V(H)\times V(H)\times V(G)$ consists of triples $(v_1,v_2,w)$ such that $w\in V(\tilde\alpha(v_1v_2))$.
We then repeat the above argument. We leave the details to the reader.
\end{proof}

\begin{corollary}
If $\cl$ is a class of finite graphs, then $(\cl\nab 0)^\star=\cls$.
\end{corollary}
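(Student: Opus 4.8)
The plan is to prove the two inclusions separately; both are short once one unwinds what $\nab 0$ and $(\cdot)^\star$ mean. For the inclusion $\cls\subseteq(\cl\nab 0)^\star$, note first that a depth-$0$ shallow minor model is forced to assign a singleton branch set to every vertex (a connected graph of radius $0$ is a single vertex), so such a model is nothing but an injective, edge-preserving map; consequently $G\nab 0$ is exactly the class of subgraphs of $G$, and in particular $\cl\subseteq\cl\nab 0$. The class-limit operation $(\cdot)^\star$ is monotone with respect to class inclusion: a sequence of graphs from $\cl$ is also a sequence of graphs from $\cl\nab 0$, so every ultraproduct arising from $\cl$ — and hence every subgraph of such an ultraproduct — also arises from $\cl\nab 0$. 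This gives $\cls=\cl^\star\subseteq(\cl\nab 0)^\star$.

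For the reverse inclusion $(\cl\nab 0)^\star\subseteq\cls$, I would invoke Lemma~\ref{lem:ultrainclusion}(a) with $d=0$, which yields $(\cl\nab 0)^\star\subseteq\cl^\star\nab 0$. It then suffices to observe that $\cl^\star\nab 0=\cl^\star$. The inclusion $\cl^\star\subseteq\cl^\star\nab 0$ is the same triviality used above. For the opposite one, $\cl^\star\nab 0$ is, by the description of $\nab 0$ from the first paragraph, precisely the class of all subgraphs of graphs in $\cl^\star=\cls$; but $\cls$ is closed under taking subgraphs, since it is defined as the class of \emph{all} subgraphs of ultraproducts of sequences from $\cl$, and a subgraph of a subgraph of an ultraproduct $\ultraprod G_n$ is again a subgraph of $\ultraprod G_n$. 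Hence $\cl^\star\nab 0=\cl^\star=\cls$, which finishes the argument.

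I do not anticipate a genuine obstacle: the corollary is an immediate consequence of Lemma~\ref{lem:ultrainclusion}(a) together with the fact that $\cls$ is subgraph-closed. The only subtle point worth stating explicitly is that, since the graphs in this paper may be infinite, ``subgraph'' is to be understood as including infinite subgraphs; this reading is consistent with the definitions of $\nab 0$, of a shallow minor model, and of $\cls$, so the reasoning goes through unchanged in the infinite case.
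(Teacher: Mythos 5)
Your proof is correct and follows essentially the same route as the paper's: the paper's one-line argument is the chain $\cls\subseteq(\cl\nab 0)^\star\subseteq\cls\nab 0=\cls$, which is exactly your two inclusions (monotonicity of $(\cdot)^\star$ applied to $\cl\subseteq\cl\nab 0$, then Lemma~\ref{lem:ultrainclusion}(a) with $d=0$ together with subgraph-closedness of $\cls$). You merely spell out the details --- that depth-$0$ minor models are subgraph embeddings and that $\cls$ is subgraph-closed --- which the paper leaves implicit.
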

\begin{proof}
By Lemma~\ref{lem:ultrainclusion} and the definition of $\cls$ we have that $\cls\subseteq (\cl\nab 0)^\star \subseteq \cls\nab 0 =\cls$.
\end{proof}


\section{Nowhere denseness}\label{sec:nowhere}
In this section we recall the standard definition of nowhere denseness for classes of finite graphs, and we develop its limit counterpart. That is, we introduce and prove the middle part of the diagram of Figure~\ref{fig:diagram}. 

Recall that a class $\cl$ of finite graphs is \emph{somewhere dense} if $\gr\subset\cl\nab d$ for some $d\in\Nat$, and is \emph{nowhere dense} otherwise. Equivalently, $\cl$ is nowhere dense if for every $d\in\Nat$ there exists $m\in\Nat$ such that $K_m\not\in\cl\nab d$.

\begin{lemma}\label{lem:dense}
Let $\cl$ be a class of finite graphs.
Then the following conditions are equivalent:  
\begin{enumerate}
\item\label{it:dense} $\cl$ is somewhere dense;
\item\label{it:cliques} $\gr \subseteq\cl\nab d$ for some $d\in\Nat$;
\item\label{it:ultracliques} $\gr \subseteq\cls\nab d$ for some $d\in\Nat$;
\item\label{it:omega} $K_\omega\in \cls\nab \omega$;
\item\label{it:cont} $K_\cont\in \cls\nab \omega$.
\end{enumerate}
\end{lemma}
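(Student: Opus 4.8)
The plan is to prove the chain of implications $(\ref{it:dense})\Rightarrow(\ref{it:cliques})\Rightarrow(\ref{it:ultracliques})\Rightarrow(\ref{it:omega})\Rightarrow(\ref{it:cont})\Rightarrow(\ref{it:dense})$, which yields the equivalence of all five conditions. The equivalence $(\ref{it:dense})\Leftrightarrow(\ref{it:cliques})$ is essentially by definition, since somewhere denseness means $\gr\subseteq\cl\nab d$ for some $d$ (recall every finite graph is a subgraph, hence a depth-$0$ shallow minor, of a large clique, so $\gr\subseteq\cl\nab d$ is equivalent to $K_m\in\cl\nab d$ for all $m$). The implication $(\ref{it:cliques})\Rightarrow(\ref{it:ultracliques})$ is trivial because $\cl\subseteq\cls$ and hence $\cl\nab d\subseteq\cls\nab d$.

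Next I would handle $(\ref{it:ultracliques})\Rightarrow(\ref{it:omega})$ and $(\ref{it:omega})\Rightarrow(\ref{it:cont})$. For the first, suppose $\gr\subseteq\cls\nab d$. In particular every finite clique $K_m$ lies in $\cls\nab d$, witnessed by some graph $G_m\in\cls$. A natural approach is to take the ultraproduct $G^\star=\ultraprod G_{m}$ (indexing appropriately over $\Nat$); since $\cls$ is closed under ultraproducts followed by taking subgraphs — more precisely $\cls = (\cls)^\star$ would need checking, but at least ultraproducts of members of $\cls$ that are themselves ultraproducts of members of $\cl$ are again in $\cls$ — we get $G^\star\in\cls$, and by Łoś's theorem applied to the sentences $\phi_{K_m,d}$ from Lemma~\ref{lem:shallow-fo}(a) (which says $K_m\in G\nab d$), we conclude $K_m\in G^\star\nab d$ for all $m$ \uae, hence for all $m$. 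Then one argues that a single countably infinite graph containing arbitrarily large clique shallow minors at a fixed depth $d$ in fact contains $K_\omega$ as a shallow minor at depth $d$ — this is the step requiring a compactness/König-type argument, selecting branch sets coherently. Actually the cleanest route may be to bound branch-set sizes via Lemma~\ref{lem:branch-trees} and do a direct compactness argument on $G^\star$. The implication $(\ref{it:omega})\Rightarrow(\ref{it:cont})$ should follow by a similar "blow-up" inside the ultraproduct: if $K_\omega\in\cls\nab\omega$, say $K_\omega\in G^\star\nab d$ with $G^\star=\ultraprod G_n$, then using Łoś's theorem the graphs $G_n$ contain $K_{m}\nab d$ for unboundedly large $m$, and taking a fresh ultraproduct produces a graph in $\cls$ with $\cont$-many $\sim_\ultra$-classes realizing a clique shallow minor (mirroring Example~\ref{ex:cliques}); apply Lemma~\ref{lem:ultrainclusion}(a) to keep things inside $\cls\nab\omega$.

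Finally, for $(\ref{it:cont})\Rightarrow(\ref{it:dense})$: I would prove the contrapositive. Assume $\cl$ is nowhere dense, so for every $d$ there is $m_d$ with $K_{m_d}\notin\cl\nab d$. Fix $d$; I claim $K_{m_d}\notin\cls\nab d$, and more strongly that no graph on more than some finite number of vertices from $\cls\nab d$ can be a clique — but that is false for infinite graphs, so instead the right statement is $K_{m_d}\notin\cls\nab d$, which follows from Lemma~\ref{lem:shallow-fo}(a): if $K_{m_d}\in G^\star\nab d$ for $G^\star=\ultraprod G_n$ with $G_n\in\cl$, then by Łoś $K_{m_d}\in G_n\nab d$ \uae, contradicting $K_{m_d}\notin\cl\nab d$. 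Since $\cls\nab d$ is closed under subgraphs and $K_{m_d}$ is a subgraph of any larger clique, no infinite clique can lie in $\cls\nab d$ either; as $d$ was arbitrary, $K_\cont\notin\cls\nab\omega$, which is the negation of $(\ref{it:cont})$.

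The main obstacle I anticipate is the "blow-up inside the ultraproduct" steps, $(\ref{it:ultracliques})\Rightarrow(\ref{it:omega})$ and especially $(\ref{it:omega})\Rightarrow(\ref{it:cont})$: passing from "arbitrarily large finite clique minors at a fixed depth" in members of $\cls$ to "an infinite clique minor at a fixed depth" in a single member of $\cls$ requires either a careful compactness argument exploiting the bounded branch-set sizes of Lemma~\ref{lem:branch-trees}, or taking an ultraproduct of the witnessing graphs and applying Łoś's theorem to the first-order sentences $\phi_{K_m,d}$ while simultaneously checking that the resulting object is still in $\cls$ (via Lemma~\ref{lem:ultrainclusion}) and has enough vertices to host $K_\cont$ (via the counting argument of Example~\ref{ex:cliques}). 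Keeping the depth $d$ uniform throughout — rather than letting it drift, which would only land us in $\cls\nab\omega$ and is in fact acceptable for conditions (\ref{it:omega}) and (\ref{it:cont}) — is the key bookkeeping point.
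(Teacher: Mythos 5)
Your endpoints are right but the cyclic order you chose, $(\ref{it:ultracliques})\Rightarrow(\ref{it:omega})\Rightarrow(\ref{it:cont})$, forces you to perform the ``blow-up'' inside $\cls$, and neither route you sketch for $(\ref{it:ultracliques})\Rightarrow(\ref{it:omega})$ is sound. First, the compactness claim --- that a graph containing every $K_m$ as a depth-$d$ shallow minor must contain $K_\omega$ as a shallow minor at some fixed depth --- is false in general: a disjoint union of arbitrarily large finite cliques contains every $K_m$ at depth $0$, yet has no infinite clique as a shallow minor at any depth, since each connected branch set lies in one finite component. So no K\"onig-type argument on a single witness graph can work without using special structure of that graph. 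Second, the ultraproduct route takes $G^\star=\ultraprod G_m$ where the witnesses $G_m$ already lie in $\cls$ (so are themselves subgraphs of ultraproducts), and you would need $(\cls)^\star\subseteq\cls$, i.e.\ closure of $\cls$ under iterated ultraproducts; you flag this yourself, but it is not established anywhere in the paper and is not a triviality. Even granting it, you would still face the first problem of converting ``all finite cliques at depth $d$'' into ``$K_\omega$ at depth $d$'' inside $G^\star$.

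The paper avoids this entirely by a different decomposition: $(\ref{it:cont})\Rightarrow(\ref{it:omega})\Rightarrow(\ref{it:ultracliques})$ are the trivial directions (subgraphs of shallow minors are shallow minors), the only Łoś step is $(\ref{it:ultracliques})\Rightarrow(\ref{it:cliques})$ (push each finite $H\in\cls\nab d$ down to $\cl\nab d$ via $\phi_{H,d}$ --- exactly the argument you give, in contrapositive form, for $(\ref{it:cont})\Rightarrow(\ref{it:dense})$), and the blow-up is done as $(\ref{it:cliques})\Rightarrow(\ref{it:cont})$ \emph{at the level of $\cl$}: from $\cl\nab d=\gr$, Lemma~\ref{lem:ultrainclusion}(a) gives $\gr^\star=(\cl\nab d)^\star\subseteq\cls\nab d$, and Example~\ref{ex:cliques} puts $K_\cont$ in $\gr^\star$. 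Here the ultraproduct is taken over finite graphs from $\cl$, so no iterated-ultraproduct issue arises, and the infinite clique is produced by Example~\ref{ex:cliques} rather than by compactness. Note that your own argument for $(\ref{it:omega})\Rightarrow(\ref{it:cont})$ is essentially this: it first descends to condition $(\ref{it:cliques})$ via Łoś and then invokes Lemma~\ref{lem:ultrainclusion}(a) and Example~\ref{ex:cliques}. So all the correct ingredients are present in your proposal; you only need to rearrange the implications as $(\ref{it:dense})\Leftrightarrow(\ref{it:cliques})$, $(\ref{it:cliques})\Rightarrow(\ref{it:cont})\Rightarrow(\ref{it:omega})\Rightarrow(\ref{it:ultracliques})\Rightarrow(\ref{it:cliques})$ and drop the unworkable direct step $(\ref{it:ultracliques})\Rightarrow(\ref{it:omega})$.
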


\begin{proof}

The equivalence of the first two items is by definition. The implications~\implication{it:cont}{it:omega} and~\implication{it:omega}{it:ultracliques} are obvious. 
To prove~\implication{it:cliques}{it:cont}, suppose that for some $d\in\Nat$ we have that $\cl\nab d=\gr$. Then, by Lemma~\ref{lem:ultrainclusion} we have that $\cls\nab d\supseteq \gr^\star$. However, from Example~\ref{ex:cliques} it follows that $K_\cont\in \gr^\star$, so $K_\cont\in \cls\nab d\subseteq \cls\nab \omega$.
  
It remains to prove implication~\implication{it:ultracliques}{it:cliques}. Suppose that $\gr \subseteq\cls\nab d$ for some $d\in\Nat$. Let us fix some finite graph $H$; then in particular $H\in \cls\nab d$. Hence, there exists a graph $G^\star\in \cls$ such that $H\in G^\star\nab d$. By the definition of $\cls$, we can assume that $G^\star=\ultraprod G_n$ for some sequence $(G_n)_{n\in \Nat}$ of graphs from $\cl$. By applying Łoś's theorem to formula $\phi_{H,d}$ given by Lemma~\ref{lem:shallow-fo}, we infer that property $H\in G_n\nab d$ holds \uae. Consequently, $H\in G_n\nab d$ for at least one index $n\in \Nat$, so also $H\in \cl\nab d$. As $H$ was chosen arbitrarily, we have that $\gr \subseteq\cl\nab d$.
\end{proof}


Let us call a class $\cl$ of finite graphs \emph{topologically somewhere dense} if $\gr\subset\cl\topnab d$ for some $d\in\Nat$, and \emph{topologically nowhere dense} otherwise. By applying the same proof, but using items (b) instead of (a) in Lemmas~\ref{lem:shallow-fo} and~\ref{lem:ultrainclusion}, we obtain the following result:

\begin{lemma}\label{lem:top-dense}
Let $\cl$ be a class of finite graphs.
Then the following conditions are equivalent:  
\begin{enumerate}
\item\label{it:dense} $\cl$ is topologically somewhere dense;
\item\label{it:cliques} $\gr \subseteq\cl\topnab d$ for some $d\in\Nat$;
\item\label{it:ultracliques} $\gr \subseteq\cls\topnab d$ for some $d\in\Nat$;
\item\label{it:omega} $K_\omega\in \cls\topnab \omega$,
\item\label{it:cont} $K_\cont\in \cls\topnab \omega$.
\end{enumerate}
\end{lemma}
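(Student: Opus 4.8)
The plan is to mimic the proof of Lemma~\ref{lem:dense} almost verbatim, with the shallow-minor machinery replaced by its topological-shallow-minor counterpart, as the text already hints.

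First I would dispose of the easy implications exactly as before: \implication{it:dense}{it:cliques} is the definition of topological somewhere denseness, \implication{it:cont}{it:omega} holds because $K_\omega$ is a subgraph of $K_\cont$ and $\cl^\star\topnab d$ is closed under subgraphs, and \implication{it:omega}{it:ultracliques} is immediate since any finite graph embeds into $K_\omega$. For \implication{it:cliques}{it:cont} I would assume $\cl\topnab d=\gr$ for some $d$, invoke Lemma~\ref{lem:ultrainclusion}(b) to get $(\cl\topnab d)^\star\subseteq \cl^\star\topnab d$, hence $\gr^\star\subseteq \cl^\star\topnab d$, and then use Example~\ref{ex:cliques} which gives $K_\cont\in \gr^\star$; therefore $K_\cont\in\cl^\star\topnab d\subseteq \cl^\star\topnab\omega$.

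The only implication requiring genuine (but still routine) work is \implication{it:ultracliques}{it:cliques}. Assuming $\gr\subseteq\cl^\star\topnab d$ for some $d\in\Nat$, I fix an arbitrary finite graph $H$. Then $H\in G^\star\topnab d$ for some $G^\star\in\cl^\star$, and by the definition of $\cl^\star$ we may write $G^\star=\ultraprod G_n$ with all $G_n\in\cl$. Applying Łoś's theorem to the first-order sentence $\tilde\phi_{H,d}$ provided by Lemma~\ref{lem:shallow-fo}(b) --- which exactly characterizes the graphs admitting $H$ as a topological shallow minor at depth $d$ --- we conclude that $H\in G_n\topnab d$ holds for \uae\ $n$, in particular for at least one $n$, so $H\in\cl\topnab d$. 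Since $H$ was arbitrary, $\gr\subseteq\cl\topnab d$, which gives the topologically somewhere dense conclusion.

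I do not expect any real obstacle here: the structure of the argument is identical to Lemma~\ref{lem:dense}, and every ingredient needed in the topological setting (the first-order definability of being a topological shallow minor, and the inclusion $(\cl\topnab d)^\star\subseteq\cl^\star\topnab d$) has already been supplied by the ``(b)'' parts of Lemmas~\ref{lem:shallow-fo} and~\ref{lem:ultrainclusion}. If anything, the one point worth a sentence of care is that $\gr^\star$ really contains $K_\cont$, i.e.\ that the countable-clique sequence from Example~\ref{ex:cliques} lives in $\gr^\star$ and witnesses item~\ref{it:cont}; but this is precisely what Example~\ref{ex:cliques} establishes, so the proof is complete.
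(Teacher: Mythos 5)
Your proposal is correct and is exactly the argument the paper intends: the paper proves Lemma~\ref{lem:top-dense} by stating that one repeats the proof of Lemma~\ref{lem:dense} verbatim, substituting the (b) parts of Lemmas~\ref{lem:shallow-fo} and~\ref{lem:ultrainclusion} for the (a) parts, which is precisely what you do. All the individual steps (the trivial implications, the use of Example~\ref{ex:cliques} for $K_\cont\in\gr^\star$, and the Łoś's theorem argument for \implication{it:ultracliques}{it:cliques}) match the paper's treatment.
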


Lemmas~\ref{lem:dense} and~\ref{lem:top-dense} suggest the following definitions of the limit counterparts of nowhere denseness: A class $\cl$ of finite graphs is {\em{limit somewhere dense}} if $K_\omega\in \cls\nab \omega$, and {\em{limit nowhere dense}} otherwise. Respectively, $\cl$ is {\em{limit topologically somewhere dense}} if $K_\omega\in \cls\topnab \omega$, and {\em{limit topologically nowhere dense}} otherwise.

It is a classic result of the theory of sparse graphs that nowhere denseness is equivalent to topological nowhere denseness~\cite{sparsity}. We can now give an alternative proof of this statement by showing that the limit counterparts are equivalent, and then using Lemmas~\ref{lem:dense} and~\ref{lem:top-dense} to transfer this equivalence to the standard definitions.

\begin{lemma}\label{lem:limit-top-equiv}
The following conditions are equivalent.
\begin{enumerate}
\item\label{omega:nabla} $\cl$ is limit somewhere dense, i.e., $K_\omega\in \cls\nab \omega$.
\item\label{omega:top} $\cl$ is limit topologically somewhere dense, i.e., $K_\omega\in \cls\topnab \omega$.
\end{enumerate}
\end{lemma}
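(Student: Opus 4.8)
The implication from \ref{omega:top} to \ref{omega:nabla} is immediate, since every topological shallow minor at depth $d$ is also a shallow minor at depth $d$, hence $\cls\topnab\omega\subseteq\cls\nab\omega$. So the content is the other direction: assuming $K_\omega\in\cls\nab\omega$, I want to produce $K_\omega\in\cls\topnab\omega$. The natural route is to take a suitable depth $d$ and a graph $G^\star\in\cls$ together with a $d$-shallow \emph{minor} model $\alpha$ of $K_\omega$ in $G^\star$, and massage it into a topological shallow minor model of $K_\omega$ at some (larger, but still finite) depth $d'$, possibly in a graph from $\cls$ that is obtained from $G^\star$ by a further (bounded-depth) minor-type operation, so that by Lemma~\ref{lem:minor-of-my-minor} we stay inside $\cls\topnab{d'}$.

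\textbf{Key steps.} First I would fix $d$ with $K_\omega\in\cls\nab d$ and a model $\alpha$ with branch sets $\alpha(v)$, $v\in V(K_\omega)$, of radius at most $d$ with centers $\gamma(v)$. In the finite case, the standard argument (Ne\v{s}et\v{r}il--Ossona de Mendez, cf.\ \cite{sparsity}) turning a shallow minor into a shallow topological minor works by observing that a large clique as a shallow minor forces, after restricting to a subclique, branch sets of bounded size — because a branch set of radius $d$ adjacent to many other branch sets must, by pigeonhole on the centers' neighborhoods, have many of those adjacencies realized near a single vertex, and then one extracts vertex-disjoint short paths between centers. The clean limit analogue is: in $K_\omega$ the vertex $\gamma(v)$ has infinitely many branch sets within distance $d$, so infinitely many of them attach to $\alpha(v)$ at vertices lying within distance $d$ of $\gamma(v)$; applying this together with the \emph{infinite} pigeonhole/Ramsey argument across all $v$ simultaneously, one passes to an infinite subset $W\subseteq V(K_\omega)$ on which the branch sets behave uniformly and the connecting edges can be chosen to give pairwise internally-disjoint paths of length at most $2d+1$ between the centers $\gamma(v)$, $v\in W$. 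That yields a $d$-shallow topological minor model of $K_W\cong K_\omega$ in $G^\star$, and since $G^\star\in\cls$, we get $K_\omega\in\cls\topnab d\subseteq\cls\topnab\omega$. An alternative, perhaps more in the spirit of the paper, is to transfer via Lemmas~\ref{lem:dense} and~\ref{lem:top-dense}: $K_\omega\in\cls\nab\omega$ gives, by Lemma~\ref{lem:dense}, that $\cl$ is somewhere dense, hence $\gr\subseteq\cl\nab d$ for some $d$; invoking the classical equivalence would then give topological somewhere denseness and Lemma~\ref{lem:top-dense} would close the loop — but the whole point of Section~\ref{sec:nowhere} is to \emph{reprove} that classical equivalence, so I would avoid this and give the direct limit argument above.

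\textbf{Main obstacle.} The delicate point is making the disjointness of the topological paths work out when we cut down to the infinite subset $W$: a branch set $\alpha(v)$ of radius $d$ carries, a priori, infinitely many short paths (one to each neighbour's center), and these share the vertex $\gamma(v)$ and potentially much more, so naively keeping one path per edge of $K_W$ does not give the required vertex-disjointness except at endpoints. The resolution — exactly as in the finite proof but now with the infinite Ramsey theorem replacing the iterated finite Ramsey extractions — is to first prune each branch set to a bounded-size tree (Lemma~\ref{lem:branch-trees} does this in the finite-degree case; in $K_\omega$ one must instead prune to the paths from $\gamma(v)$ to the finitely many ``attachment'' vertices relevant \emph{after} the Ramsey reduction), and to organize the Ramsey argument so that, along $W$, the attachment vertex of $\alpha(u)$ used for the pair $\{u,v\}$ depends only on which of $u,v$ is ``earlier'', so that a single ordering of $W$ makes all chosen paths emanate from prescribed vertices in a consistent, disjoint way. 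Getting this bookkeeping right — colour the pairs of $V(K_\omega)$ by the (bounded amount of) local data describing how the two branch sets meet, extract an infinite monochromatic $W$, and then check that the uniformity forces the paths to be internally disjoint — is where the real work lies; once it is set up, the passage to $\cls$ is a one-line appeal to $G^\star\in\cls$ and to $\cls\topnab d\subseteq\cls\topnab\omega$.
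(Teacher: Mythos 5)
Your reduction of the lemma to the implication \implication{omega:nabla}{omega:top} is right, and you correctly identify both the main obstacle (many paths sharing one branch set) and the first step of its resolution (prune each branch set to a tree of shortest paths to the attachment points and locate a vertex $\sigma(u)$ of bounded depth with infinitely many children whose subtrees contain attachments). But the core step --- ``apply the infinite Ramsey argument across all $v$ simultaneously'' to obtain an infinite $W$ on which the pairs can be joined by internally disjoint paths of length at most $2d+1$ running through $\alpha(u)\cup\eps(uv)\cup\alpha(v)$ --- has a genuine gap, and it is precisely the gap the paper's proof is engineered to avoid. What the pruning step gives you for each $u$ is only a \emph{directed} relation of infinite outdegree: you may designate, for each child-subtree of $\sigma(u)$, one $v$ whose attachment lies in it, and only for these designated $v$ are the downward routes out of $\sigma(u)$ pairwise disjoint. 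To connect $u$ and $v$ \emph{directly} you need the pair to be designated in both directions, i.e.\ an infinite clique in the symmetrization of this relation. A directed graph with all outdegrees infinite need not contain even one symmetric edge (take $R(u)=\set{v\colon v>u}$ for a well-order of the vertices), and colouring pairs by the finitely many compatibility types and invoking Ramsey can simply return an infinite monochromatic set of a useless asymmetric type. Your fallback of letting the attachment ``depend only on which of $u,v$ is earlier'' does not escape this: for a fixed $u$ the infinitely many later neighbours in $W$ still compete for disjoint routes out of $\alpha(u)$, which again requires each to sit in its own subtree --- the same one-directional condition. A concrete obstruction: let each branch set be a path of length $d$ with all edges to later vertices leaving from one end and all edges from earlier vertices arriving at the other; then no choice of representative vertices admits infinitely many internally disjoint direct connections, however one thins out the vertex set.

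The paper resolves this by abandoning direct connections. Claim~\ref{cl:greedy-model} greedily builds a $1$-shallow topological model of $K_\omega$ \emph{inside the abstract clique}, using only length-$3$ paths $p$--$x$--$y$--$q$ whose outer edges respect the direction of the infinite-outdegree relation $R$; each abstract edge is then realized in $G$ by descending from $\sigma(\widetilde\beta(p))$ into the branch set of the intermediate vertex $x$, crossing $\eps(xy)$, and ascending through the branch set of $y$ to $\sigma(\widetilde\beta(q))$. Each pair thus consumes two \emph{private} intermediate branch sets, which is what makes disjointness automatic and why the paper lands at depth $3d+1$ rather than $d$; your bound of $2d+1$ on path lengths is a symptom of the missing idea rather than a harmless optimization (though, since the target is $\cls\topnab\omega$, any finite depth would do). Without something playing the role of Claim~\ref{cl:greedy-model}, or another mechanism for breaking the symmetry requirement, the extraction you sketch does not go through.
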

\begin{proof} 
Implication \implication{omega:top}{omega:nabla} is obvious, since $\cls\topnab \omega\subseteq \cls\nab\omega$.
It remains to show implication \implication{omega:nabla}{omega:top}.
The following auxiliary claim will be useful.

  
\begin{claim}\label{cl:greedy-model}
Suppose $K$ is an infinite clique, and suppose $R$ is a directed graph with $V(R)=V(K)$ such that in $R$ each vertex has an infinite outdegree.
Then there exists a $1$-shallow topological minor model $\widetilde\beta$ of $K_\omega$ in $K$ such that for each $pq\in E(K)$, $\widetilde\beta(pq)$ is {\em{compatible with $R$}}: it is a path $\widetilde\beta(p)-x-y-\widetilde\beta(q)$ of length three that satisfies $(\widetilde\beta(p),x)\in E(R)$ and $(\widetilde\beta(q),y)\in E(R)$.
\end{claim}

\begin{proof} 
  Choose any sequence $H_0\subset H_1\subset H_2\subset \ldots$
  of finite subgraphs of $K_\omega$ with the properties that $K_\omega=\bigcup_{n=0}^\omega H_n$, the graph $H_0$ is empty, and for each $n\ge 0$, $H_{n+1}$  is obtained from $H_n$ by either adding a vertex, or adding an edge.
  
  The model $\widetilde\beta$ of $K_\omega$ in $K$ is constructed inductively. At the $n$-th step of the induction we maintain the invariant that $\widetilde\beta$ is a $1$-shallow topological minor model of $H_n$ in a finite subgraph of $K$,
  such that  each path $\widetilde\beta(pq)$, for $pq\in E(H_n)$, is compatible with $R$.
   
In the $0$-th step, $\widetilde\beta$ is the empty mapping. 
In the inductive step, consider two cases. If $H_{n+1}$ is obtained from $H_n$
by adding a vertex $p$ of $K_\omega$, then extend $\widetilde\beta$ by setting $\widetilde\beta(p)=v$ for any vertex $v$ of $K$ which has not been used so far by $\widetilde\beta$. In the second case, $H_{n+1}$ is obtained from $H_n$ by adding an edge $pq$ between two vertices of $H_n$. In this case, let $v=\widetilde\beta(p)$ and $w=\widetilde\beta(q)$ and
choose any two distinct vertices $x,y$ of $K$ 
such that the $(v,x)\in E(R)$, $(w,y)\in E(R)$, and $x,y$ are not used by $\widetilde\beta$ constructed so far.
Such vertices exist, since $v$ has infinite outdegree in $R$ and the graph used by $\widetilde\beta$ so far is finite; likewise for $w$.
Extend $\widetilde\beta$ by setting $\widetilde\beta(pq)$ to the path $v-x-y-w$.
It is easy to see that the invariant is maintained. 

Eventually, for every pair of vertices $p,q$ of $K_\omega$, the vertices $\widetilde\beta(p)$ and $\widetilde\beta(q)$ are defined, as well as the path $\widetilde\beta(pq)$, which is compatible with $R$. Therefore, the mapping $\widetilde\beta$ obtained in this process satisfies the conditions of the~lemma.
\cqed\end{proof}

  We proceed with the proof of implication \implication{omega:nabla}{omega:top}. To this end, it suffices to show that 
  if some infinite clique $K$ is a $d$-shallow minor of a graph $G$, for some $d\in \Nat$, then $K_\omega$ is a topological shallow minor of $G$ at depth $3d+1$. 
  
  Let $\tau$ be a depth-$d$ tree minor model of $K$ in $G$; such model exists by Lemma~\ref{lem:branch-trees}. 
  For each vertex $v$ of $K$, let $\gamma(v)$ be an arbitrarily selected center of $\tau(v)$, and 
for each edge $uv$ of $K$, let $\eps(uv)$ be an edge in $G$ that
  connects a vertex in $\tau(u)$ with a vertex in $\tau(v)$.

We view each $\tau(u)$ as a rooted tree with root $\gamma(u)$, whose depth is at most $d$. Let $\tau'(u)$ be the rooted tree obtained by adding to the tree $\tau(u)$ 
the edges $\eps(uv)$ (and their endpoints), for all $v\in V(K)\setminus \{u\}$. Each of these edges is appended to a leaf of $\tau(u)$, and thus $\tau'(u)$ is a rooted tree of depth at most $d+1$. Since $\tau'(u)$ has an infinite number of leaves, there exists a vertex $\sigma(u)$ that has an infinite number of children in $\tau'(u)$. 

Construct a directed graph $R$ with $V(R)=V(K)$, where the edge set of $R$ is defined as follows. For every $u\in V(K)$, and every subtree of $\tau'(u)$ that is rooted at a child of $\sigma(u)$, pick an arbitrary vertex $v\in V(K)$ such that $\eps(uv)$ is contained in this subtree, and put $(u,v)$ into the edge set of $R$. Since $\sigma(u)$ has infinitely many children in $\tau'(u)$, it follows that in $R$ every vertex has an infinite outdegree. Observe that whenever $(u,v)\in E(R)$, one can find a path of length at most $d+1$ in $\tau'(u)$ that leads from $\sigma(u)$ downwards to $\eps(uv)$. Note that for a fixed $u$, all these paths will be vertex-disjoint apart from the endpoint $\sigma(u)$.

Apply Claim~\ref{cl:greedy-model} to $K$ and $R$, yielding a topological minor model $\widetilde\beta$ of $K_\omega$ in $K$. From this, we construct a topological model of $K_\omega$ in $G$, as follows.
For a vertex $p$ of $K_\omega$, set $\widetilde\alpha(p)=\sigma(\widetilde\beta(p))$. For an edge $pq$ of $K_\omega$, suppose that $\widetilde\beta(pq)$ is a path $v-x-y-w$ in $K$, where $v=\widetilde\beta(p)$, $w=\widetilde\beta(q)$, $(v,x)\in E(R)$ and $(w,y)\in E(R)$. Then, set $\widetilde\alpha(pq)$ to be the path from $\sigma(v)$ to $\sigma(w)$ defined as the union of three paths
$\lambda,\mu,\rho$ in $G$, where
\begin{itemize}
  \item $\lambda$ is the path in $\tau'(v)$ going downwards from $\sigma(v)$ 
  and ending with the edge $\eps(vx)$. The path exists since $(v,x)\in E(R)$. It has length at most $d+1$,
  and ends in a vertex~$l$~of~$\tau(x)$.
  \item $\rho$ is the path in $\tau'(w)$ going downwards from $\sigma(w)$ 
  and ending with the edge $\eps(wy)$. The path exists since $(w,y)\in E(R)$. It has length at most $d+1$,
  and ends in a vertex~$r$~of~$\tau(y)$.
  
  \item $\mu$ is any path in $G$ connecting $l$ with $r$,
  contained in $\tau(x)\cup\eps(xy)\cup \tau(y)$.
  We can choose this path so that its length is at most $2d+1+2d$.
\end{itemize}
Altogether, the path $\widetilde\alpha(pq)$ has length at most 
$6d+3$. It is easy to see that the mapping $\widetilde\alpha$ is a $(3d+1)$-shallow topological minor model of $K_\omega$ in $G$.
\end{proof}

\section{Quasi-wideness}\label{sec:quasi}
In this section, we recall the notions of (uniform) quasi-wideness  introduced in~\cite{Dawar10},
introduce their limit counterparts, and prove equivalence of the limit variants with limit nowhere-denseness.


\subsection{Definitions}\label{sec:quasi-def}

Suppose $G$ is a graph and $d\in \Nat$. Recall that we say that a vertex subset $X\subseteq V(G)$ is {\em{$d$-scattered}} if for all distinct $u,v\in X$, the distance between $u$ and $v$ is larger than $2d$. Equivalently, balls $\set{N_G^d[u]}_{u\in X}$ are pairwise disjoint.
For the remainder of this section, let us fix some class $\cl$ of finite graphs. 

Let $s\colon \Nat\to\Nat$ be a function. We say that $\cl$ is \emph{quasi-wide (\emph{QW}) with margin $s$}  if for all $d,m\in \Nat$ there exists an $N\in \Nat$ such that the following holds: For every graph $G\in \cl$ with $|V(G)|\geq N$, there exists $S\subseteq V(G)$ with $|S|\leq s(d)$ such that $G-S$ contains a $d$-scattered set of size $m$. We say that $\cl$ is \emph{uniformly quasi-wide ({UQW}) with margin $s$} if for all $d,m\in \Nat$ there exists an $N\in \Nat$ such that the following holds: For every graph $G\in \cl$ and every vertex subset $W\subseteq V(G)$ with $|W|\geq N$, there exists $S\subseteq V(G)$ with $|S|\leq s(d)$ such that in $G-S$ one can find a $d$-scattered set of size $m$ contained in $W$. Note that uniform quasi-wideness implies quasi-wideness with the same margin by taking $W=V(G)$. 

It is a known result of the theory of sparse graphs (cf. Theorem 8.2 of~\cite{sparsity}) that a hereditary class $\cl$ of finite graphs is (uniformly) quasi-wide with some margin if and only if $\cl$ is nowhere dense. The goal of this section is to give an alternative proof of this statement by proving it for the limit counterparts of (uniform) quasi-wideness, and then transferring this result to the standard definitions by the means of Łoś's theorem. 

To this end, we now give limit counterparts of the notions of (uniform) quasi-wideness. There will be four variants: the definition will either contain the uniformity requirement or not, and the margin will be either uniformly bounded by some function $s$, or just required to be finite. As we will see later, for hereditary $\cl$ all these notions will be equivalent, and equivalent to (limit) nowhere denseness and standard (uniform) quasi-wideness.

We shall say that $\cl$ is \emph{limit quasi-wide} (\emph{LQW}) if for every $d\in \Nat$ and every infinite graph $G\in \cls$, there exists a finite set $S\subseteq V(G)$ such that in $G\setminus S$ one can find an infinite $d$-scattered set. We shall also say that $\cl$ is \emph{uniformly limit quasi-wide} (\emph{ULQW}) if for every $d\in \Nat$, every graph $G\in \cls$, and every infinite set $W\subseteq V(G)$, there exists a finite set $S\subseteq V(G)$ such that in $G\setminus S$ one can find an infinite $d$-scattered set contained in $W$.

Let $s\colon \Nat\to \Nat$ be a function. We shall say that $\cl$ is \emph{strongly limit quasi-wide ({SLQW}) with margin $s$}  if for every $d\in \Nat$ and every infinite graph $G\in \cls$, there exists a set $S\subseteq V(G)$ with $|S|\leq s(d)$ such that in $G\setminus S$ one can find an infinite $d$-scattered set. We shall also say that $\cl$ is \emph{strongly uniformly limit quasi-wide ({SULQW}) with margin $s$}  if for every $d\in \Nat$, every graph $G\in \cls$, and every infinite set $W\subseteq V(G)$, there exists a set $S\subseteq V(G)$ with $|S|\leq s(d)$ such that in $G\setminus S$ one can find an infinite $d$-scattered set contained in $W$.

The following implications are trivial: If $\cl$ is strongly limit quasi-wide with some margin, then it is also limit quasi wide. The same holds for the uniform variants: SULQW implies UQW. By taking $W=V(G)$, we obtain that uniform limit quasi-wideness implies limit quasi-wideness, and strong uniform limit quasi-wideness with some margin $s$ implies strong limit quasi-wideness with the same margin.

\subsection{Linking standard and limit notions}

We now link the standard notions of (uniform) quasi-wideness with their limit counterparts. For technical reasons, we can prove this connection directly only for the strong variants of the notion. This technical caveat is the main reason why we introduced the strong variants in the first place.

\begin{lemma}\label{lem:quasi-los}
Let $s\colon \Nat\to \Nat$ be a function. Then the following implications hold:
\begin{itemize}
\item[(a)] If $\cl$ is strongly limit quasi-wide with margin $s$, then $\cl$ is quasi-wide with margin $s$.
\item[(b)] If $\cl$ is strongly uniformly limit quasi-wide with margin $s$, then $\cl$ is uniformly quasi-wide with margin $s$.
\end{itemize}
\end{lemma}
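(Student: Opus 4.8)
The plan is to prove both implications by contraposition, using Łoś's theorem to manufacture a counterexample in the limit class $\cls$ from a failure of ordinary quasi-wideness in $\cl$. Since the two parts are structurally identical, I will describe part (a) in detail; part (b) is obtained by carrying the auxiliary set $W$ alongside.

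\textbf{Setting up the contrapositive.} Suppose $\cl$ is \emph{not} quasi-wide with margin $s$. Unwinding the definition, there exist $d,m\in\Nat$ such that for every $N\in\Nat$ there is a graph $G_N\in\cl$ with $|V(G_N)|\geq N$ for which \emph{no} set $S\subseteq V(G_N)$ with $|S|\leq s(d)$ has the property that $G_N-S$ contains a $d$-scattered set of size $m$. Fix such a sequence $(G_N)_{N\in\Nat}$, so that $|V(G_N)|\to\infty$ and each $G_N$ witnesses the failure. Let $G^\star=\ultraprod G_N$; then $G^\star\in\cls$ by definition of the class limit. I would first check that $G^\star$ is infinite: since $|V(G_N)|\to\infty$, for each fixed $k$ the sentence ``there exist $k$ distinct vertices'' holds in $G_N$ for all large $N$, hence \uae, hence in $G^\star$ by Łoś; so $|V(G^\star)|\geq\aleph_0$.

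\textbf{Transferring the non-wideness.} The core step is to express the witnessing property as a first-order sentence and push it through the ultraproduct. For fixed $d,m$ and with $k=s(d)$, consider the sentence $\psi_{d,m,k}$ asserting: ``for every choice of $k$ vertices $s_1,\dots,s_k$, there do \emph{not} exist $m$ vertices $v_1,\dots,v_m$, all distinct from each other and from the $s_i$, such that for each pair $i\neq j$ the distance between $v_i$ and $v_j$ in the graph with $s_1,\dots,s_k$ deleted exceeds $2d$.'' The one subtlety is encoding ``distance $>2d$ after deleting $S$'': since $d$ is a fixed constant, ``there is a path of length $\leq 2d$ from $v_i$ to $v_j$ avoiding $s_1,\dots,s_k$'' is a finite existential first-order formula (quantify over the at most $2d-1$ internal vertices, require each to differ from every $s_l$, and require consecutive vertices adjacent), so its negation is first-order as well. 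By construction $G_N\models\psi_{d,m,k}$ for every $N$ — actually, we only need it for the $N$ in our sequence, but the failure is witnessed by a graph for \emph{every} size bound, so after reindexing we may assume it holds for all members of the sequence, hence \uae. By Łoś's theorem, $G^\star\models\psi_{d,m,k}$.

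\textbf{Concluding.} Now I claim $G^\star\in\cls$ violates strong limit quasi-wideness with margin $s$ for this $d$. Indeed, take any $S\subseteq V(G^\star)$ with $|S|\leq s(d)$; pad it to size exactly $s(d)=k$ if necessary. Since $G^\star\models\psi_{d,m,k}$, there is no $d$-scattered set of size $m$ in $G^\star-S$ (here ``$d$-scattered'' in the infinite graph $G^\star$ still means pairwise distance $>2d$, and the first-order formula captures exactly this, distances being realized by finite paths). A fortiori, $G^\star-S$ contains no \emph{infinite} $d$-scattered set. Since $G^\star$ is infinite, this exhibits a graph in $\cls$ for which no finite set $S$ of size at most $s(d)$ leaves an infinite $d$-scattered set behind, so $\cl$ is not strongly limit quasi-wide with margin $s$. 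This proves (a) by contraposition. For (b), one repeats the argument with an additional parameter: failure of uniform quasi-wideness gives graphs $G_N$ together with vertex sets $W_N\subseteq V(G_N)$ with $|W_N|\to\infty$ and no good $S$; work in the structure $(G_N,W_N)$ with an extra unary predicate for $W_N$, take the ultraproduct, note $W^\star$ is infinite, and transfer the analogous sentence $\psi^W_{d,m,k}$ which additionally restricts $v_1,\dots,v_m$ to lie in $W$.

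\textbf{Main obstacle.} The only delicate point is the first-order encoding of ``distance greater than $2d$ in $G-S$''. This works precisely because $d$ is a fixed natural-number constant (not quantified), so both the bound on path length and the size of $S$ are constants, making the relevant property expressible by a single first-order sentence of bounded quantifier depth; if one tried to quantify over $d$ this would fail. A secondary point worth stating carefully is why ``$d$-scattered'' behaves correctly under Łoś: distances in $G^\star$ between two vertices are finite iff realized by a finite path, and the formula ``$\mathrm{dist}(x,y)\leq 2d$ in $G-S$'' is a genuine first-order formula, so its semantics in $G^\star$ agrees with the \uae{} semantics in the $G_N$'s; there is no issue with ``infinite-distance'' pairs because the formula we transfer only ever asserts the \emph{non-existence} of short paths.
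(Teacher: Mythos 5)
Your proposal is correct and follows essentially the same route as the paper: argue by contraposition/contradiction, take the ultraproduct of a sequence of graphs (and, for part (b), of the sets $W_n$ encoded as a unary predicate) witnessing the failure of (uniform) quasi-wideness, express ``no deletion of at most $s(d)$ vertices leaves a $d$-scattered set of size $m$'' as a first-order sentence using bounded-length paths, and transfer it to the ultraproduct via Łoś's theorem to contradict the strong limit notion. The only cosmetic difference is that the paper works with the existential form of the sentence (false in each $G_n$, true in $G^\star$) while you transfer its universal negation; the content is identical.
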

\begin{proof}
We first prove (b), and then discuss how to modify the reasoning to obtain a proof of (a).

For the sake of contradiction, suppose $\cl$ is not uniformly quasi-wide with margin $s$. This means that there are some $d,m\in \Nat$, a sequence of graphs $(G_n)_{n\in \Nat}$, and a sequence of sets $(W_n)_{n\in \Nat}$, where $W_n\subseteq V(G_n)$ for each $n\in \Nat$, such that the following holds for every $n\in \Nat$:
\begin{itemize}
\item $|W_n|\geq n$, and
\item for any set $S\subseteq V(G_n)$ with $|S|\leq s(d)$, in $G_n\setminus S$ there is no $d$-scattered set $X$ with $X\subseteq W_n$ and $|X|=m$.
\end{itemize}
Let $G^\star=\ultraprod G_n$ and $W^\star=\ultraprod W_n$. Then $G^\star\in \cls$, and the same reasoning as in Example~\ref{ex:cliques} shows that $W^\star$ is infinite.

Consider a first order formula $\phi$ in the language of graphs with a specified vertex subset $W$ (encoded as a unary relation) that says the following: ``There are vertices $u_1,u_2,\ldots,u_{s(d)},v_1,v_2,\ldots,v_m$ such that (i) all $\set{v_i}_{1\leq i\leq m}$ are pairwise different, different from all $\set{u_j}_{1\leq j\leq s(d)}$, and contained in $W$; and (ii) for all $1\leq i<i'\leq m$ there is no path of length at most $2d$ between $v_i$ and $v_{i'}$ that would avoid all the vertices $\set{u_j}_{1\leq j\leq s(d)}$''. By our assumption about $G_n$ and $W_n$, we have that $G_n,W_n\nmodels \phi$ for all $n\in \Nat$. However, since $\cl$ is strongly uniformly limit quasi-wide with margin $s$, and $W^\star$ is infinite, there is a set $S\subseteq G^\star$ with $|S|\leq s(d)$ such that in $G^\star\setminus S$ there is an infinite $d$-scattered set contained in $W^\star$. In particular there is one of size $m$, so $G^\star,W^\star\models \phi$. This is a contradiction with Łoś's theorem.

The proof of (a) follows the same lines, but we do not need to consider sets $\set{W_n}_{n\in \Nat}$ and $W^\star$. Consequently, $\phi$ is simply in the language of graphs and does not require $\set{v_i}_{1\leq i\leq m}$ to be contained in $W$. The rest of the reasoning is the same.
\end{proof}

Lemma~\ref{lem:quasi-los} provides only one direction of implications: from limit variants to the standard ones. Unfortunately, it seems difficult to directly prove the reverse implications. This is because the definition of (uniform) limit quasi-wideness is not hereditary. More precisely, in order to prove that $\cl$ is strongly (uniformly) limit quasi-wide, we need to consider an arbitrary subgraph $G^\star$ of an ultraproduct $\ultraprod G_n$ of graphs from $\cl$. In case of the uniformity condition, also an arbitrary subset $W^\star$ of its vertices needs to be considered. Both $G^\star$ and $W^\star$ may not be defined as ultraproducts themselves, while the existence of large scattered sets in $\ultraprod G_n$ can have nothing to do with the existence of large scattered sets in $G^\star$. Therefore, it seems difficult to use the assumed quasi-wideness of $\cl$ to reason about $G^\star$ and $W^\star$.

For this reason, we resort to invoking an implication known from the literature.

\begin{lemma}[Lemma 8.9 and Theorem 8.2 of~\cite{sparsity}]\label{lem:finite-qw}
Let $\cl$ be a hereditary class of finite graphs. If $\cl$ is quasi-wide with some margin $s$, then $\cl$ is also topologically nowhere dense.
\end{lemma}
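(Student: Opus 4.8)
The plan is to prove the contrapositive: if $\cl$ is hereditary and topologically somewhere dense, then $\cl$ is not quasi-wide with any margin $s$. So fix $r\in\Nat$ with $\gr\subseteq\cl\topnab r$. Unwinding the definition of a topological shallow minor, for every $k$ there is $G_k\in\cl$ and a subgraph $F_k\subseteq G_k$ that is a subdivision of $K_k$ in which every edge is replaced by a path of length at most $2r+1$, and whose $k$ branch vertices are pairwise at distance at most $2r+1$ in $G_k$. The first task is to replace $G_k$ by a much cleaner member of $\cl$: using heredity together with Ramsey-type arguments, one extracts inside some graph of $\cl$ an arbitrarily large \emph{clean} obstruction to quasi-wideness. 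In the simplest case this is the graph $K_\ell^{(j)}$ obtained from $K_\ell$ by subdividing every edge exactly $j$ times, for a fixed $j\le 2r$ (homogenise the subdivision lengths $h_e\in\{0,\dots,2r\}$ in $F_k$ by Ramsey, and — since we only have closure under \emph{induced} subgraphs — also homogenise the chord patterns that $G_k$ may add along and between the subdivision paths).

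The combinatorial heart is the following claim, which I would prove by a ``bounded core'' analysis: \emph{for every $\sigma\in\Nat$ there is a constant $C=C(\sigma)$ such that for all $\ell$, all $j\le 2r$, and all $S\subseteq V(K_\ell^{(j)})$ with $|S|\le\sigma$, every $D$-scattered set in $K_\ell^{(j)}-S$ has at most $C$ elements, where $D:=3r+1$.} Indeed, at most $\sigma$ branch vertices lie in $S$ and at most $\sigma$ of the connecting paths have an internal vertex in $S$, so there is a set $B$ of at least $\ell-2\sigma$ branch vertices, disjoint from $S$, whose mutual connecting paths all survive in $K_\ell^{(j)}-S$; hence $B$ has diameter at most $j+1\le 2r+1$ there. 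Every remaining vertex of $K_\ell^{(j)}-S$ is either within distance $2r$ of some surviving branch vertex, or lies in one of only $O(\sigma^2)$ short path-fragments that the deletion of $S$ leaves dangling from a single branch vertex or isolated. Given a $D$-scattered set $X$, map each $x\in X$ to a surviving branch vertex at distance at most $2r$ (calling $x$ an \emph{orphan} if there is none). Orphans lie in the $O(\sigma^2)$ fragments, each of diameter $<2D$ and hence containing at most one element of $X$; the map is injective on non-orphans (two with a common image would be at distance $\le 4r<2D$); and at most one non-orphan maps into $B$ (two would be at distance $\le 2r+(2r+1)+2r<2D$), so the other non-orphans map injectively into the at most $2\sigma$ surviving branch vertices outside $B$. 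Thus $|X|=O(\sigma^2)$.

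Combining the two ingredients finishes the argument. Put $D:=3r+1$, $\sigma:=s(D)$, $C:=C(\sigma)$ and $m:=C+1$. If $\cl$ were quasi-wide with margin $s$, there would be $N$ such that every $G\in\cl$ with $|V(G)|\ge N$ admits $S$ with $|S|\le s(D)=\sigma$ such that $G-S$ contains a $D$-scattered set of size $m$. Applying this to a clean obstruction $K_\ell^{(j)}\in\cl$ with $\ell\ge N$ contradicts the claim of the previous paragraph. Hence $\cl$ is not quasi-wide with margin $s$; since $s$ was arbitrary, $\cl$ is not quasi-wide with any margin, which is the contrapositive of the statement.

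The step I expect to be the real obstacle is the extraction of a clean induced obstruction in the merely hereditary (as opposed to monotone) setting. For monotone classes it is immediate: $F_k$ itself lies in $\cl$, and one application of Ramsey to homogenise the subdivision lengths already yields $K_\ell^{(j)}\in\cl$. For hereditary classes, however, one cannot delete the chords that $G_k$ adds on top of $F_k$, and a topologically somewhere dense hereditary class need not contain \emph{any} clean subdivided clique as an induced subgraph — for instance the hereditary closure of the family of half-graphs is topologically somewhere dense yet $2K_2$-free, so it contains no $K_t^{(j)}$ with $t\ge 3$, $j\ge 1$ as an induced subgraph. One must therefore isolate a finite list of ``clean'' induced obstructions (large cliques, subdivided cliques, half-graph-like structures), prove by iterated (hypergraph) Ramsey arguments that one of them occurs in arbitrarily large size inside $\cl$, and run a bounded-core argument for each type. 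This is precisely the technical content of Section~8 of~\cite{sparsity}, which is why we prefer to invoke it rather than reproduce it here.
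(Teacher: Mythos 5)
The paper offers no proof of this lemma at all: it is imported wholesale from Lemma~8.9 and Theorem~8.2 of~\cite{sparsity}, accompanied only by the one-line remark that ``a sufficiently large clique with each edge subdivided at most $2d$ times is a counterexample for quasi-wideness.'' Your proposal correctly fleshes out exactly that remark --- your bounded-core computation showing that $K_\ell^{(j)}-S$ admits no large $(3r+1)$-scattered set is sound --- and you correctly isolate the genuinely hard step, namely extracting a clean \emph{induced} obstruction from a merely hereditary class (your half-graph example of why this is nontrivial is apt), which you defer to the same Section~8 of~\cite{sparsity} that the paper itself invokes for the entire statement. Your treatment is therefore consistent with, and strictly more detailed than, the paper's.
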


Note that Lemma~\ref{lem:finite-qw} provides the easy implication of the equivalence between quasi-wideness and nowhere denseness. Its proof essentially boils down to observing that a sufficiently large clique with each edge subdivided at most $2d$ times is a counterexample for quasi-wideness. For the harder implication --- from nowhere denseness to quasi-wideness --- in the next section we provide a proof for the limit variants. By combining this with Lemmas~\ref{lem:dense} and~\ref{lem:quasi-los}, we can infer the same implication for the standard variants.

We note that the assumption that $\cl$ is hereditary is necessary in Lemma~\ref{lem:finite-qw}. This is not the case for the implications starting with the limit variants of quasi-wideness, because we explicitly defined $\cls$ to be closed under taking subgraphs.

\subsection{Quasi-wideness and nowhere denseness}

We now link the limit variants of quasi-wideness and nowhere denseness. First we prove the easier direction.

\begin{lemma}\label{lem:qw-tnd}
If $\cl$ is limit quasi-wide, then $\cl$ is limit topologically nowhere dense.
\end{lemma}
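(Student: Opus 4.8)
The statement to prove is: if $\cl$ is limit quasi-wide (LQW), then $\cl$ is limit topologically nowhere dense, i.e. $K_\omega \notin \cls\topnab\omega$. I would prove the contrapositive: assuming $K_\omega \in \cls\topnab\omega$, I exhibit an infinite graph $G \in \cls$ and a $d \in \Nat$ witnessing the failure of LQW — that is, such that for every finite $S \subseteq V(G)$, the graph $G - S$ contains no infinite $d$-scattered set.

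**The candidate counterexample.** The natural candidate comes from subdividing a clique. Suppose $K_\omega \in \cls\topnab d$, so there is a graph $G \in \cls$ and a $d$-shallow topological minor model $\widetilde\alpha$ of $K_\omega$ in $G$. Let $H$ be the subgraph of $G$ consisting of the images $\widetilde\alpha(p)$ for $p \in V(K_\omega)$ (the "principal" vertices, call this set $P$) together with all the internally disjoint connecting paths $\widetilde\alpha(pq)$, each of length at most $2d+1$. Then $H \in \cls$ (as a subgraph of $G \in \cls$, using that $\cls$ is closed under subgraphs). I claim $H$ witnesses the failure of LQW for the radius parameter $d' := d+1$ or so. The point is that in $H$, any two principal vertices $\widetilde\alpha(p), \widetilde\alpha(q)$ are joined by a path of length $\le 2d+1$, hence are at distance $\le 2d+1 \le 2d'$, so $P$ is a set where all pairwise distances are bounded. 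Removing a finite set $S$ from $H$ can only destroy finitely many of the connecting paths incident to any given principal vertex... but actually removing $S$ can kill many paths at once — I need to argue more carefully.

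**Making the scattered-set obstruction work.** Here is the key combinatorial point and the step I expect to be the main obstacle. After deleting a finite $S$, I want to show $H - S$ has no infinite $d'$-scattered set. Take any infinite set $X \subseteq V(H - S)$. Each vertex of $H$ lies either in some $\widetilde\alpha(p)$ (but principal vertices form a set of pairwise bounded distance, and each $\widetilde\alpha(p)$ is a single vertex in the topological model) or on some connecting path $\widetilde\alpha(pq)$; since each path has $\le 2d$ internal vertices, and each internal vertex is at distance $\le d$ from at least one of its two principal endpoints, every vertex of $H$ is at distance $\le d$ from some principal vertex. So in $H - S$, every vertex of $X$ is at distance $\le d$ in $H$ from some principal vertex; I need this to survive in $H - S$. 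This is where heredity of the obstruction fails in the naive way — deleting $S$ can disconnect internal vertices from their principal endpoints. The fix: since $X$ is infinite but there are... hmm, there are infinitely many principal vertices too. The real argument must use that $S$ is \emph{finite}: $S$ meets only finitely many of the connecting paths $\widetilde\alpha(pq)$, and it contains only finitely many principal vertices. So after removing $S$, all but finitely many principal vertices remain, and for any two surviving principal vertices $p, q$ with the path $\widetilde\alpha(pq)$ disjoint from $S$ (which holds for all but finitely many pairs from a surviving cofinite set — choose an infinite subset $P'$ of principal vertices all of whose pairwise connecting paths avoid $S$, possible since only finitely many paths meet $S$), the distance in $H - S$ between $p$ and $q$ is still $\le 2d+1$. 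Then any infinite $X$ contains two vertices lying within distance $d$ (in $H-S$) of two such $p, q \in P'$, giving distance $\le 2d+1 + 2d = 4d+1$ between them; so taking $d' = 2d+1$ (ensuring $2d' = 4d+2 > 4d+1$) defeats $d'$-scatteredness. I would set $d' = 2d+1$ and carefully verify: given infinite $X \subseteq V(H-S)$, infinitely many elements of $X$ are "assigned" (via the closest principal vertex within distance $d$ in $H-S$) to principal vertices in a cofinite set $P \setminus (\text{finite})$; restrict to a subset lying near $P'$; pick two, conclude they are within distance $4d+1 \le 2d'$. Hence no infinite $d'$-scattered set, contradicting LQW. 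This contrapositive argument, together with Lemma~\ref{lem:shallow-fo} and the closure of $\cls$ under subgraphs, completes the proof; the delicate part is the bookkeeping showing a finite deletion cannot make an infinite bounded-diameter-ish vertex set scattered.
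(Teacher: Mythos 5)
Your overall strategy is the same as the paper's: pass to the contrapositive, extract from $\cls$ a subgraph $H$ that is $K_\omega$ with each edge subdivided at most $2d$ times (closure of $\cls$ under subgraphs), and argue that deleting a finite $S$ cannot create an infinite scattered set. The structure is right, but your quantitative bookkeeping contains a genuine error, precisely at the step you flag as delicate. The claim that all but finitely many vertices of $X$ lie within distance $d$ \emph{in $H-S$} of a surviving principal vertex is false: take $H$ to be $K_\omega$ with every edge subdivided exactly $2d$ times and $S=\{p_0\}$ a single principal vertex. The infinitely many first subdivision vertices on the paths out of $p_0$ each have their nearest principal endpoint deleted, and their only surviving principal vertex is at distance $2d$, not $d$. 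Consequently your final bound $4d+1$ and your choice $d'=2d+1$ fail: in this example those vertices form an infinite set with pairwise distances $6d+1>2(2d+1)$ in $H-S$, i.e.\ an infinite $(2d+1)$-scattered set, so $H$ with $d'=2d+1$ does \emph{not} refute limit quasi-wideness.

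The fix is exactly what the paper does, and it also repairs a second soft spot in your argument (vertices assigned to principal vertices outside your cofinite set $P'$). Instead of insisting on surviving \emph{direct} paths between chosen principal vertices, observe that any two principal vertices $u,v$ are joined by infinitely many internally vertex-disjoint walks $u\to w\to v$ through third principal vertices $w$, each of length at most $4d+2$; a finite $S$ kills only finitely many of these, so \emph{all} surviving principal vertices lie in one component $C_0$ of $H-S$ at pairwise distance at most $4d+2$. Every other vertex of $C_0$ reaches some surviving principal vertex along its subdivision path at distance at most $2d$, so $C_0$ has diameter at most $8d+2$; the remaining components are finitely many and each of size at most $2d-1$. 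Hence no infinite $d'$-scattered set exists for any $d'$ with $2d'>8d+2$, e.g.\ $d'=4d+2$. With that correction your proof is complete and coincides with the paper's.
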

\begin{proof}
  To prove the contrapositive, suppose that for some $d\in \Nat$ it holds that $K_\omega\in \cls\topnab d$. We show that $\cl$ is not quasi-wide.
  
  Since $\cls$ is closed under taking subgraphs by definition, there is a graph $H\in \cls$ that can be obtained from $K_\omega$ by subdividing each edge at most $2d$ times. Let $A$ be the set of vertices of infinite degree in $H$. Let $S\subseteq V(H)$ be any finite subset of vertices. Observe that every two vertices of $A$ can be connected by an infinite family of internally vertex-disjoint paths in $H$ of length at most $4d+2$. As $S$ is finite, it follows that all the vertices of $A\setminus S$ belong to the same connected component $C_0$ of $H\setminus S$, and moreover they are pairwise at distance at most $4d+2$ within $C_0$. Since each path connecting a pair of vertices from $A$ has length at most $2d+1$, we infer that the diameter of $C_0$ is at most $8d+2$. Each connected component of $H\setminus S$ apart from $C_0$ is contained in the internal vertices of some path connecting a pair of vertices of $A$, and hence has size at most $2d-1$. Moreover, there can be at most $\binom{|S|}{2}$ such components. We conclude that $H\setminus S$ has a finite number of connected components, each of diameter at most $8d+2$. As $S$ was chosen arbitrarily,  $\cl$ cannot be limit quasi-wide.
\end{proof}


We proceed to the more difficult direction. The proof of this implication is inspired by the proof of the analogus statement for the standard nowhere-denseness and uniform quasi-wideness; see Section 8.3 of~\cite{sparsity} for an exposition.

We first need an auxiliary lemma.

\begin{lemma}\label{lem:aux-big}
Suppose $H$ is a bipartite graph with $A$ and $B$ being the sides of the bipartition. Suppose further that $A$ is infinite and that every vertex of $B$ has a finite degree. Then one of the following conditions holds:
\begin{itemize}
\item[(a)] There exists an infinite set $X\subseteq A$ such that $N(u)\cap N(v)=\emptyset$ for all distinct $u,v\in X$.
\item[(b)] $H$ admits $K_\omega$ as a $1$-shallow minor.
\end{itemize}
\end{lemma}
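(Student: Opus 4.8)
The statement is an infinitary dichotomy, so the natural approach is a greedy/iterative construction that, when it fails to produce the "large independent family of neighbourhoods" in case (a), instead builds a $1$-shallow minor model of $K_\omega$ as in case (b). The key structural observation is that since every vertex of $B$ has finite degree, picking any vertex $a \in A$ only "blocks" finitely many other vertices of $A$ (namely those sharing a neighbour with $a$, i.e. the vertices at distance $2$ from $a$ in $H$): indeed $N(a)$ is finite, and each $b \in N(a)$ has finite degree, so $\bigcup_{b \in N(a)} N(b)$ is finite.

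\textbf{The case split.} I would phrase the dichotomy via a greedy process for (a): maintain a set $X \subseteq A$ built so far, all of whose members have pairwise disjoint neighbourhoods, together with the finite "forbidden" set $F = \bigcup_{a \in X}\{a' \in A : N(a') \cap N(a) \neq \emptyset\}$. As long as $A \setminus F$ is nonempty we can add a new vertex and enlarge $X$; if this never gets stuck we obtain an infinite $X$ and we are in case (a). So assume it gets stuck after finitely many steps — equivalently, assume there is a \emph{finite} set $A_0 \subseteq A$ such that every vertex of $A$ shares a neighbour with some vertex of $A_0$. Let $B_0 = \bigcup_{a \in A_0} N(a)$; this is a finite subset of $B$, and by construction every vertex of $A$ has a neighbour in $B_0$. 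Since $A$ is infinite and $B_0$ is finite, by pigeonhole some single vertex $b^\star \in B_0$ has infinitely many neighbours in $A$; call this infinite set $A^\star \subseteq A$.

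\textbf{Building the $K_\omega$-minor.} Now I would use $A^\star$ and the vertex $b^\star$ to build a $1$-shallow minor model of $K_\omega$ in $H$, in the spirit of Claim~\ref{cl:greedy-model}: enumerate a sequence of finite subgraphs $H_0 \subseteq H_1 \subseteq \cdots$ exhausting $K_\omega$, each obtained from the previous by adding a vertex or an edge, and inductively maintain a model $\beta$ whose branch sets are stars of radius $1$. When a new vertex $p$ of $K_\omega$ arrives, assign to it a not-yet-used vertex $a_p \in A^\star$; its branch set starts as the single vertex $a_p$. When an edge $pq$ arrives, we need an edge of $H$ between the branch sets $\beta(p)$ and $\beta(q)$; but $a_p$ and $a_q$ need not be adjacent. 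Instead I would, at the moment $a_p$ is introduced, also reserve for it a \emph{fresh private neighbour} $b_p \in N(a_p)$ distinct from $b^\star$ and from all previously used vertices of $B$ — this is possible whenever $\deg(a_p) \geq 2$ and only finitely many $B$-vertices are used so far; to guarantee $\deg(a_p) \geq 2$ simply discard from $A^\star$ the (still leaving infinitely many) vertices of degree $1$. Add $b_p$ to the branch set of $p$. Then for an edge $pq$ we route it through... wait — $b_p$ and $b_q$ are in $B$ and not adjacent either. The cleaner design: make the branch set of $p$ equal to $\{a_p, b^\star\}$ is illegal (branch sets must be disjoint). So instead, give $p$ the branch set $\{b_p, a_p\}$ where $b_p$ is its private neighbour, and note $a_p$ is adjacent to $b^\star$; to realize edge $pq$ we cannot reuse $b^\star$. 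The fix is to use a \emph{different} shared vertex for each pair, which is exactly what the finite-degree hypothesis on $B$ prevents from above but does \emph{not} prevent from below on $A$-vertices: since infinitely many $a \in A^\star$ are adjacent to $b^\star$, pick for the $n$-th vertex $p_n$ a branch set that is a path $a_{p_n} - b^\star$? Still shared. I expect the correct resolution — and this is the \textbf{main obstacle}, so I would treat it carefully — is to instead build $K_\omega$ as a shallow minor with \emph{branch sets of radius $1$ centered in $A$}: let $\beta(p) = \{a_p\} \cup (\text{some subset of } N(a_p))$, and realize an edge $pq$ by choosing, when $pq$ is processed, a neighbour $b \in N(a_p) \cap N(a_q)$ — but generically there is none. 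So the honest route is to give up on a direct argument and, following the paper's own style, reduce instead to the clique-minor structure provided by Lemma~\ref{lem:branch-trees}/the density lemmas: if (a) fails we have shown $H$ contains a vertex $b^\star$ with infinite neighbourhood $A^\star$ and, after stripping degree-$1$ vertices, every $a \in A^\star$ has a second neighbour; these second neighbours, by finiteness of $B$-degrees and pigeonhole applied \emph{again}, yield another infinite subfamily hitting a common $b^{\star\star}$, and iterating gives an infinite "sunflower-free with common core shrinking" structure from which one reads off $K_\omega$ as a $1$-shallow minor by taking branch sets $\{a_i, b_i\}$ with $b_i$ the private $i$-th neighbour and connecting $p_i p_j$ through the common core vertex of the pair $\{i,j\}$ — which exists because at each stage the "core" vertex is shared by an infinite tail. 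I would write this last construction out as an explicit induction mirroring Claim~\ref{cl:greedy-model}, using at stage $n$ a core vertex valid for all indices $\geq n$, and assign it as an internal vertex of the branch set of $p_n$ (legal, since each core vertex is used by exactly one branch set). The verification that branch sets are disjoint, of radius $\leq 1$, and connected by edges is then routine.

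Thus the overall skeleton is: (1) run the greedy process for (a); (2) if it never stalls, done via (a); (3) if it stalls, extract a finite $B_0$ dominating all of $A$ and, by pigeonhole, an infinite $A^\star$ with common neighbour; (4) iterate the pigeonhole to organize $A^\star$ into a nested family admitting an explicit $1$-shallow $K_\omega$-minor model, concluding (b). The crux — and where I'd spend the most care — is step (4): arranging the shared neighbours so that every pair $p_i, p_j$ has an edge available between its branch sets without any branch set being reused, which forces the nested/tail structure rather than a single common core.
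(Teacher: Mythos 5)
Your proposal has a fatal error at its foundation. The hypothesis bounds degrees only on the $B$ side: vertices of $A$ may have \emph{infinite} degree (and in the application of this lemma they typically do). So your ``key structural observation'' --- that picking $a\in A$ blocks only finitely many other vertices of $A$ because ``$N(a)$ is finite'' --- is unjustified, and with it the entire greedy process for case (a) collapses: the forbidden set $F$ can be infinite, even all of $A$, after a single step. The same error recurs when you claim $B_0=\bigcup_{a\in A_0}N(a)$ is finite. Worse, your pigeonhole conclusion that some $b^\star\in B_0$ has infinitely many neighbours in $A$ directly \emph{contradicts} the hypothesis that every vertex of $B$ has finite degree, so the branch of your argument that is supposed to produce the $K_\omega$-minor starts from an impossible configuration; the later ``core vertex shared by an infinite tail'' idea would again require a $B$-vertex of infinite degree. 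Finally, you yourself acknowledge that the minor construction in step (4) is unresolved, and it is left in a state that cannot be repaired along the lines you sketch.

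The correct dichotomy is obtained not by a greedy process but by the infinite Ramsey theorem: colour each pair $\{u,v\}\subseteq A$ white if $N(u)\cap N(v)=\emptyset$ and black otherwise, and extract an infinite monochromatic $X\subseteq A$. A white $X$ gives (a). If $X$ is black, every pair in $X$ has a common neighbour, and one builds a $1$-shallow minor model of $K_\omega$ greedily with branch sets $\alpha(n)=\{u_n\}\cup\{b_{m,n}\colon m<n\}$, where $b_{m,n}$ is a common neighbour of $u_m$ and $u_n$. The finite-degree hypothesis on $B$ is used exactly here, and only here: the already-used $B$-vertices $Z_B$ form a finite set, so $N(Z_B)$ is finite, and choosing $u_n\in X\setminus(Z_A\cup N(Z_B))$ guarantees both that $u_n$ is fresh and that every common neighbour $b_{m,n}$ of $u_n$ with an earlier $u_m$ is automatically unused. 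Your proposal inverts the role of the hypothesis, applying finiteness on the wrong side of the bipartition, and no local patch fixes this.
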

\begin{proof}
Color each pair $\set{u,v}\subseteq A$ white if $N(u)\cap N(v)=\emptyset$, and black otherwise. From Ramsey's theorem it follows that there is either a white or a black infinite clique, i.e., an infinite subset $X\subseteq A$ such that either all the pairs of vertices from $X$ are white, or all of them are black. If all of them are white, then $X$ satisfies condition  (a). Hence, suppose that for all distinct $u,v\in X$, we have that $N(u)\cap N(v)\neq \emptyset$. We prove that then $H$ admits $K_\omega$ as a $1$-shallow minor.

We construct a minor model $\alpha$ of $K_\omega$ by induction, starting with the empty model. At step $n$ we define $\alpha(n)$ so that the already defined branch sets $\alpha(m)$, for $0\leq m\leq n$, form a $1$-shallow minor model of $K_{n+1}$. Moreover, we shall keep the invariant that each $\alpha(m)$ consists of exactly one vertex $u_m$ from $X$ and a finite number of vertices from $B$, all adjacent to $u_m$. Thus, it will be clear that $\alpha(m)$ has radius at most $1$.

Suppose that we have performed steps $0,1,\ldots,n-1$, and now we need to define $\alpha(n)$ so that it is adjacent to each $\alpha(m)$ for $m<n$. Let $Z=\bigcup_{m=0}^{n-1} V(\alpha(m))$, $Z_A=Z\cap A$ and $Z_B=Z\cap B$. From the invariant it follows that $|Z_A|=n$ and $Z_B$ is finite. Let $Y=Z_A\cup N(Z_B)$. Since the vertices of $B$ have finite degrees in $H$, it follows that $Y$ is finite. Set $u_n$ to be an arbitrarily chosen vertex of $X\setminus Y$; such a vertex exists because $X$ is infinite and $Y$ is finite. Take any $m$ with $0\leq m<n$. Since $u_m,u_n\in X$, there exists a vertex $b_{m,n}\in B$ that is a common neighbor of $u_m$ and $u_n$. Moreover, $b_{m,n}$ is not yet used in any branch set $\alpha(m')$ for $m'<n$, because then we would have that $b_{m,n}\in Z_B$ and $u_n\in N(Z_B)\subseteq Y$. Therefore, we can set $\alpha(n)=\set{u_n}\cup \set{b_{m,n}\colon 0\leq m<n}$, which satisfies the required invariant.
\end{proof}

\begin{lemma}\label{lem:big-proof}
If $\cl$ is limit nowhere dense, then $\cl$ is strongly uniformly limit quasi-wide with some margin $s$.
\end{lemma}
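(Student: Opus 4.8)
The plan is to argue by induction on $d$, using Lemma~\ref{lem:aux-big} (an infinite Ramsey argument) as the main engine and letting the finite/infinite dichotomy play the role of the numerical bookkeeping in the discrete proof of Section~8.3 of~\cite{sparsity}. The margin $s$ is built recursively; the essential point is that at each step the number of vertices we delete depends only on $d$ (and on $\cl$), not on the particular $G$ or $W$. Throughout we use the hypothesis together with Lemma~\ref{lem:minor-of-my-minor}: any shallow minor, at any depth, of a graph in $\cls$ still lies in $\cls\nab\omega$, which does not contain $K_\omega$; hence no such graph admits $K_\omega$ as a shallow minor at any depth. We carry the marked set $W$ through the induction, so the uniform (``with $W$'') version is obtained at no extra cost, and we never delete a vertex of $W$ --- we only pass to infinite subsets of it --- which is why the resulting notion is the uniform one.

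For $d=0$ take $s(0)=0$: any infinite $W$ is $0$-scattered. For the inductive step, assume the statement for $d$ with margin $s(d)$, fix $G\in\cls$ and an infinite $W\subseteq V(G)$, and apply the inductive hypothesis to obtain $S_1$ with $|S_1|\le s(d)$ and an infinite $W_1\subseteq W$ that is $d$-scattered in $G_1:=G-S_1$; thus the balls $N^d_{G_1}[u]$, $u\in W_1$, are pairwise disjoint. It remains to delete a set $S_2$ of size bounded by some $q(d)$ and to thin $W_1$ so that it becomes an infinite $(d+1)$-scattered set in $G_1-S_2$. Call a vertex $c$ of $G_1$ a \emph{connector} if it is within distance $d+1$ of at least two vertices of $W_1$; disjointness of the radius-$d$ balls forces $c$ to be at distance exactly $d+1$ from each such vertex, so $c$ is the centre of a ``spider'' whose legs are internally disjoint length-$(d+1)$ paths ending in distinct vertices of $W_1$. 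Since any path of length $\le 2d+2$ between two vertices of $W_1$ passes through a connector, an infinite subset of $W_1$ is $(d+1)$-scattered in $G_1-S_2$ precisely when it is independent in the ``closeness graph'' on $W_1$, joining two vertices when their distance in $G_1-S_2$ is at most $2d+2$.

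The core of the argument is the assertion that one can delete at most $q(d)$ vertices of $G_1$ and then pass to an infinite subset $W_1'\subseteq W_1$ whose closeness graph is locally finite; a locally finite graph on an infinite vertex set always contains an infinite independent set (greedily), which is then the desired $(d+1)$-scattered set. To prove this assertion I would distinguish, for the connectors near a fixed vertex of $W_1$ and for the ``reach'' of connectors into $W_1$, whether the relevant incidence structures are finite or infinite, and feed each bad (infinite) case into Lemma~\ref{lem:aux-big} applied to the corresponding bipartite incidence graph between connectors and a suitably thinned copy of $W_1$ (the thinning ensures that the $W_1$-side of this bipartite graph has finite degrees, as the lemma requires). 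The first alternative of Lemma~\ref{lem:aux-big} yields an infinite family of connectors with pairwise disjoint reach into $W_1$, and then picking one leaf per spider produces the $(d+1)$-scattered set immediately; the second alternative yields $K_\omega$ as a $1$-shallow minor of the incidence graph, which unravels into a bounded-depth shallow minor $K_\omega$ of $G_1$ --- contradicting limit nowhere denseness. The deletions $S_2$ are exactly the ``hub'' vertices through which too many spider legs or short connecting paths would otherwise have to pass, and that their number is bounded by a function $q(d)$ of $d$ and $\cl$ is itself forced by limit nowhere denseness: an unbounded supply of such hubs would assemble into a bounded-depth $K_\omega$ minor. Setting $s(d+1):=s(d)+q(d)$ completes the induction.

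I expect the main obstacle to be precisely this unraveling step: Lemma~\ref{lem:aux-big} delivers only a minor of an auxiliary bipartite graph, and converting ``pairwise short connections'' into pairwise vertex-disjoint branch sets (so that one really gets a minor of $G_1$, not merely a collection of pairwise-close vertices) requires routing the spider legs and short paths so that they do not collide. This is exactly where the bounded hub-set $S_2$ is produced --- one removes the boundedly many vertices through which too many paths would be forced --- and where one exploits that branch sets may be taken finite (Lemma~\ref{lem:branch-trees}) and that, by the inductive hypothesis, the radius-$d$ balls around $W_1$ are disjoint, so the legs of distinct spiders live in essentially disjoint regions of $G_1$. This mirrors the combinatorics of Section~8.3 of~\cite{sparsity}, with the infinite Ramsey theorem packaged in Lemma~\ref{lem:aux-big} replacing the finite one. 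A final routine check is that replacing $G$ by $G_1=G-S_1$ introduces no circularity: the induction is on $d$ alone, and $G_1$ serves only as the ambient graph, never as a fresh input to the hypothesis at the same value of $d$.
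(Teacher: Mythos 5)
Your overall strategy --- induction on $d$, an infinite-Ramsey step packaged in Lemma~\ref{lem:aux-big} applied to a bipartite incidence structure between ``connectors'' and the scattered set, and removal of a bounded set of high-degree ``hubs'' --- is essentially the paper's proof. However, there is a genuine gap at exactly the point that separates \emph{strong} uniform limit quasi-wideness (margin bounded by a function of $d$ alone) from the plain uniform version. You justify the bound $q(d)$ on the number of deleted hubs by asserting that ``an unbounded supply of such hubs would assemble into a bounded-depth $K_\omega$ minor.'' That argument only shows that, for each \emph{fixed} $G$ and $W$, the hub-removal procedure terminates after finitely many steps: infinitely many hubs in a single graph do yield $K_\omega$ as a shallow minor. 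It does not yield a bound depending only on $d$ --- a priori the number of hubs could grow without bound as $G$ and $W$ range over $\cls$ and its infinite vertex subsets, and then no single graph exhibits $K_\omega$. The paper closes this by first invoking the stronger consequence of limit nowhere denseness from Lemma~\ref{lem:dense}: for every depth $d'$ there is a \emph{finite} $m$ with $K_m\notin\cls\nab d'$, giving a function $\kappa$; then already $\kappa(3d+1)+1$ hubs produce $K_{\kappa(3d+1)+1}$ as a $1$-shallow minor of the auxiliary bipartite graph, hence by Lemma~\ref{lem:minor-of-my-minor} a $(3d+1)$-shallow minor of $G$, a contradiction. Without that step your argument proves only uniform limit quasi-wideness, not the strong version; the paper explicitly remarks on this distinction after its proof.

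Two smaller points. Your claim that a connector is at distance exactly $d+1$ from each of its nearby $W_1$-vertices is false: it can lie inside one of the radius-$d$ balls (e.g., the vertex at position $d$ on a path of length $2d+1$ between two vertices of $W_1$, which $d$-scatteredness permits). This is what makes your incidence graph awkward to set up. The paper sidesteps it with a two-stage construction: first a separate Ramsey step on pairs of balls eliminates direct edges between the balls $N^d[u]$, and only then is the bipartite graph built, with its $B$-side consisting solely of vertices \emph{outside} all balls, joined to $u$ when they have a neighbour in $N^d[u]$. With that setup the auxiliary graph is manifestly a $d$-shallow minor of $G$ (contract each ball to a branch set, keep $B$-vertices as singletons), so the ``unraveling'' you flag as the main obstacle is immediate from composing depths via Lemma~\ref{lem:minor-of-my-minor}, with no routing of spider legs required.
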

\begin{proof}
By Lemma~\ref{lem:dense}, the assumption is equivalent to the statement that $\gr\nsubseteq \cls\nab d$ for every $d\in \Nat$. Let then $\kappa\colon \Nat\to \Nat$ be a function such that $K_{\kappa(d)+1}\notin \cls \nab d$, for all $d\in \Nat$.

Take any graph $G\in \cls$ and suppose $W\subseteq V(G)$ is infinite. We inductively define sets $\emptyset=S_0\subseteq S_1\subseteq S_2\subseteq \ldots$ and $W=X_0\supseteq X_1\supseteq X_2\supseteq \ldots$ with the following properties:
\begin{itemize}
\item $|S_{d+1}\setminus S_{d}|\leq \kappa(3d+1)$, for all $d\in \Nat$.
\item Each $X_d$ is infinite, disjoint with $S_d$, and $d$-scattered in $G\setminus S_d$.
\end{itemize}
As $G$ and $W$ are chosen arbitrarily, the existence of such sets $(S_d)_{d\in \Nat}$ and $(X_d)_{d\in \Nat}$ shows that $\cl$ is strongly uniformly limit quasi-wide with margin $s(d)=\sum_{i=0}^{d-1} \kappa(3i+1)$. For the base of the induction, we can take $S_0=\emptyset$ and $X_0=W$.

Take $d\in \Nat$ and suppose that $S_{d}$ and $X_{d}$ are defined. Let $G'=G\setminus S_{d}$; then $X_{d}$ is $d$-scattered in $G'$. For $u\in X_{d}$, consider the ball $N_u=N^d_{G'}[u]$. Then the balls $\set{N_u}_{u\in X_{d}}$ are pairwise disjoint. For every pair of distinct vertices $\set{u,v}\subseteq X_{d}$, color this pair black if $N_u$ and $N_v$ are connected by an edge, and white otherwise. Since $X_{d}$ is infinite, from Ramsey's theorem it follows that there is either a white or a black infinite clique, i.e., an infinite subset $X'\subseteq X_{d}$ such that either all the pairs of distinct vertices of $X'$ are colored white, or all of them are colored black. If they were all black, then mapping $u\to G[N_u]$ for $u\in X'$ would be a $d$-shallow minor model of an infinite clique in $G$, a contradiction with the assumption that $\cl$ is limit nowhere dense. Hence, for every pair of distinct vertices $u,v\in X'$, there is no edge between the balls $N_u$ and $N_v$.

Let us construct a bipartite graph $H$ as follows: Take $A=X'$ and $B=V(G')\setminus \bigcup_{u\in X'} N_u$ to be the sides of the bipartition. For $u\in A$ and $v\in B$, put $uv\in E(H)$ iff $v$ has a neighbor in $N_u$. As each ball $N_u$ for $u\in X'$ has radius at most $d$, it follows that $H$ is a $d$-shallow minor of $G$.

Consider the following procedure performed in $H$. Start with $Y=A$ and $R=\emptyset$, and as long as there exists a vertex $w\in B\setminus R$ that has an infinite number of neighbors in $Y$, add $w$ to $R$ and replace $Y$ with $Y\cap N_H(w)$. Thus, $Y$ stays infinite throughout the whole procedure. We claim that the procedure stops after at most $\kappa(3d+1)$ steps. Indeed, otherwise at step $\kappa(3d+1)+1$ we would construct a set $R\subseteq B$ with $|R|=\kappa(3d+1)+1$ and an infinite set $Y\subseteq A$ such that $H[R\cup Y]$ is a biclique. In this biclique, we can find $K_{\kappa(3d+1)+1}$ as a $1$-shallow minor. Since $H$ itself is a $d$-shallow minor of $G$, from Lemma~\ref{lem:minor-of-my-minor} we would obtain that $K_{\kappa(3d+1)+1}$ is a $(3d+1)$-shallow minor of $G$, a contradiction with the definition of $\kappa(3d+1)$. 

Therefore, at the end of the procedure we obtain a set $R\subseteq B$ with $|R|\leq \kappa(3d+1)$ and an infinite set $Y\subseteq A$ such that each vertex of $B\setminus R$ has only a finite number of neighbors in $Y$. Consider $H'=H[Y\cup (B\setminus R)]$. Then $H'$ satisfies the prerequisites of Lemma~\ref{lem:aux-big}. Moreover, $H'$ cannot have $K_\omega$ as a $1$-shallow minor, because then, by Lemma~\ref{lem:minor-of-my-minor}, $G$ would admit $K_\omega$ as a $(3d+1)$-shallow minor, a contradiction with $\cl$ being limit nowhere dense. 

Hence, from Lemma~\ref{lem:aux-big} we infer that there is an infinite set $Z\subseteq Y$ such that vertices of $Z$ pairwise do not have common neighbors in $H'$. As $Y\subseteq X'$, this means that in $G'\setminus R$ the balls of radius $d$ around vertices of $Z$ are pairwise disjoint, pairwise non-adjacent, and they pairwise do not have common neighbors. We infer that in $G'\setminus R$ the balls of radius $d+1$ around vertices of $Z$ are pairwise disjoint; equivalently, $Z$ is $(d+1)$-scattered in $G'\setminus R$. Hence we can take $X_{d+1}=Z$ and $S_{d+1}=S_d\cup R$. This concludes the induction step.
\end{proof}

We remark that if one is only interested in the implication from limit nowhere denseness to uniform limit quasi-wideness (i.e., without the strongness condition), then there is no need of using the slightly stronger assumption provided by Lemma~\ref{lem:dense} --- the assumption $K_\omega\notin \cls\nab \omega$ suffices. Namely, in this setting, when constructing $R$ we only need to argue that the procedure stops after a finite number of steps. If the procedure could be performed indefinitely, then it is easy to expose a $1$-shallow minor of $K_\omega$ in $H$.

\section{Splitter game}\label{sec:game}
In this section we define and study the limit variant of the splitter game, introduced by Grohe et al.~\cite{GroheKS14} in the context of proving fixed-parameter tractability of model checking first order logic on nowhere dense graph classes.

\subsection{Definitions}

We first recall the definition of the splitter game due to Grohe et al.~\cite{GroheKS14}. Fix some graph $G$ and $\ell,m,d\in \Nat$. The {\em{$(\ell,m,d)$-splitter game}} on $G$ is a game played by two players: connector and splitter. The game proceeds in $\ell$ rounds. At the beginning of round $i+1$, the game is played on a subgraph $G_i$ of $G$ that we shall call the {\em{arena}}; the game starts with $G_0=G$. In round $i+1$, connector first picks an arbitrary vertex $v_{i+1}\in V(G_i)$ of the arena. Then, splitter chooses a subset $W_{i+1}\subseteq V(G_i)$ of size at most $m$. The arena gets trimmed to $G_{i+1}=G_i[N^d_{G_i}[v_{i+1}]\setminus W_{i+1}]$; that is, in the next round we consider only vertices that are at distance at most $d$ in $G_i$ from the vertex picked by connector, and moreover the vertices $W_{i+1}$ chosen by splitter are removed. Splitter wins the game if within $\ell$ rounds the arena becomes empty, i.e., $G_{i+1}$ is an empty graph for some $i\leq \ell$, and connector wins otherwise. 

A {\em{winning strategy}} for connector is simply a function $\sigma$ that maps sequences of the form $(v_1,W_1,v_2,W_2,\ldots,v_i,W_i)$ to vertices of $G$ such that (i) $\sigma(v_1,W_1,v_2,W_2,\ldots,v_i,W_i)$ is a valid move $v_{i+1}$ for connector; and (ii) by playing $v_{i+1}=\sigma(v_1,W_1,v_2,W_2,\ldots,v_i,W_i)$ in each round, where $v_1,v_2,\ldots,v_i$ and $W_1,W_2,\ldots,W_i$ are previous moves of connector and splitter, respectively, connector wins the $(\ell,m,d)$-splitter game. We similarly define winning strategies for splitter as functions $\tau$ the map sequences of the form $(v_1,W_1,v_2,W_2,\ldots,v_i,W_i,v_{i+1})$ to sets $W_{i+1}\subseteq V(G)$.

We will say that a class $\cl$ of finite graphs is {\em{winning for splitter}} if for every $d\in \Nat$, there exist $\ell,m\in \Nat$ such that splitter has a winning strategy in the $(\ell,m,d)$-splitter game on every graph from $G$. Grohe et al.~\cite{GroheKS14} proved that $\cl$ is winning for splitter if and only if it is nowhere dense. Our goal now is to give an alternative proof of this result by defining the limit variant of the splitter game, proving its equivalence with the standard variant by the means of Łoś's theorem, and then showing that splitter wins in the limit game if and only if the graph class is limit nowhere-dense.

Fix some graph $G$ and $d\in \Nat$. The {\em{limit $d$-splitter game}} has the same rules as the standard $(\ell,m,d)$-splitter game apart from the following differences:
\begin{itemize}
\item In splitter's moves, we require that $W_{i+1}$ is finite (instead of bounding its size by $m$).
\item Splitter wins if the arena becomes empty after a finite number of rounds (instead of after at most $\ell$ rounds).
\end{itemize}
The definitions of strategies for splitter and connector are the same as previously. We say that a class $\cl$ of finite graphs is {\em{limit winning for splitter}} if for every $d\in \Nat$, splitter has a winning strategy in the limit $d$-splitter game on every graph from $\cls$.

The following simple observation shows that in fact we can restrict ourselves to the game, where splitter always pick singleton sets $W_i$.

\begin{lemma}\label{lem:one-move}
Let $G$ be a graph. If splitter has a winning strategy in the $(\ell,m,d)$-splitter game on $G$, then she has also a winning strategy in the $(\ell\cdot m,1,d)$-splitter game on $G$. If splitter has a winning strategy in the limit $d$-splitter game on $G$, then she has also a strategy in this game where in each round she picks $W_i$ of size $1$.
\end{lemma}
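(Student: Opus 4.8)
The plan is to prove both statements by the same idea: replacing a single move of splitter that removes a set $W$ of size $m$ (or finite size) by $m$ (respectively, finitely many) consecutive rounds in which connector is forced to stay in the same local ball, while splitter peels off one vertex of $W$ at a time. First I would fix a winning strategy $\tau$ for splitter in the $(\ell,m,d)$-game on $G$ and describe a strategy $\tau'$ in the $(\ell\cdot m,1,d)$-game. The key observation is that if connector, in the $1$-removal game, picks a vertex $v$ and splitter removes just one vertex $w\in W$, then the new arena is $G_i[N^d_{G_i}[v]\setminus\{w\}]$, which \emph{contains} $G_i[N^d_{G_i}[v]\setminus W]$; so after $|W|\le m$ such rounds, splitter can ensure that the arena is a subgraph of what it would have been after one round of the original game (connector may have ``wasted'' moves by changing $v$, but then the arena only shrinks further, which is good for splitter). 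Concretely, $\tau'$ works in blocks of $m$ rounds simulating one round of $\tau$: within a block, $\tau'$ removes the vertices of the set $W$ prescribed by $\tau$ one by one, ignoring connector's choices of $v_j$ except to note that the arena is always contained in $G_i[N^d_{G_i}[v_{\text{first}}]]$ minus the removed vertices; if connector's first move in a block is $v$, splitter commits to the set $W=\tau(\ldots,v)$ that $\tau$ would play and deletes it across the block.

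The main technical point to get right is the bookkeeping: after a block of $m$ rounds of the $1$-game, the resulting arena $G'$ satisfies $G'\subseteq G_i[N^d_{G_i}[v]\setminus W]=:G_{i+1}$, where $G_{i+1}$ is the arena that would arise in the $m$-game after round $i+1$. This inclusion holds because each round of the $1$-game intersects the current arena with a ball of radius $d$ around the chosen vertex and then deletes one vertex of $W$; the ball around the first vertex $v$ already contains the later balls (each later arena is a subgraph of the first ball), and after deleting all of $W$ we are inside $G_i[N^d_{G_i}[v]]\setminus W$. Since being a subgraph is preserved by the game dynamics (a winning strategy on a graph restricts to a winning strategy on any subgraph — this is the monotonicity of the splitter game, which I would state as a small observation, or simply note that an empty arena stays empty and balls only get smaller in subgraphs), splitter can continue simulating $\tau$ on the smaller arena and still win within the original $\ell$ blocks, i.e., within $\ell\cdot m$ rounds.

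For the limit variant the argument is identical except that one must handle the fact that the sets $W_{i+1}$ played by the winning strategy in the limit game are finite but of unbounded size. Here I would split each ``super-round'' of the limit game into $|W_{i+1}|$ consecutive rounds of the singleton game; since the total number of super-rounds in a play consistent with splitter's winning strategy is finite, and each is expanded into finitely many singleton rounds, the resulting play in the singleton limit game is still finite, and the arena becomes empty. One small subtlety: in the limit game connector could in principle prolong the play indefinitely by never letting a super-round ``complete'' — but this cannot happen, because splitter controls which vertex she deletes and commits at the start of a super-round to the finite set $W$ prescribed by $\tau$; after $|W|$ singleton rounds the super-round is complete regardless of connector's moves, so the number of singleton rounds is bounded by $\sum_{i} |W_{i+1}|$ over the (finite) play of the original strategy.

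I expect the main obstacle to be making the monotonicity/inclusion statement precise without getting bogged down: one must be careful that when connector changes the chosen vertex $v_j$ in the middle of a simulated block, the arena shrinks in a way that is \emph{consistent with} — i.e., still a subgraph of — the arena the original strategy $\tau$ expects, so that $\tau$'s prescriptions remain legal moves (the removed set $W$ might contain vertices no longer present in the arena, but deleting a non-present vertex is harmless, and the set still has size $\le m$). Once the invariant ``after block $i$, the current arena of the $1$-game is a subgraph of the arena $G_i$ of the $m$-game played according to $\tau$ against the induced connector moves'' is stated and checked, the conclusion is immediate. This is genuinely routine, so in the write-up I would state the invariant, verify the inductive step in a couple of lines, and conclude.
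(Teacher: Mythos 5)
Your proposal is correct and follows exactly the same idea as the paper's (two-sentence) proof: splitter expands each move $W$ into $|W|$ consecutive singleton-removal rounds, and connector's intermediate moves are harmless because they can only shrink the arena further. The invariant you isolate (the actual arena stays a subgraph of the arena the original strategy expects) is precisely the monotonicity the paper leaves implicit, so your write-up is just a more detailed version of the same argument.
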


\begin{proof}
Both claims follow from an observation that instead of choosing a set $W$ in some round, splitter can use at most $|W|$ consecutive rounds to choose the vertices of $W$ one by one. The intermediate moves of connector can be ignored because they can only trim further the arena.
\end{proof}

Since for $\ell,m,d\in \Nat$ and a finite graph $G$, the $(\ell,m,d)$-splitter game on $G$ is finite, the fact that splitter does not have a winning strategy in this game implies that connector has one. This statement, however, is not obvious for the limit game. Even though it is not essential for our later reasonings, we state and prove it for the sake of completeness.

\begin{proposition}\label{thm:determinacy}
Let $G$ be a (possibly infinite) graph and let $d\in \Nat$. Then splitter has a winning strategy in the $d$-splitter game on $G$ if and only if connector does not have a winning strategy in this game.
\end{proposition}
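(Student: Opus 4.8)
The plan is to observe that the limit $d$-splitter game is \emph{open} from splitter's side---whenever splitter wins, the win is already witnessed by the finite prefix of the play up to the round in which the arena becomes empty---and then to carry out the usual Gale--Stewart argument in this concrete setting.

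One implication I would dispatch immediately: splitter and connector cannot both have winning strategies, since playing a purported winning strategy $\tau$ for splitter against a purported winning strategy $\sigma$ for connector yields a single, well-defined play that would have to be won by both players. So it remains to prove that if connector has \emph{no} winning strategy then splitter has one, and I would prove the contrapositive.

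Call a graph $H$ (viewed as a possible arena, with connector to move and the radius parameter $d$ fixed) a \emph{splitter position} if splitter has a winning strategy in the limit $d$-splitter game played with initial arena $H$. Unwinding the rules gives the fixpoint characterization: $H$ is a splitter position if and only if either $V(H)=\emptyset$, or for every vertex $v\in V(H)$ there is a finite set $W\subseteq V(H)$ such that $H[N^d_H[v]\setminus W]$ is again a splitter position. For the forward direction one reads off splitter's response to connector's first move together with the residual strategy; for the backward direction splitter answers connector's move $v$ with such a $W$ and then follows the winning strategy guaranteed for the resulting arena, which terminates within finitely many rounds, hence so does the whole play. This last point---that ``winning'' already means ``winning within finitely many rounds'', so that composing residual strategies needs no ordinal bookkeeping---is exactly where openness of splitter's winning condition is used, and it is the single delicate step in the argument. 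Negating the characterization: if $H$ is \emph{not} a splitter position, then $V(H)\neq\emptyset$ and there is a vertex $v\in V(H)$ such that $H[N^d_H[v]\setminus W]$ fails to be a splitter position for \emph{every} finite $W\subseteq V(H)$.

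Finally, suppose the initial arena $G$ is not a splitter position. I would build a strategy for connector---depending only on the current arena, via a fixed choice of witnessing vertices---that maintains the invariant ``the current arena is not a splitter position''. Starting from $G_0=G$, in round $i+1$ a non-splitter arena $G_i$ first guarantees $V(G_i)\neq\emptyset$ (so connector can legally move); connector plays a witnessing vertex $v_{i+1}$, and then whatever finite $W_{i+1}$ splitter picks, the new arena $G_{i+1}=G_i[N^d_{G_i}[v_{i+1}]\setminus W_{i+1}]$ is again not a splitter position, in particular nonempty, so the invariant persists and the play never terminates. Thus connector wins, i.e.\ connector \emph{does} have a winning strategy---the contrapositive of what we wanted. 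Combining the two implications gives the equivalence. The only nontrivial content is the fixpoint characterization of splitter positions, whose one subtle ingredient is the finiteness of the round count under composition of residual strategies; everything else is routine.
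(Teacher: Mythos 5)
Your proof is correct, but it takes a genuinely different route from the paper's. The paper encodes the limit $d$-splitter game as a Gale--Stewart game on the set of finite subsets of $V(G)$, observes that the set of plays won by splitter is open in the product topology (a win for splitter is witnessed by a finite prefix, namely the prefix up to connector's first invalid move), and then invokes determinacy of open (indeed Borel, via Martin's theorem) games as a black box. You instead reprove open determinacy from scratch in this concrete setting: you isolate the fixpoint characterization of ``splitter positions'' ($H$ is one iff $V(H)=\emptyset$ or every connector move $v$ admits a finite $W$ with $H[N^d_H[v]\setminus W]$ again a splitter position), correctly flag that its backward direction is exactly where openness enters (composing one splitter move with a terminating residual strategy still terminates), and then run the standard non-losing-position invariant to extract a connector strategy from the failure of $G$ to be a splitter position. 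Both arguments hinge on the same observation---openness of splitter's winning condition---but yours is self-contained and elementary (no topology, no citation), at the cost of length, while the paper's is a two-line reduction modulo the determinacy theorem. One cosmetic remark: your connector strategy is positional (a function of the current arena), whereas the paper defines strategies as functions of move histories; since the arena is determined by the history and the initial graph, this is immediately reconciled, but it is worth saying explicitly.
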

\begin{proof}
The $d$-splitter game on $G$ can be regarded as a Gale-Stewart game on $A:=P_{\text{fin}}(V(G))$, the set of finite subsets of $V(G)$, with an open winning condition. Namely, the players alternately choose consecutive entries of a sequence from $A^\omega$. Some moves are invalid for the players: for connector, it is invalid to choose a subset of cardinality different than $1$, and for both players it is invalid to choose any vertex outside the current arena. We assume that if some player performs an invalid move, then the game continues indefinitely with this player being the losing one at the end. Let $X_C$ and $X_S=A^\omega\setminus X_C$ be the sets of all sequences from $A^\omega$ that encode the plays winning for connector and splitter, respectively. Then $X_S$ is the set of all the infinite sequences where the first invalid move was performed by connector. It follows that $X_S$ is open in the product topology on $A^\omega$. Hence, the claim follows from Martin's theorem~\cite{Martin75} that states that Gale-Stewart games with a Borel winning condition are determined.
\end{proof}

\subsection{Linking the standard and limit games}

We first prove a simple auxiliary result.

\begin{lemma}\label{lem:fo-winning}
For every $\ell,m,d\in \Nat$, there exists a first-order sentence $\phi_{\ell,m,d}$ such that, for any graph $G$, $G\models \phi_{\ell,m,d}$ iff connector wins in the $(\ell,m,d)$-splitter game on $G$.
\end{lemma}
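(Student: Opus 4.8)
The plan is to construct the sentence $\phi_{\ell,m,d}$ by direct induction on the number of rounds $\ell$, expressing the existence of a winning strategy for connector as a bounded alternation of quantifiers over vertices. The key insight is that in the $(\ell,m,d)$-splitter game, connector's move is a single vertex and splitter's move is a set of at most $m$ vertices, so both can be encoded by a constant number of first-order variables; moreover, the arena after $i$ rounds is determined by the previous moves together with the predicate ``lies in the ball of radius $d$ around the last connector vertex inside the previous arena, and is not among the last splitter set'', and this ``membership in the current arena'' relation is first-order definable with parameters for all previous moves. The only subtlety is that distances are taken in the successive arenas $G_i$, not in $G$ itself, but a path of length at most $d$ in $G_i$ is simply a path of length at most $d$ in $G$ all of whose vertices lie in $G_i$; since membership in $G_i$ is itself first-order definable (by induction on $i$), the relation ``$u$ is in the arena of round $i+1$'' can be written out explicitly.

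Concretely, I would define, for each $i \le \ell$ and each choice of parameters $\bar v = (v_1,\ldots,v_i)$ (connector's past moves) and $\bar W = (W_1,\ldots,W_i)$ (splitter's past moves, each a tuple of at most $m$ vertices), a first-order formula $\mathrm{Arena}_i(x; \bar v, \bar W)$ stating that $x$ belongs to $G_i$; here $\mathrm{Arena}_0(x) := \top$ and $\mathrm{Arena}_{i+1}(x;\bar v, \bar W) := \mathrm{Arena}_i(x; \bar v', \bar W') \land \mathrm{dist}^{\le d}_{\mathrm{Arena}_i}(x, v_{i+1}) \land \bigwedge_{w \in W_{i+1}} x \ne w$, where $\bar v', \bar W'$ are the truncations to the first $i$ coordinates and $\mathrm{dist}^{\le d}_{\mathrm{Arena}_i}$ abbreviates ``there is a walk of length at most $d$ from $x$ to $v_{i+1}$ through vertices all satisfying $\mathrm{Arena}_i$''. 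Then connector wins from a position after round $i$ precisely if the arena $G_i$ is nonempty and connector can move to some $v_{i+1}$ in $G_i$ so that for every legal splitter response $W_{i+1}$ (a set of at most $m$ vertices of $G_i$) connector still wins from the resulting position; this yields a formula $\mathrm{Win}_i(\bar v, \bar W)$ defined by downward induction, with $\mathrm{Win}_\ell(\bar v, \bar W) := \exists x\, \mathrm{Arena}_\ell(x;\bar v, \bar W)$ (splitter has failed to empty the arena within $\ell$ rounds) and
\[
\mathrm{Win}_i(\bar v, \bar W) \;:=\; \Big(\exists x\, \mathrm{Arena}_i(x; \bar v, \bar W)\Big) \;\land\; \exists v_{i+1}\Big(\mathrm{Arena}_i(v_{i+1}; \bar v, \bar W) \land \forall W_{i+1}\, \big(\text{legal} \rightarrow \mathrm{Win}_{i+1}(\bar v v_{i+1}, \bar W W_{i+1})\big)\Big),
\]
where the quantifier over $W_{i+1}$ is really a block of $m$ existential quantifiers over vertices (allowing repetitions to cover sets of size less than $m$), and ``legal'' asserts each component lies in $\mathrm{Arena}_i$. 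Finally set $\phi_{\ell,m,d} := \mathrm{Win}_0$ (with empty parameter lists).

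For correctness I would verify by induction, from $i = \ell$ downward, that $G \models \mathrm{Win}_i(\bar v, \bar W)$ holds exactly when connector has a winning strategy in the remaining $\ell - i$ rounds of the game played from the position where the first $i$ connector/splitter moves were $\bar v, \bar W$; the base case $i=\ell$ records that connector wins iff splitter did not empty the arena, and the inductive step is the standard unfolding of the game tree, using that $\mathrm{Arena}_i$ correctly defines $V(G_i)$ (itself proved by a parallel induction on $i$, the crux being that distances in $G_i$ are distances in $G$ restricted to walks staying inside $V(G_i)$). Taking $i=0$ gives the statement.

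I do not expect a serious obstacle here: the game is finite-horizon with bounded branching on splitter's side encoded by a fixed number of variables, so everything stays first-order. The one point requiring a little care is making sure the ball is computed in the current arena rather than in $G$ — this is handled cleanly by the nested $\mathrm{Arena}_i$ predicates as above — and making sure that allowing repeated vertices in the block encoding $W_{i+1}$ genuinely ranges over all subsets of size at most $m$ (it does, since a set of size $k < m$ is the image of an $m$-tuple with repetitions). Everything else is routine quantifier bookkeeping.
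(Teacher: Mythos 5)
Your proposal is correct and follows essentially the same route as the paper: alternating blocks of quantifiers (existential for connector's single vertex, universal for splitter's $m$-tuple with repetitions allowed), together with inductively defined arena predicates $\mathrm{Arena}_i$ (the paper's $\psi_i$) that handle both the trimming by splitter's sets and the subtlety that distances must be measured inside the current arena. Your explicit $\mathrm{Win}_i$ recursion is just a more spelled-out presentation of the paper's single sentence with $2\ell$ quantifier blocks; the additional legality guards and nonemptiness checks are routine bookkeeping that the paper leaves implicit.
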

\begin{proof}
We simply quantify the existence of the strategy for connector. That is, $\phi_{\ell,m,d}$ has $2\ell$ alternating blocks of quantification, where connector's moves are quantified existentially as single vertices $v_i$, and splitter's moves are quantified universally as $m$ variables $w^1_i,w^2_i,\ldots,w^m_i$ (not required to be distinct). To verify that the quantified variables form a valid play of the $(\ell,m,d)$-splitter game on $G$, and at the end the arena is non-empty, we use formulas $\psi_i$ for $i=0,1,\ldots,\ell$ with the following syntax and semantics: Formula $\psi_i$ has free variables $ \{v_j\}_{j\leq i}\cup \{w^q_j\}_{q\leq m,\, j\leq i}\cup \{x\}$, and is true if and only if $x$ belongs to arena $G_i$ defined by the previous moves $\{v_j\}_{j\leq i}\cup \{w^q_j\}_{q\leq m,\, j\leq i}$. We can put $\psi_0=\top$, and $\psi_{i+1}$ can be defined inductively using $\psi_i$ as follows: $x$ belongs to $G_{i}$, $x$ is different from all $\{w^q_{i+1}\}_{q\leq m}$, and moreover there is a path in $G_i$ of length at most $d$ from $x$ to $v_{i+1}$.
\end{proof}

\begin{lemma}\label{lem:game}
Let $\cl$ be a class of finite graphs. Then $\cl$ is winning for splitter if and only if $\cl$ is limit winning for splitter.
\end{lemma}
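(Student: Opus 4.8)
The plan is to prove both directions by transferring between the finite games and the limit games using Łoś's theorem together with Lemma~\ref{lem:fo-winning} and a compactness/König's lemma argument.

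\textbf{From winning for splitter to limit winning for splitter.} Suppose $\cl$ is winning for splitter, and fix $d\in\Nat$; let $\ell,m$ be such that splitter wins the $(\ell,m,d)$-splitter game on every graph in $\cl$. By Lemma~\ref{lem:fo-winning}, connector does \emph{not} win this game on any $G\in\cl$, so $G\nmodels\phi_{\ell,m,d}$ for all $G\in\cl$; and since $G$ is finite, splitter winning $\Leftrightarrow$ connector not winning. Now take any $G^\star\in\cls$; it is a subgraph of some ultraproduct $\ultraprod G_n$ with $G_n\in\cl$. Since $G_n\nmodels\phi_{\ell,m,d}$ for all $n$, Łoś's theorem gives $\ultraprod G_n\nmodels\phi_{\ell,m,d}$, i.e.\ connector does not win the $(\ell,m,d)$-game on $\ultraprod G_n$. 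I would then argue that on a finite-$\ell$, finite-$m$ game over an infinite graph, connector-not-winning still implies splitter-winning: the $(\ell,m,d)$-game is a finite-depth game, so this is an elementary backward-induction (or one can invoke Proposition~\ref{thm:determinacy} after first reducing $m$ to $1$ via Lemma~\ref{lem:one-move}, noting $(\ell m,1,d)$ is still bounded-length). A winning strategy for splitter in the $(\ell,m,d)$-game on $\ultraprod G_n$ restricts to a winning strategy in that game on the subgraph $G^\star$ (splitter just ignores vertices outside $G^\star$, and the arena stays inside $G^\star$ since it only shrinks), and a winning strategy in the $(\ell,m,d)$-game is in particular a winning strategy in the limit $d$-splitter game. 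Hence $\cl$ is limit winning for splitter.

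\textbf{From limit winning for splitter to winning for splitter.} Suppose $\cl$ is limit winning for splitter and fix $d\in\Nat$. I would argue by contradiction: if $\cl$ is not winning for splitter, then for every pair $(\ell,m)$ there is a graph $G_{\ell,m}\in\cl$ on which splitter does not win the $(\ell,m,d)$-game, equivalently (finiteness) connector wins it, equivalently $G_{\ell,m}\models\phi_{\ell,m,d}$. Take the diagonal sequence $G_n := G_{n,n}$ and form $G^\star=\ultraprod G_n$. For any fixed $\ell,m$, we have $G_n\models\phi_{\ell,m,d}$ for all $n\geq\max(\ell,m)$, hence \uae, so by Łoś's theorem $G^\star\models\phi_{\ell,m,d}$, i.e.\ connector wins the $(\ell,m,d)$-game on $G^\star$, for every $\ell,m\in\Nat$. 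It remains to upgrade this to: connector wins the \emph{limit} $d$-splitter game on $G^\star$, contradicting that $\cl$ is limit winning for splitter (as $G^\star\in\cls$). This is the main obstacle and I expect it to require the real work: from "connector survives $\ell$ rounds against $m$-sized splitter moves, for all $\ell,m$" I need to produce a single connector strategy that survives forever against arbitrary finite splitter moves. This is essentially a compactness argument — one should view the collection of finite connector strategies (bounded length, bounded splitter move size) as forming a finitely-branching-in-spirit tree under restriction and apply König's lemma; more precisely, I would fix the limit game's play tree and show that the position tree of plays in which connector has not yet lost is well-founded-free, because any finite truncation of such a play of length $\ell$ with splitter sets of total size $\le m$ is consistent with connector winning the $(\ell,m,d)$-game. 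The care needed is that splitter's moves in the limit game are finite but unbounded, so one cannot directly apply a single $(\ell,m,d)$-strategy; instead one should argue that in any play of the limit game, by round $i$ splitter has removed some finite set, and connector's winning in $(\ell,m,d)$-games for all $\ell,m$ lets him always extend. Formally I would define connector's limit strategy via: at a position after $i$ rounds with splitter having played finite sets of total size $m$, connector plays according to a winning strategy for the $(i+1,m,d)$-game whose first $i$ moves are consistent with the history (such a strategy exists since connector wins the $(i+1,m,d)$-game on $G^\star$ and we can condition on the — already survived — history). One must check these conditioned strategies can be chosen coherently, or simply observe that it suffices that at every finite stage connector has \emph{some} legal non-losing move, which follows because otherwise splitter would win some $(\ell,m,d)$-game. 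This shows connector wins the limit game, the desired contradiction, completing the proof.

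\textbf{Summary of difficulty.} The first direction is routine: Łoś plus the observation that bounded-length games are "determined by backward induction" and strategies restrict to subgraphs. The genuine obstacle is the second direction's final step — passing from "connector wins every finite $(\ell,m,d)$-game on $G^\star$" to "connector wins the limit game on $G^\star$" — which is a König's-lemma/compactness argument complicated by the fact that splitter's allowed move sizes in the limit game are finite but unbounded; the resolution is to observe that non-losing-ness of connector at every finite stage of the limit game is itself implied by connector's victory in the finite games of matching length and (current total) splitter-budget, and that this suffices to keep the play alive forever.
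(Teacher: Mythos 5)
Your first direction is fine and is essentially the paper's argument (Ło\'s applied to $\neg\phi_{\ell,m,d}$, plus the trivial observations that a bounded-depth game is determined by quantifier duality and that splitter's strategies restrict to subgraphs). The problem is the final step of your second direction. The implication you need there --- ``connector wins the $(\ell,m,d)$-splitter game on $G^\star$ for every $\ell,m$, hence connector wins (or at least splitter does not win) the limit $d$-splitter game on $G^\star$'' --- is \emph{false} for general infinite graphs, and your proposed justification (``at every finite stage connector has some legal non-losing move, \ldots because otherwise splitter would win some $(\ell,m,d)$-game'') is exactly the false step. Counterexample with $d=1$: let $G$ be the disjoint union, over all pairs $(\ell,m)$, of finite cliques $K_{N(\ell,m)}$ with $N(\ell,m)=\ell m+\ell+1$. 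Connector wins every $(\ell,m,1)$-game on $G$ by playing inside $K_{N(\ell,m)}$, where each round the arena is a clique losing at most $m$ vertices. Yet splitter wins the limit $1$-game in one round: whatever vertex $v_1$ connector picks, its component is finite, so splitter may legally take $W_1=N^1[v_1]$ and empty the arena. Thus connector has no ``non-losing'' first move even though he wins all finite-parameter games; your compactness/K\"onig argument breaks precisely because connector's good moves for different $(\ell,m)$ need not have a common refinement (the sets $A_{\ell,m}$ of moves surviving $\ell$ rounds against size-$m$ responses form a decreasing family of nonempty sets whose intersection can be empty), and the tree you want to apply K\"onig's lemma to is infinitely branching.

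To close the gap you must use something specific about $G^\star$ beyond the truth of all the $\phi_{\ell,m,d}$. One repair: $\ultraprod G_n$ is countably saturated, each $A_{\ell,m}$ is definable with parameters from the (finite) history, and the family has the finite intersection property at every ``good'' position, so saturation yields a single connector move in $\bigcap_{\ell,m}A_{\ell,m}$ and goodness propagates; but you neither state nor use saturation. The paper instead avoids the intermediate statement altogether: it extracts connector's winning strategies $\sigma_n$ for the $(n,1,d)$-games on the $G_n$ (after first reducing splitter to singleton moves via Lemma~\ref{lem:one-move}), defines connector's limit-game move at each round as the $\sim_\ultra$-class of the coordinatewise moves $\sigma_n(\seq v_{1,n},\seq w_{1,n},\ldots)$, and verifies round by round, via Ło\'s applied to the arena formulas $\psi_i$, that this move lies in the current arena --- so the arena never empties. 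You should either adopt that construction or supply the saturation argument; as written, the second direction does not go through.
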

\begin{proof}
Suppose first that $\cl$ is winning for splitter. Fix any $d\in \Nat$. Then there exist $\ell,m\in \Nat$ such that splitter wins in the $(\ell,m,d)$-splitter game on each graph of $\cl$. We claim that splitter wins also in the $(\ell,m,d)$-splitter game on each graph of $\cls$, so in particular also in the limit $d$-splitter game on each graph of $\cls$. Indeed, take any graph $H\in \cls$; then $H$ is a subgraph of a graph of the form $G^\star = \ultraprod G_n$ for some sequence $(G_n)_{n\in \omega}$ of graphs from $\cl$. To show that splitter wins in the $(\ell,m,d)$-splitter game on $H$, it suffices to show that she wins in this game on $G^\star$, because then the same strategy trimmed to $V(H)$ will work on $H$. However, this follows immediately from Lemma~\ref{lem:fo-winning} and Łoś's theorem: we have that $G_n\models \neg \phi_{\ell,m,d}$ for all $n\in \Nat$, so by Łoś's theorem also $G^\star\models \neg \phi_{\ell,m,d}$.

Suppose now that $\cl$ is limit winning for splitter. For the sake of contradiction, suppose $\cl$ is not winning for splitter. This implies that there exist $d\in \Nat$ and a sequence of graphs $(G_n)_{n\in \Nat}$ such that, for each $n\in \Nat$, connector has a winning strategy $\sigma_n$ in the $(n,1,d)$-splitter game on $G_n$. Let $G^\star=\ultraprod G_n$. Using $(\sigma_n)_{n\in \Nat}$, we will construct a strategy $\sigma^\star$ for connector in the limit $d$-splitter game on $G^\star$ that wins with all the strategies of splitter, where splitter uses only singleton sets $W_i$. By Lemma~\ref{lem:one-move}, this will be a contradiction with $\cl$ being limit winning for splitter.

Let $(\seq v_1,\seq w_1,\seq v_2,\seq w_2,\ldots,\seq v_i,\seq w_i)$ be the sequence of connector's and splitter's moves during the first $i$ rounds of the limit $d$-splitter game on $G^{\star}$. Here, we assume that splitter always answers with singleton sets, so we denote the $i$-th move of splitter as $\set{\seq w_i}$. Moreover, we suppose that each $\seq v_j$ and $\seq w_j$ is an arbitrarily chosen representative of its equivalence class with respect to $\sim_\ultra$. We denote $\seq v_j=(\seq v_{j,n})_{n\in \Nat}$, where $\seq v_{j,n}\in V(G_n)$ for all $n\in \Nat$, and similarly for $\seq w_j$. 

For $n\in \Nat$ and $i<n$, let 
$$\seq v_{i+1,n}=\sigma_n(\seq v_{1,n},\seq w_{1,n},\seq v_{2,n},\seq w_{2,n},\ldots,\seq v_{i,n},\seq w_{i,n}).$$
For $i\geq n$, define $\seq v_{i+1,n}$ to be an arbitrary vertex of $G_n$. Let us define 
$$\seq v_{i+1}=[(\seq v_{i+1,n})_{n\in \Nat}]_{\sim_\ultra}.$$
Note that the definition of $\seq v_{i+1}$ does not depend on the particular choice of representatives $\seq v_1,\seq w_1,\seq v_2,\seq w_2,\ldots,\seq v_i,\seq w_i$. Put
$$\sigma^\star(\seq v_1,\seq w_1,\seq v_2,\seq w_2,\ldots,\seq v_i,\seq w_i)=\seq v_{i+1}.$$
It remains to prove that $\sigma^\star$ indeed is a strategy that guarantees that connector wins against any strategy of splitter that uses only singleton sets. For this, we need to show that each move $\seq v_{i+1}$ of connector belongs to the arena $G_{i}$, for all $i\in \Nat$. 

To this end, define a first-order formula $\psi_i$ for $i\in \Nat$ as in the proof of Lemma~\ref{lem:fo-winning}: $\psi_i$ has free variables $\{v_1,w_1,\ldots,v_i,w_i,x\}$ and is true in $G,v_1,w_1,\ldots,v_i,w_i,x$ if and only if $x$ is in arena $G_i$ defined by previous moves $v_1,w_1,\ldots,v_i,w_i$ of connector and splitter. We can put $\psi_0=\top$, and define $\psi_{i+1}$ using $\psi_i$ inductively as follows: $x$ belongs to arena $G_{i}$, is different from $w_{i+1}$, and there is a path in $G_i$ of length at most $d$ from $x$ to $v_{i+1}$. By the definition of $\sigma_n$, we have that 
$$G_n,\seq v_{1,n},\seq w_{1,n},\seq v_{2,n},\seq w_{2,n},\ldots,\seq v_{i,n},\seq w_{i,n},\seq v_{i+1,n}\models \psi_{i}$$
for all $0\leq i<n$. Therefore, for a fixed $i\in \Nat$ and variable $n\in \Nat$, the above holds \uae. From Łoś's theorem we thus infer that
$$G^\star,\seq v_1,\seq w_1,\seq v_2,\seq w_2,\ldots,\seq v_{i},\seq w_{i},\seq v_{i+1}\models \psi_i$$
for all $i\in \Nat$. Hence, for all $i\in \Nat$ the $(i+1)$-st move of connector is always in the arena $G_{i}$, and thus is valid.
\end{proof}

We remark that by applying the proof of Lemma~\ref{lem:game} in both directions, and then Lemma~\ref{lem:one-move}, we obtain the following corollary: If splitter wins in the limit $d$-splitter game on every graph from $\cls$, then there is a constant $\ell\in \Nat$ such that she wins in the $(\ell,1,d)$-splitter game on every graph from $\cls$. In other words, the lengths of the game can be bounded universally.

\subsection{Equivalence with nowhere-denseness and quasi-wideness}

The proofs in this section are inspired by the analogous proofs for the finite variants, due to Grohe et al.~\cite{GroheKS14}.

\begin{lemma}\label{lem:qw-gm}
If a class of finite graphs $\cl$ is uniformly limit quasi-wide, then $\cl$ is limit winning for splitter.
\end{lemma}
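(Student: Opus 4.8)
The plan is to show that uniform limit quasi-wideness gives splitter a winning strategy in the limit $d$-splitter game on every $G\in\cls$ by induction on $d$. The base case $d=0$ is immediate: the arena after one round is $G_0[N^0_{G_0}[v_1]\setminus W_1]$, so connector is forced to pick a single vertex $v_1$, splitter removes it with $W_1=\{v_1\}$, and the arena becomes empty. For the inductive step, I would fix $d\geq 1$ and assume splitter wins the limit $(d-1)$-splitter game on every graph in $\cls$ (note $\cls$ is closed under subgraphs, so every arena arising in play is again in $\cls$). Connector begins by playing some $v_1\in V(G)$. Now consider the set $W:=N^d_G[v_1]$, which is the support of the next arena before splitter's move. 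If $W$ is finite, splitter simply removes all of $W$ and wins immediately. So assume $W$ is infinite; apply uniform limit quasi-wideness with parameter $d$ to $G$ and the infinite set $W$: there is a finite $S\subseteq V(G)$ and an infinite $d$-scattered set $X\subseteq W$ in $G\setminus S$. Splitter plays $W_1:=S\cap N^d_G[v_1]$ (a finite set), so the new arena is $G_1=G[N^d_G[v_1]\setminus W_1]$, which contains the infinite $d$-scattered set $X$.

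The key point is that $d$-scatteredness in $G\setminus S$, hence in $G_1$, means the balls $N^d_{G_1}[u]$ for $u\in X$ are pairwise disjoint. So in any continuation of the game, once connector plays some vertex $v_2\in V(G_1)$, the ball $N^d_{G_1}[v_2]$ can contain at most one vertex of $X$. Since $X$ is infinite, deleting one vertex of $X$ (by, say, splitter spending an extra move, or by noting the ball simply misses all but one point of $X$) still leaves infinitely many vertices of $X$ outside the new arena — more precisely, the vertex $v_2$ lies within distance $d$ of at most one $u\in X$, so all but at most one vertex of $X$ is simply not in $N^d_{G_1}[v_2]$ and thus not in $G_2$. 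The cleanest way to package this: I would argue that the arena $G_1$ has the property that every ball of radius $d$ in $G_1$ contains at most one vertex of the infinite set $X$, and then observe that this forces any play from $G_1$ to drop $X$ down to at most one vertex after the very next connector move. After splitter additionally deletes that one remaining vertex of $X$ (one more finite move), the arena contains no vertex of $X$. But that on its own does not finish things — I need the inductive hypothesis.

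Here is the actual reduction. After connector's move $v_2$ and splitter's response, the arena $G_2$ is contained in $G_1[N^d_{G_1}[v_2]]$, a ball of radius $d$ in $G_1$; within such a ball, distances are not the issue — rather, I want to exploit that $G_2$ embeds into a graph where the relevant recursion drops. The standard trick (as in Grohe et al.) is: the arena $G_2$ sits inside $N^d_{G_1}[v_2]$, and because the $d$-balls around points of $X$ in $G_1$ are pairwise disjoint, the ``$X$-part'' of the game has genuinely shrunk. Concretely, I would instead structure the induction on $d$ so that splitter's strategy maintains, as an invariant after her move in round $k$, that the current arena $A_k$ lies in $\cls$ and carries an infinite set $X_k$ that is $(d-k+1)$-scattered in $A_k$ (roughly), reducing the scattering parameter by one each round until it hits $0$, at which point the $X$-set is an infinite independent-at-distance set and the next connector move plus one cleanup move empties the arena. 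Combined with the inductive hypothesis from Lemma~\ref{lem:limit-top-equiv}-style bookkeeping, splitter wins in finitely many rounds. The main obstacle I anticipate is exactly this invariant: getting the scattering parameter to decrease cleanly under the arena operation $A\mapsto A[N^d_A[v]\setminus W]$ while re-applying uniform limit quasi-wideness at each step to replenish the infinite scattered set, and making sure the finitely many ``cleanup'' deletions accumulate to a finite set over the bounded number of rounds. Once the invariant is set up correctly, Łoś's theorem is not needed here at all — it is a purely combinatorial argument inside $\cls$, using only that $\cls$ is subgraph-closed and that uniform limit quasi-wideness applies to every member of $\cls$ and every infinite vertex subset.
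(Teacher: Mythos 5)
Your proposal diverges from the paper's argument and, as written, has a genuine gap that you yourself flag but do not close: there is no termination argument. The paper's proof gives splitter the concrete strategy of deleting, in round $i$, the vertex $v_i$ just played together with the vertices of short connecting paths $P_{j,i}$ from $v_i$ to every earlier connector move $v_j$ (each of length at most $d$, since $v_i$ lies in the radius-$d$ ball around $v_j$ taken in the arena $G_{j-1}$). Uniform limit quasi-wideness is then used only in the analysis, by contradiction: if connector survived forever, the infinite set $M$ of his moves would contain, after deleting some finite set $S$, an infinite $d$-scattered subset $M'$; but for two sufficiently late moves $v_n,v_m\in M'$ the recorded path $P_{n,m}$ of length at most $d$ avoids $S$, contradicting scatteredness. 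Your strategy instead deletes $S\cap N^d_G[v_1]$ and keeps an infinite $d$-scattered set $X$ \emph{inside} the arena; but an arena containing an infinite scattered set is, if anything, a large arena, and nothing in your setup forces it to become empty. The observation that the next connector move eliminates all but one vertex of $X$ only shows that $X$ disappears, not that the arena shrinks.

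The proposed repair --- an induction on $d$ with the invariant that after round $k$ the arena carries an infinite $(d-k+1)$-scattered set --- does not work for two reasons. First, the invariant is not maintained: to replenish the scattered set you must re-apply uniform limit quasi-wideness to the new arena, and that yields a $d$-scattered set for whatever parameter you choose; there is no mechanism forcing the parameter to drop. Second, even if the parameter did drop to $0$, an infinite $0$-scattered set is just an infinite set of distinct vertices, which gives splitter nothing. Relatedly, the inductive hypothesis that splitter wins the limit $(d-1)$-game on every graph of $\cls$ cannot be invoked inside a play of the $d$-game, because the arenas there are trimmed by radius-$d$ balls rather than radius-$(d-1)$ balls, and you never specify how the two games are coupled. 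Your base case $d=0$ and your remark that Łoś's theorem is not needed here are both correct, but the core of the argument is missing; the path-deletion strategy, with quasi-wideness applied to the set of connector's moves to refute an infinite play, is what makes the lemma go through.
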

\begin{proof}
Fix a graph $G\in \cls$ and $d\in \Nat$. We give a winning strategy $\tau$ for splitter in the limit $d$-splitter game on $G$. Suppose $(v_1,W_1,v_2,W_2,\ldots,v_{i-1},W_{i-1},v_i)$ were the previously performed moves in the game. Since connector's move $v_i$ is within the arena $G_{i-1}$ and the arenas in the game form a chain in the subgraph order, for each $1\leq j<i$ there exists a path $P_{j,i}$ in arena $G_{j-1}$ that has endpoints $v_j$ and $v_i$, and length at most $d$. Then splitter answers with move $W_i:=(\{v_i\}\cup \bigcup_{0\leq j<i} V(P_{j,i}))\cap V(G_{i-1})$; note that thus $W_i$ is finite and contains $v_i$.

Suppose that this strategy is not winning for splitter, that is, there exists an infinite sequence $(v_1,v_2,\ldots)$ of connector's moves that are always within the arena. Since $v_i\in W_i$ for each $i=1,2,\ldots$, all $v_i$-s are pairwise different. Let $M=\{v_1,v_2,\ldots\}$; then $M$ is infinite. As $G\in \cls$ and $\cl$ is uniformly limit quasi-wide, there is a finite set $S\subseteq V(G)$ and infinite subset $M'\subseteq M\setminus S$ that is $d$-scattered in $G\setminus S$.

For $n\in \Nat$, let $R_n=\bigcup_{1\leq j<i\leq n} V(P_{j,i})$. Since $(R_n)_{n\in \Nat}$ is an increasing inclusion chain and $S$ is finite, there exists $N\in \Nat$ such that $S\cap (R_m\setminus R_N)=\emptyset$ for all $m\geq N$. Note that by the definition of the sets $W_i$ it follows that $R_N\cap V(G_N)=\emptyset$. Since $M'$ is infinite, there exist two distinct indices $m>n>N$ such that $v_{m},v_{n}\in M'$. Consider the path $P_{n,m}$. This path is entirely contained in arena $G_{m-1}\subseteq G_N$, and hence $R_N\cap V(P_{n,m})=\emptyset$. Therefore, $V(P_{n,m})\subseteq R_{m}\setminus R_N$, and hence $P_{n,m}$ does not traverse any vertex of $S$. However, $P_{n,m}$ has length at most $d$ and connects two different vertices of $M'$. This is a contradiction with $M'$ being $d$-scattered in $G\setminus S$.
\end{proof}

\begin{lemma}\label{lem:gm-tnd}
If a class of finite graphs $\cl$ is limit winning for splitter, then $\cl$ is limit topologically nowhere dense.
\end{lemma}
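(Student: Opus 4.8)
The plan is to prove the contrapositive: assuming $K_\omega\in\cls\topnab d$ for some $d\in\Nat$, I will exhibit a graph $H\in\cls$ and a constant $D$ (depending on $d$) for which connector has a winning strategy in the limit $D$-splitter game on $H$; this contradicts $\cl$ being limit winning for splitter. Since $\cls$ is closed under subgraphs, I can take $H$ to be $K_\omega$ with every edge subdivided at most $2d$ times. Write $A\subseteq V(H)$ for the set of original (``hub'') vertices, and for hubs $u,w$ let $\pi_{uw}$ be the path of $H$ coming from the subdivided edge $uw$, so that $\pi_{uw}$ has length at most $2d+1$ and the family $\{\pi_{uw}\}$ is pairwise internally disjoint. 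Two elementary facts will be used repeatedly: $H$ has diameter at most $4d+1$ (every vertex is within $d$ of some hub, and hubs are pairwise within $2d+1$), and for hubs $u,w$ the walks $\pi_{ux}$ followed by $\pi_{xw}$, ranging over hubs $x\neq u,w$, have length at most $4d+2$ and are internally disjoint for distinct $x$. I will set $D:=5d+2$.

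Connector's strategy will be: in round $i+1$, play an arbitrary vertex of $A\setminus W^{\le i}$, where $W^{\le i}:=W_1\cup\dots\cup W_i$ is the set of all vertices deleted by splitter in the first $i$ rounds (with $W^{\le 0}=\emptyset$). The crux is an invariant on the arena $G_i$, proved by induction on $i$: \textbf{(a)} $A\setminus W^{\le i}\subseteq V(G_i)$, and \textbf{(b)} $\pi_{uw}\subseteq G_i$ for all hubs $u,w\in A\setminus W^{\le i}$ with $V(\pi_{uw})\cap W^{\le i}=\emptyset$. Because each $W_j$ is finite, $W^{\le i}$ is finite at every finite stage $i$, so $A\setminus W^{\le i}$ always stays infinite; hence the invariant makes connector's prescribed move a valid vertex of $G_i$ and forces the arena to remain nonempty forever, i.e.\ connector wins the limit $D$-splitter game on $H$.

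For the inductive step I would argue that, once connector plays $v_{i+1}\in A\setminus W^{\le i}\subseteq V(G_i)$, the ball $N^D_{G_i}[v_{i+1}]$ already contains $A\setminus W^{\le i}$ together with every path $\pi_{uw}$ as in (b); granting this, $G_{i+1}=G_i[N^D_{G_i}[v_{i+1}]\setminus W_{i+1}]$ trivially inherits the invariant at level $i+1$. To bound the ball: given $u\in A\setminus W^{\le i}$, only finitely many hubs $x$ are excluded by the conditions ``$x\in W^{\le i}$'', ``$\pi_{v_{i+1}x}$ meets $W^{\le i}$'' or ``$\pi_{xu}$ meets $W^{\le i}$'' (here internal disjointness is what makes each such condition rule out only finitely many $x$), so a good $x$ exists; by (b), both $\pi_{v_{i+1}x}$ and $\pi_{xu}$ lie in $G_i$, yielding a walk in $G_i$ from $v_{i+1}$ to $u$ of length at most $4d+2\le D$. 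For an internal vertex $y$ of a path $\pi_{uw}$ as in (b), the subpath of $\pi_{uw}$ from $y$ to its nearer endpoint has length at most $d$ and lies in $G_i$ by (b); appending it to the walk just constructed (to that endpoint) gives a walk in $G_i$ from $v_{i+1}$ to $y$ of length at most $5d+2=D$. The base case $i=0$ is immediate since $G_0=H$ and $\mathrm{diam}(H)\le 4d+1\le D$, which completes the induction.

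I expect the only genuine difficulty to be precisely this ball-restriction step: a priori the operation $G_i\mapsto G_i[N^D_{G_i}[v_{i+1}]\setminus W_{i+1}]$ can slice through the connecting paths $\pi_{uw}$, so $D$ must be taken large enough and one must verify that all surviving hubs stay mutually close in $G_i$. The resolution is the same phenomenon used in the proof of Lemma~\ref{lem:qw-tnd}: since splitter removes only finitely many vertices in total up to any finite round, co-finitely many intermediate hubs still provide clean bounded-length detours between surviving hubs, so the surviving ``core'' of $H$ never escapes the ball around connector's current move. Everything else is routine bookkeeping.
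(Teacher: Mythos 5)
Your proposal is correct and follows essentially the same route as the paper: contraposition, the subdivided clique $H\in\cls$, and a connector strategy that always plays a surviving hub, which works because the finitely many vertices deleted so far can destroy only finitely many of the internally disjoint detours between any two hubs. The only difference is that you make explicit (via invariant (b) and the radius $5d+2$) the verification that the connecting paths themselves survive the ball-trimming $G_i\mapsto G_i[N^D_{G_i}[v_{i+1}]\setminus W_{i+1}]$, a point the paper's proof (which plays with radius $4d+2$) treats informally by measuring distances in $H\setminus W$ rather than in the arena.
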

\begin{proof}
The proof is by the contraposition. Suppose there is $d\in\Nat$ such that $K_\omega\in \cls\topnab d$. Since $\cls$ is closed under taking subgraphs by definition, we have that there is a graph $H\in \cls$ such that $H$ can be obtained from $K_\omega$ by subdividing each edge at most $2d$ times. We now present a strategy for connector in the limit $(4d+2)$-splitter game.

Let $A$ be the set of vertices of infinite degree in $H$. Every pair of vertices $\{u,v\}\subseteq A$ can be connected by an infinite family of internally vertex-disjoint paths of length at most $4d+2$. Hence, for every finite set $W\subseteq V(G)$, all the vertices of $A\setminus W$ belong to the same connected component of $H\setminus W$, and moreover they are pairwise at distance at most $4d+2$ in $H\setminus W$. 

The strategy for connector is as follows: in each round, pick an arbitrary vertex of $A$ that was not chosen by splitter in the previous rounds. Such a vertex always exists, because $A$ is infinite and after any finite number of rounds, only a finite number of vertices have been chosen by splitter. If connector follows this strategy, then the argument of the previous paragraph shows that all the vertices of $A$ apart from the ones chosen by splitter will be always in the arena. Hence the moves of connector will be always valid and the arena will never become empty.
\end{proof}

\section{Conclusions}\label{sec:conc}
All the implications presented in the previous sections, depicted on Figure~\ref{fig:diagram}, together prove the following result.

\begin{theorem}\label{thm:main}
Let $\cl$ be a class of finite graphs. Then the following conditions are equivalent.
\begin{itemize}
\item[(i)] $\cl$ is nowhere dense;
\item[(ii)] $\cl$ is topologically nowhere dense;
\item[(iii)] $\cl$ is limit nowhere dense;
\item[(iv)] $\cl$ is limit topologically nowhere dense;
\item[(v)] $\cl$ is strongly limit quasi-wide with some margin $s$;
\item[(vi)] $\cl$ is strongly uniformly limit quasi-wide with some margin $s$;
\item[(vii)] $\cl$ is limit quasi-wide;
\item[(viii)] $\cl$ is uniformly limit quasi-wide;
\item[(ix)] $\cl$ is winning for splitter.
\item[(x)] $\cl$ is limit winning for splitter.
\end{itemize}
Moreover, if $\cl$ is hereditary, then the following conditions are also equivalent with the ones above:
\begin{itemize}
\item[(xi)] $\cl$ is quasi-wide with some margin $s$;
\item[(xii)] $\cl$ is uniformly quasi-wide with some margin $s$;
\end{itemize}
\end{theorem}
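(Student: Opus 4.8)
The plan is to verify that the net of implications collected in Figure~\ref{fig:diagram} is strongly connected, so that every listed condition implies every other. No genuinely new argument is required for Theorem~\ref{thm:main} itself: all the substance lies in the preceding lemmas, and what remains is to trace a cycle through them and to keep track of where heredity is actually used.

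First I would fix the backbone cycle among the ten unconditional notions (i)--(x). Lemma~\ref{lem:dense} gives (i)$\Leftrightarrow$(iii), Lemma~\ref{lem:top-dense} gives (ii)$\Leftrightarrow$(iv), and Lemma~\ref{lem:limit-top-equiv} gives (iii)$\Leftrightarrow$(iv); together these make (i), (ii), (iii), (iv) pairwise equivalent. Then Lemma~\ref{lem:big-proof} yields (iii)$\Rightarrow$(vi); the trivial implications recorded after the definitions of the limit quasi-wideness notions give (vi)$\Rightarrow$(v)$\Rightarrow$(vii) and (vi)$\Rightarrow$(viii)$\Rightarrow$(vii); Lemma~\ref{lem:qw-gm} gives (viii)$\Rightarrow$(x); Lemma~\ref{lem:game} gives (ix)$\Leftrightarrow$(x); Lemma~\ref{lem:gm-tnd} gives (x)$\Rightarrow$(iv); and Lemma~\ref{lem:qw-tnd} gives (vii)$\Rightarrow$(iv). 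Chaining, for instance, (iii)$\Rightarrow$(vi)$\Rightarrow$(viii)$\Rightarrow$(x)$\Rightarrow$(iv)$\Rightarrow$(iii) closes a cycle that also passes through (v), (vii), (ix), so all of (i)--(x) are pairwise equivalent.

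For the hereditary case I would adjoin (xi) and (xii) to this equivalence class. Trivially (xii)$\Rightarrow$(xi). Lemma~\ref{lem:quasi-los}(a) gives (v)$\Rightarrow$(xi) and Lemma~\ref{lem:quasi-los}(b) gives (vi)$\Rightarrow$(xii), so both new conditions are already implied by members of the class. For the converse I would invoke Lemma~\ref{lem:finite-qw}: when $\cl$ is hereditary, quasi-wideness with some margin implies topological nowhere denseness, i.e.\ (xi)$\Rightarrow$(ii). Since (xii)$\Rightarrow$(xi)$\Rightarrow$(ii) and (ii) already lies in the equivalence class of (i)--(x), the conditions (xi) and (xii) join it, completing the hereditary part.

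The only real obstacle — or rather, the reason the argument must be routed in this particular way — is the asymmetry in linking standard and limit quasi-wideness. Lemma~\ref{lem:quasi-los} supplies only the direction from the strong limit variants to the standard ones, because the limit definitions quantify over arbitrary subgraphs of ultraproducts (and, in the uniform case, arbitrary vertex subsets), which need not themselves be ultraproducts, so Łoś's theorem cannot be applied in reverse. This is exactly why the strong variants (v) and (vi) were introduced, and why the single non-trivial classical implication of Lemma~\ref{lem:finite-qw} is unavoidable for closing the loop through (xi), (xii). Concretely, the remaining care in the write-up is just to confirm that the cited lemmas compose as claimed and that no implication outside the hereditary block covertly relies on heredity; by inspection of Figure~\ref{fig:diagram}, only Lemma~\ref{lem:finite-qw} does, which is why (xi) and (xii) are separated out.
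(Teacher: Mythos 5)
Your proposal is correct and matches the paper's own proof, which consists precisely of assembling the cited lemmas into the implication diagram of Figure~\ref{fig:diagram} and observing that the resulting directed graph is strongly connected, with Lemma~\ref{lem:finite-qw} as the sole implication requiring heredity. Your chosen cycle $(iii)\Rightarrow(vi)\Rightarrow(viii)\Rightarrow(x)\Rightarrow(iv)\Rightarrow(iii)$ together with the equivalences and trivial implications covers all twelve conditions exactly as intended.
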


Thus, the proof of Theorem~\ref{thm:main} gives also an alternative proof of the existing results of the theory of sparse graphs, namely the equivalence of 
\begin{itemize}
\item[(a)] (topological) nowhere denseness of $\cl$;
\item[(b)] (uniform) quasi-wideness of $\cl$;
\item[(c)] $\cl$ being winning for splitter.
\end{itemize}
The main advantage of the presented technique is that it simplifies the conceptually difficult part of the reasoning. One could argue that the direct proofs in the standard setting are technical translations of more natural reasonings about properties of infinite graphs. However, this translation using Łoś's theorem, while usually being conceptually straightforward, requires some technical effort and can lead to caveats, as is the case for quasi-wideness.

Another drawback of the presented methodology is that it is inherently non-constructive and can give only qualitative results. In a direct proof in the standard setting, say of the equivalence between nowhere denseness and quasi-wideness, one can usually trace the relation between the parameters of the notions involved. Proofs that use Łoś's theorem and a reasoning in the limit setting can only prove the existence of some finite parameters, and cannot be leveraged to show any bounds on their magnitudes.

As for the future work, there are multiple other, equivalent definitions of nowhere denseness that use edge density in shallow minors of $\cl$, or various parameters connected to coloring and ordering graphs from $\cl$~\cite{sparsity}. In these cases, it is most common that nowhere denseness of $\cl$ is equivalent to a bound of the form $\chi(G)\leq |V(G)|^\varepsilon$, where $\chi(\cdot)$ is a relevant graph parameter, holding for every $\eps>0$ and a sufficiently large graph $G\in \cl$. Definitions of these form can be in principle translated to the ultraproduct setting using non-standard arithmetics, but we failed to observe any conceptual gain that would emerge from such considerations. Therefore, it remains open to find a meaningful extension of the presented methodology to these quantitative definitions of nowhere denseness.

\bibliographystyle{abbrv}
\bibliography{main}

\end{document}